\def\mode{0}
\def\email#1{{\tt#1}}
\newif\ifthesis
\newtheorem{theorem}{Theorem}[chapter]
\let\oldsection\section
\let\subsubsection\subsection
\renewcommand{\subsection}{\oldsection}
\renewcommand{\section}{\chapter}
\newcommand{\paper}{thesis\xspace}
\newcommand{\Section}{Chapter\xspace}
\newcommand{\ssection}{chapter\xspace}
\newcommand{\Subsection}{Section\xspace}
\newcommand{\Subsections}{Sections\xspace}
\newcommand{\ssubsection}{section\xspace}
\newcommand{\ssubsections}{sections\xspace}
\newtheorem{theorem}{Theorem}[section]
\newcommand{\paper}{paper\xspace}
\newcommand{\Section}{Section\xspace}
\newcommand{\ssection}{section\xspace}
\newcommand{\Subsection}{Section\xspace}
\newcommand{\Subsections}{Sections\xspace}
\newcommand{\ssubsection}{subsection\xspace}
\newcommand{\ssubsections}{subsections\xspace}
\newtheorem{lemma}[theorem]{Lemma}
\def\pagespace{\thispagestyle{empty}\ \clearpage}
\def\reals{{\mathbb R}}
\def\eps{{\varepsilon}}
\def\bd{{\partial}}
\newcommand{\set}[1]{\{{#1}\}}
\newif\ifcomments
\def\micha#1{\textcolor{red}{\textsc{(Micha says: }}\textsf{#1})}
\def\golan#1{\textcolor{blue}{\textsc{(Golan says: }}\textsf{#1})}
\def\danny#1{\textcolor{orange}{\textsc{(Danny says: }}\textsf{#1})}
\def\marc#1{\textcolor{magenta}{\textsc{(Marc says: }}\textsf{#1})}
\def\todo#1{\textcolor{blue}{\textbf{TODO:} #1}}
\def\golan#1{}
\def\danny#1{}
\def\micha#1{}
\def\marc#1{}
\def\st#1{}
\def\todo#1{}
\newcommand{\R}{\mathbb{R}}
\newcommand{\A}{\mathcal{A}}
\newcommand{\B}{\mathcal{B}}
\newcommand{\D}{\mathcal{D}}
\newcommand{\W}{\mathcal{W}}
\newcommand{\V}{\mathcal{V}}
\newcommand{\Hv}{H_{\vec{v}}}
\newcommand{\Gv}{G_{\vec{v}}}
\newcommand{\kv}{k_{\vec{v}}}
\newcommand{\mv}{m_{\vec{v}}}
\newcommand{\VA}{\A(\V^\bd)}
\def\SB#1{\textsubscript{#1}}
\def\SED{{\sf SED}}
\def\FVD{{\sf FVD}}
\newcommand{\aabr}[1]{\text{\sf{AABR(}}#1\text{\sf{)}}}
\newcommand{\ust}{UST}
\newcommand{\lst}{LST}
\newcommand{\saust}{SA-UST}
\newcommand{\salst}{SA-LST}
\newcommand{\salstv}[1]{{\rm{SA-LST}\SB{$|\vec{v}|$}\rm{(}#1\rm{)}}}
\newcommand{\saustv}[1]{{\rm{SA-UST}\SB{$|\vec{v}|$}\rm{(}#1\rm{)}}}
\newcommand{\salsta}[1]{{\rm{SA-LST}\SB{{\sf AABR}}\rm{(}#1\rm{)}}}
\newcommand{\sausta}[1]{{\rm{SA-UST}\SB{{\sf AABR}}\rm{(}#1\rm{)}}}
\newcommand{\salstd}[1]{{\rm{SA-LST}\SB{\SED}\rm{(}#1\rm{)}}}
\newcommand{\saustd}[1]{{\rm{SA-UST}\SB{\SED}\rm{(}#1\rm{)}}}
\DeclarePairedDelimiter\ceil{\lceil}{\rceil}
\DeclarePairedDelimiter\floor{\lfloor}{\rfloor}
\newcommand\restr[2]{{
		\left.\kern-\nulldelimiterspace 
		#1 
		\vphantom{\big|} 
		\right|_{#2} 
}}
\newcommand{\keywords}[1]{\par\addvspace\baselineskip%
	\noindent{\textbf{Keywords:}}\enspace\ignorespaces#1}
\begin{document}

\ifthesis
	\pagenumbering{roman}
	\setcounter{tocdepth}{1}
	\begin{titlepage}
\begin{center}

	\begin{figure}[t]
	\centering 
		\includegraphics[width=0.99\textwidth]{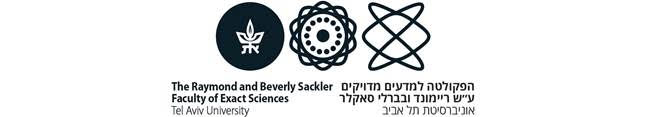}
	\end{figure}

  {\Huge{\sf Space-Aware Reconfiguration}}\\\vspace{1.4ex}

  {\large Thesis submitted in partial fulfillment of the requirements for the
  				M.Sc. degree in the \\\vspace{0.3cm}
  				Blavatnik School of Computer Science, Tel-Aviv University}\\\vspace{0.5cm}
  {\large  by}\\\vspace{0.5cm}
  {\LARGE\bf Golan Miglioli-Levy}\\\vspace{5cm}
  {\large This work has been carried out under the supervision
  				of}\\\vspace{0.2ex}
  {\large Prof. Dan Halperin and Prof. Micha Sharir}\\\vfill

  {\large September 2020}

\end{center}

\end{titlepage}

	\pagespace
	\subsection*{Acknowledgments}
First and foremost, I would like to thank my advisors:
Prof. Dan Halperin for guiding me and was always available to consult with.
This was an amazing journey that I would not have chosen to do with any other advisor.
Prof. Micha Sharir, who took on himself days as nights to propel our work and bring it to the maximum, making sure to be a remarkable example and a mentor.
His dedication and effort was an inspiration to me.
A special thanks to Marc v.Kreveld who initiated the main idea of this work and accompanied it along the way.
None of this work could not have been done without dear friends and a fun working environment, especially in the Computational Geometry and Robotics lab, which includes countless of insightful talks with Nir Goren, Michal Kleinbort, Efi Fogel and Tzvika Geft.
I would also like to emphasis the love and support I have received from my family, my friends, my roommate and my loving girlfriend while working on this thesis.
	\pagespace
	\pagespace
	\begin{abstract}
We consider the problem of \emph{reconfiguring} a set of physical objects into a desired target configuration, a typical (sub)task in robotics and automation, arising in product assembly, packaging, stocking store shelves, and more.
In this \paper we address a variant, which we call \emph{space-aware reconfiguration}, where the goal is to minimize the physical space needed for the reconfiguration, while obeying constraints on the allowable collision-free motions of the objects.
Since for given start and target configurations, reconfiguration may be impossible, we translate the entire target configuration rigidly into a location that admits a valid sequence of moves, where each object moves in turn just once, along a straight line, from its starting to its target location, so that the overall physical space required by the start, all intermediate, and target configurations for all the objects is minimized.

We investigate two variants of space-aware reconfiguration for the often examined setting of $n$ \emph{unit discs} in the plane, depending on whether the discs are distinguishable (labeled) or indistinguishable (unlabeled).
For the labeled case, we propose a representation of size $O(n^4)$ of the space of all feasible initial rigid translations, and use it to find, in $O(n^6)$ time, a shortest valid translation, or one that minimizes the enclosing disc or axis-aligned rectangle of both the start and target configurations.
For the significantly harder unlabeled case, we show that for almost every direction, there exists a translation in this direction that makes the problem feasible.
We use this to devise heuristic solutions, where we optimize the translation under stricter notions of feasibility.
We present an implementation of such a heuristic, which solves unlabeled instances with hundreds of discs in seconds.

\ifthesis
\else
\keywords{Motion planning, Disc reconfiguration, Smallest enclosing disc}
\fi
\end{abstract}

	\pagespace
	
	\hypersetup{linkcolor = black}
	\protect\hypertarget{toc}{}
	\setcounter{page}{1}
	\tableofcontents
	\hypersetup{linkcolor = blue}
	\pagebreak
	
	\pagenumbering{arabic}
	\pagespace
	\setcounter{page}{1}
	\section{Introduction}
Consider a set of $n$ objects in the plane or in three-dimensional space and two configurations of these objects, a start configuration $S$ and a target configuration $T$, where in each configuration the objects are pairwise interior disjoint.
A typical \emph{reconfiguration} problem asks to efficiently move the objects from $S$ to $T$, subject to constraints on the allowable motions, the most notable of which is that all the moves be collision free.

In the specific problem studied in this \paper, we are given $n$ unit discs in the plane and we wish to move them from some start configuration to a target configuration.
A valid move is a translation of one disc in a fixed direction from one placement to another without colliding with the other discs. 
The goal in earlier works on this problem was to minimize the number of moves, and the goal in the present study is to find an initial rigid translation of the discs of $T$, that minimizes the size of the physical space needed for the reconfiguration, under the constraint that each disc moves exactly once.
This problem, like most problems in the domain of reconfiguration, comes in (at least) two flavors: \emph{labeled} and \emph{unlabeled}.
In the labeled version, each object has a unique label, which marks its start placement and its unique target placement.
In the unlabeled version the objects are indistinguishable, and we do not care which object finally gets to any specific target placement, as long as all the target placements are occupied at the end of the process; in particular all the objects are isothetic (as are the unit discs in our study).
For the unlabeled case, without an initial shift of the target configuration, Abellanas et al.~\cite{DBLP:journals/comgeo/AbellanasBHORT06} have shown that $2n-1$ moves are always sufficient.
Dumitrescu and Jiang~\cite{in-the-plane} have shown that $\lfloor 5n/3 \rfloor - 1$ moves are sometimes necessary, and that finding the minimum number of moves is NP-Hard.
For the labeled case, Abellanas et al.~\cite{DBLP:journals/comgeo/AbellanasBHORT06} have shown that $2n$ moves are always sufficient and sometimes necessary.
These are several examples of reconfiguration problems that have been studied in discrete and computational geometry; see, e.g., \cite{lifting,sliding,graphs-and-grids,d-mp-2013}.
Varying the type of objects, the ambient space, the constraints on the motion and the optimization criteria, we get a wide range of problems, many of which are hard.

Similar problems arise in robotics.
For example, such problems arise when a robot needs to arrange products on a shelf in a store, or when a robot needs to move objects around in order to access a
specific product that needs to be picked up; see, e.g.,~\cite{DBLP:conf/rss/HanSKBY17,DBLP:conf/icra/HavurOEP14,DBLP:conf/iros/LevihnIS12}.
In robotics, these problems are often referred to as object \emph{rearrangement} problems.
In this \paper, though, we will stick to the term reconfiguration, which is also in common use.

Another prominent example from robotics and automation is the \emph{assembly planning} problem (see, e.g., \cite{DBLP:journals/algorithmica/HalperinLW00}), in which the target configuration of the objects comprises their positions in the desired product.
The goal of assembly planning is to design a sequence of motions that will bring the parts together to form the desired product, and we want this sequence to be (collision-free and) optimal according to various criteria~\cite{DBLP:conf/case/GeftTGH19,goldwasser1996complexity}.
Yet another area where variants of the reconfiguration problem arise is in motion planning for a swarm of robots, where the goal is to minimize the total execution time of parallel collision-free motions of the robots; see, e.g., the recent work~\cite{DBLP:journals/siamcomp/DemaineFKMS19}.

We address a certain criterion, which, to the best of our knowledge, has hardly been studied earlier: \emph{minimizing the physical space} needed to carry out the desired assembly or reconfiguration.
Abellanas et al.~\cite{DBLP:journals/comgeo/AbellanasBHORT06} did study a similar set of problems, in which the discs are placed inside different types of confined spaces. Their technique shows how to minimize the number of moves, given a prescribed size for a bounding rectangle of $S$ and $T$.
We adopt a different approach.
We consider $T$ a rigid configuration that can be placed anywhere in the workspace, and the goal is to find a placement for $T$ for which there exists a feasible (collision-free) sequence of moves,
where each disc moves \emph{exactly once} along the straight segment that connects its start placement and to its target placement.
The region occupied by $S$ and by $T$ in its translated location, together with the space required for the reconfiguration motion of all the objects, should be minimal according to various possible criteria.
In this \paper we consider the setup where we allow $T$ only to be translated.
We call this problem \emph{space-aware reconfiguration};
we study it in this \paper for the case of unit discs in the plane.
Rigidly translating $T$ into a different location in the plane ensures that the target objects maintain the same positional relations between them.
This is a desired property in some reconfiguration problems, such as assembly planning.
In this approach we do not care where the position of the final product is, as long as the space required for the reconfiguration is minimized.
Moreover, as we will see, the variant where only translations of $T$ are allowed is already quite difficult to solve.
Tackling the general case, where we allow an arbitrary initial rigid motion of $T$, is left as a challenging open problem.

We say that a disc is placed at a point $p$ if its center is placed at $p$.
To avoid confusion between placeholder positions for discs (start or target) and the actual discs placed at these positions, we define a \emph{valid configuration} $P$ to be a finite set of points, such that every pair of points in the set lie at distance $\geq 2$ from one another, that is, we can place a unit disc at each point of $P$, so that the discs are pairwise interior disjoint.
For any point $p$, we denote by $D_r(p)$ the open disc centered at $p$ with radius $r$.
If $r$ is not specified, then $D(p)$ is a unit disc ($r=1$).
For any valid configuration $P$, we denote $D(P) = \set{D(p) \mid p \in P}$.

Let $S$ and $T$ be two valid configurations, of $n$ points each, where $S$ represents the centers of the start positions, at which $n$ unit discs are initially placed, and $T$ represents the centers of the target positions.
We look for a sequence of $n$ moves that bring the discs from $S$ to $T$.
A move consists of a single translation of one disc from $D(S)$ to a position in $T$, such that the disc does not collide with any other (stationary) disc on its way---neither with a disc in a start position, which has not been moved yet, nor with a disc that has already been moved to a target position.
Each disc has to perform exactly one such move.
We call such a sequence of moves an \emph{itinerary}.
We say that an itinerary is \emph{valid} if all of its moves are collision-free.
We denote such an Unlabeled Single Translation instance of the problem by \ust($S$,$T$), and a Labeled Single Translation instance by \lst($S$,$T$,$M$), where $M$ is the matching between $S$ and $T$ induced by the labels; that is, each position in $S$ is matched by $M$ to the position in $T$ with the same label.
We call an instance of the problem \emph{feasible} if it has a valid (collision-free) itinerary.

\begin{figure}
		\centering
		\includegraphics[trim=280 200 280 160, clip, width=0.3\textwidth]{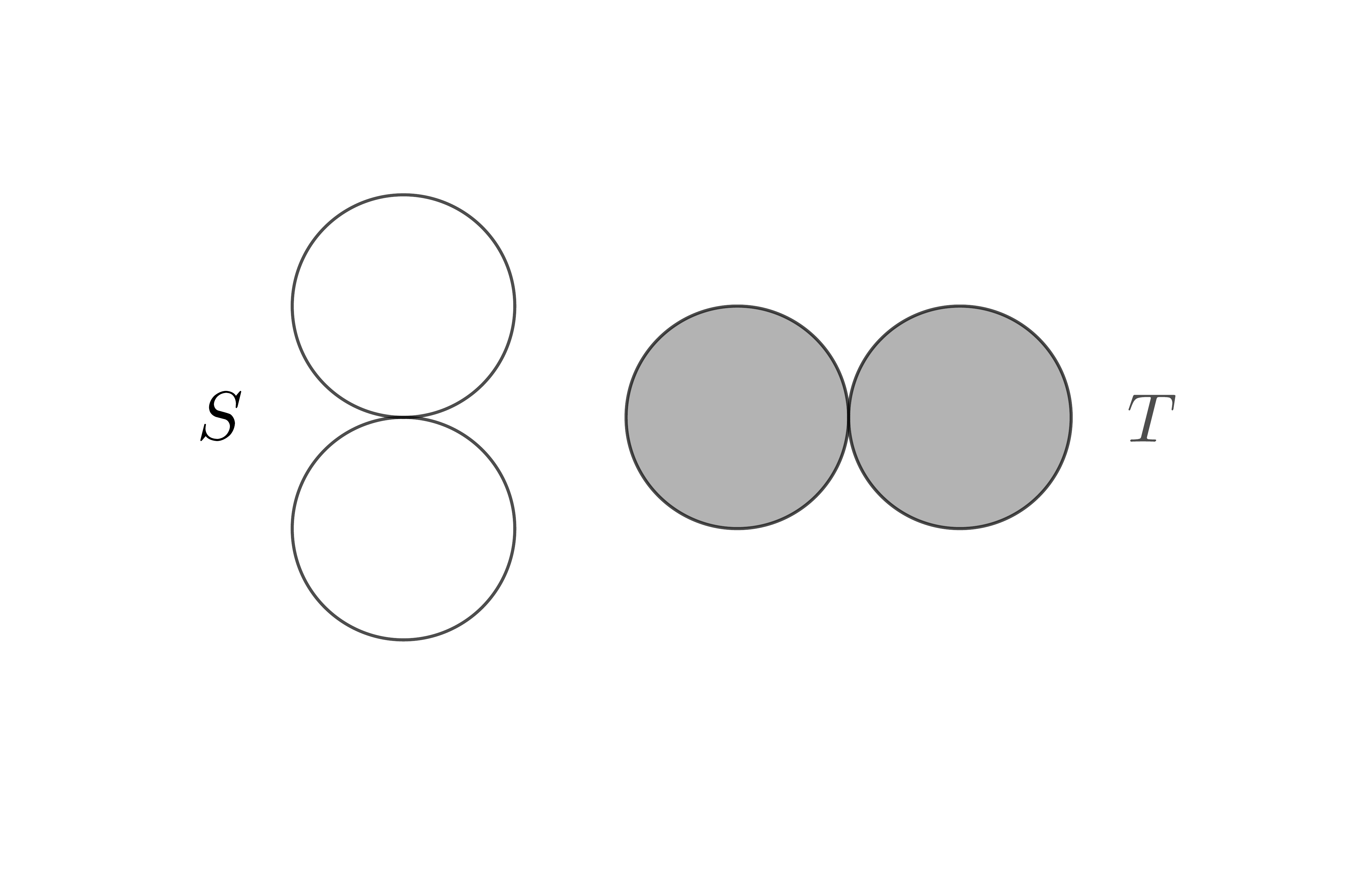}
		\caption{
			\sf An infeasible instance for the stationary unlabeled (or labeled) version.
			The discs of $D(S)$ are drawn empty while the discs of $D(T)$ are drawn shaded. 
		} \label{fig:unlabeled_infeasible}
\end{figure}

It is easy to see (consider Figure~\ref{fig:unlabeled_infeasible}) that even in the unlabeled version, the stationary version of this problem, in which $T$ cannot be translated, may not have a solution.
If the shaded discs in the figure were placed higher (so that their centers were collinear with the center of the top empty disc, say), the problem would have been feasible.
We therefore look for a vector $\vec{v}$ for which a valid itinerary exists from $S$ to $T + \vec{v}$ (i.e., $T$ translated by $\vec{v}$).%
\footnote{The translations from $T$ to $T+\vec{v}$ do not count as moves.
We often refer to it as \emph{the initial translation}.}
In the labeled case, the translated targets retain their
labels after the translation.
That is, if target position $i$ was at the point $t_i$, the point $t_i+\vec{v}$ is now the $i$th target position.
Observe that the initial location of $T$ (when $\vec{v} = (0,0)$) is now meaningless.
From now on, we assume that the input location of $T$ is placed to overlap with $S$ as much as possible, e.g., $S$ and $T$ share their centers of mass or the centers of their smallest enclosing discs.
A placement of this kind is ideal for the space-aware paradigm used in this work (although in practice it may not be valid).

In the space-aware variant studied in this \paper, we look for a translation $\vec{v}$ such that (a) $S$ and $T+\vec{v}$ admit a valid itinerary, and (b) some measure of `nearness' of $T+\vec{v}$ to $S$ is minimized. 
One typical variant of (b) is to require that some prescribed bounding shape (e.g., an axis-aligned rectangle or a disc), of $D(S)$ \emph{together} with $D(T + \vec{v})$, have minimum area.
We denote these space-aware variants as \saust($S$,$T$), and \salst($S$,$T$,$M$), for the unlabeled and labeled variants, respectively (ignoring in these notations the specific optimization criterion to be used).
We note again that our space-aware variant differs from the previous studies in that we insist on executing only $n$ moves, one move per disc.
This model raises several problems: 
The first challenge is to construct the space of \emph{valid} translations (those that have a valid itinerary), or, in the unlabeled case, to construct a sufficiently large subset thereof.
Second, we want to find a valid translation that minimizes some measure of optimality.

To the best of our knowledge, it has not been proven that deciding the existence of a valid itinerary for the unlabeled version (assuming $T$ is stationary, and allowing one move per disc) is NP-hard, but similar reconfiguration problems, such as those called OMC~\cite{DBLP:journals/comgeo/AbellanasBHORT06} and U-TRANS-RP~\cite{in-the-plane}, have been shown to be NP-hard.
Both problems seek to find a valid itinerary of translations of discs from a start configuration to a target configuration.
Specifically, in OMC (one move per coin), each of the discs is given a subset of possible final targets, and can move (as in our model) exactly once.
In U-TRANS-RP, the goal is to decide whether a valid itinerary of at most $k$ moves exists, for the unlabeled variant.
The NP-hardness of U-TRANS-RP is shown for the case where $k$ is smaller than $n$ (some discs are already at the target locations, and some may move more than once).
Although at the moment we do not know whether it is possible to reduce any of these problems to our setting, we believe that the unlabeled version of our setting is indeed NP-hard, and so minimizing some criterion of optimality for \saust($S$,$T$) is most likely even harder.

There is a large body of related research on algorithms for multi-robot motion planning and multi-agent path finding;
for recent reviews, see, e.g.,  \cite{ hks-r-18} and \cite{ DBLP:conf/socs/SternSFK0WLA0KB19} respectively.
Notice however that in contrast to these more general problems, the focus of our work here (as in \cite{DBLP:journals/comgeo/AbellanasBHORT06,in-the-plane,lifting,sliding,d-mp-2013}) is on the special case where every object is transferred by a very small number (typically one or two) of simple atomic moves (e.g., translation along a segment).
Therefore the planning techniques are of a rather different nature.

\paragraph{Contribution.}
The unlabeled case appears to be (and most likely is) much harder than the labeled case.
We therefore begin by studying the labeled case.
We present, in \Section~\ref{section:labeled}, an algorithm for the labeled case that runs in $O(n^6)$ time, for constructing the space of all valid translations.
We then show, in \Section~\ref{section:labeled_sa}, how to find a valid translation that minimizes some measure of space-aware optimality.
Specifically, we consider three such measures:
(i) minimizing the length of the translation vector $\vec{v}$, relative to some ideal placement, as discussed above;
(ii) minimizing the area of the axis-aligned bounding rectangle of $D(S) \cup D(T+\vec{v})$;
(iii) minimizing the area of the smallest enclosing disc of $D(S) \cup D(T+\vec{v})$.
We refer to the corresponding variants of the problem as \salstv{$S$,$T$,$M$}, \salsta{$S$,$T$,$M$}, and \salstd{$S$,$T$,$M$}.
All the variants that we study can be solved by algorithms that run in $O(n^6)$ time. 
(The minimization steps of the algorithms are actually faster; this bound is the cost of the first step, of constructing the space of all valid translations.)

The unlabeled case appears, as already noted, to be much harder.
We first show, in \Section~\ref{section:unlabeled}, that we can find a valid translation in almost any prescribed direction,%
\footnote{The only exceptional directions are those of common inner tangents of pairs of tangent discs. If no tangency between the discs is allowed, all directions admit a valid translation.}
if we translate $T$ sufficiently far away (see \Section~\ref{section:unlabeled} for a more precise statement).
Although this is a useful result, it suffers from two problems:
(i) It does not produce the space of all valid translations (which is likely a very difficult task).
(ii) It is contrary to our goal of achieving space-aware optimality.

We study in \Subsection~\ref{subsec:heuristics} practical heuristic techniques that aim to find shorter valid translations, at the cost of further restricting the notion of validity.
That is, the solutions that we obtain are valid, but we may be missing other, more optimal valid solutions.
Under our strictest notion of validity, we present an algorithm for finding an optimal translation for \salstv{$S$,$T$,$M$} and \salsta{$S$,$T$,$M$} in $O(n^2 \log n)$ time, which deteriorates as we consider other more relaxed and related (albeit still rather restrictive) notions of validity.

In \Subsection~\ref{subsec:sed_1d}, we consider the more involved problem of finding the optimal translation for \salstd{$S$,$T$,$M$} under the same strictest notion of validity.
We show that the problem can be solved in $O(n^2\alpha(n)\log n)$ time.

We also show, in \Subsection~\ref{subsec:diameter}, that we can always find a valid translation $\vec{v}$ for which the radius of the smallest enclosing disc of $D(S) \cup D(T + \vec{v})$ is at most $O(n)$ times the sum of the radii of the smallest enclosing discs of $D(S)$ and of $D(T)$ (clearly, the sum of the radii is asymptotically optimal).
The factor $O(n)$ reduces to a constant if every pair of discs of $D(S)$,
and every pair of discs of $D(T)$ are separated by at least some distance $\eps$ (where the above constant depends on $\eps$).

Finally, in \Subsection~\ref{subsec:implement}, we present experimental results of an implementation of the heuristic algorithm, for the most restrictive notion of validity, and show that it performs well in practice.
The algorithm solves unlabeled instances with hundreds of discs, of several different input types, in seconds.

We conclude the \paper with \Section~\ref{section:conclusion}, where we discuss our work and pose several open problems for further research.

\ifthesis
The results of the thesis are presented, in a compact form, at the 14th International Workshop on the Algorithmic Foundations of Robotics, 2020.
A full version is also available at \cite{arxiv}.
\fi

	\newpage
	\section{Related Work}
\vspace{-10pt}
Reconfiguration problems stand at the base of many algorithmic problems and have many different applications.
Abellanas et al.~\cite{DBLP:journals/comgeo/AbellanasBHORT06} consider it as a measure for the distance between various configurations, similarly to measuring the difference between two strings of text by their edit distance.
Some works~\cite{DBLP:journals/comgeo/AbellanasBHORT06,in-the-plane} consider the reconfiguration problem as a simplified version of multi-robot motion planning, such that the robots need to move, one at a time, from the start configuration to the target configuration, and there are no other obstacles but the robots themselves.
Another application of the reconfiguration problem is to move large objects in a warehouse~\cite{in-the-plane,DBLP:journals/siamcomp/DemaineFKMS19} --- one is clearly interested in minimizing the number of moves each object has to perform.
From a different perspective, reconfiguration problems (mainly on graphs) can abstract combinatorial puzzle games, for example the 15-puzzle~\cite{johnson1879notes}.

The typical workspace of reconfiguration problems is either the plane, three-dimensional space or discrete domains like (infinite) graphs and grids.
The simplest, and most widely studied, kind of moving objects are discs (coins), but other objects can be found in the literature such as segments~\cite{sliding, on-translating-a-set-of-rectangles},
rectangles or squares~\cite{moving-rectangles, on-translating-a-set-of-rectangles},
general convex objects~\cite{in-the-plane},
pseudodiscs~\cite{lifting} and chips on a graph~\cite{graphs-and-grids}.

Most papers consider three versions of the discs reconfiguration problem, similar to the versions discussed in the introduction:
\begin{description}
	\item[Unlabeled version] where the discs in both configurations are congruent (or isothetic) and indistinguishable, so each start disc can occupy any target disc,
	\item[Labeled version] where the discs are congruent but distinguishable by labels, so that each start disc has to occupy the target disc with the corresponding label, or
	\item[Arbitrary radii version] where the discs are not congruent.
	It is assumed that for each start disc there is at least one target disc of the same size.
	If there is more than one such disc, the start disc may occupy any of the targets of its size.
\end{description}
In the following \ssubsections, we survey three different models for the reconfiguration problem of discs in the plane. 
In each of the models, one must construct a valid itinerary (also called a schedule or a motion plan), i.e., a collision-free sequence of moves, moving the discs one-by-one, until every disc reaches one of its possible target locations.
The term ``move'' differs between the models.

Most of the papers focus on in the combinatorial aspects of the problem.
Thus, the main goal is to find an itinerary with minimal number of moves.
We present lower and upper bounds on the number of moves in each model.
For the lower bound, an input instance (construction) for the reconfiguration problem is provided such that every valid itinerary has to have at least some number of moves to perform the reconfiguration task.
For the upper bound, an algorithm is presented that outputs a valid itinerary with a bounded number of moves for every input instance.

We start by examining the most permissive model, the lifting model, in which each move consists of lifting a disc (in the third dimension, say), and placing it in a free position.
Then we examine the sliding model, in which the discs slide along continuous curves.
We finish with the model which is closest to our work, the translating model, in which the discs are moved along straight-line trajectories.
The results for the lower and upper bounds in each model and version are summarized in Table~\ref{tab:bounds}.

\begin{table}[h]
	\centering
	\caption{\sf Summary of the lower and upper bounds for the different models and versions.
		A similar table can be found in~\cite{d-mp-2013}.}
	\label{tab:bounds}
	\begin{tabular}{||c | c | c  | c||} 
		\hline
		Model & Version & Lower bound & Upper bound \\ [0.5ex] 
		\hline\hline
		Lifting & Unlabeled & $n +\Omega(n^{1/2})$ & $n + O(n^{2/3})$ \\ 
		& Labeled/Arbitrary radii & $\floor{5n/3}$ & $9n/5$ \\ 
		\hline
		Sliding & Unlabeled & $(1+\frac{1}{15})n -O(\sqrt{n})$ & $\frac{3n}{2}+O(\sqrt{n \log n})$ \\ 
		& Labeled & $\floor{5n/3}$ & $2n$ \\ 
		& Arbitrary radii & $2n - o(n)$ & $2n$ \\ 
		\hline
		Translating & Unlabeled & $\floor{5n/3}-1$ & $2n-1$ \\ 
		& Labeled/Arbitrary radii & $2n$ & $2n$ \\ 
		\hline
	\end{tabular}
\end{table}

Note that other models were also considered in the literature, each with its own restrictions.
In~\cite{DBLP:journals/corr/cs-DM-0204002} for example, a move is considered to be a placement of a disc in a grid, such that it has to touch at least two other discs.
In~\cite{graphs-and-grids}, a move consists of shifting one chip from its current vertex to a different vertex on a graph, such that any intermediate vertex on the path is not occupied by other chips.

Since translating a disc along a straight-line trajectory is also a move along a continuous curve, and also admits a valid lifting move, any lower bound for the lifting model is also a lower bound for the sliding model, and any lower bound for the sliding model is also a lower bound for the translating model.
Similarly, any upper bound for the translating model is an upper bound for the sliding model, and any upper bound for the sliding model is also an upper bound for the lifting model.
Moreover, any lower bound for the labeled version is a lower bound for the arbitrary radii version (by changing the radii of the congruent discs by some generic infinitesimal factors),
and any upper bound for the arbitrary radii version is an upper bound for the labeled version.

Before we discuss each of the three models, it is worth mentioning a simple algorithm, which is usually referred to as the \emph{universal algorithm}, that performs $2n$ moves in every model and version.
The main idea of the algorithm is to first move all the discs into some free area in the plane (that does not contain any target disc), and then move each disc to its target (or one of its targets if there are several thereof).
This is trivial to implement in the lifting model, and fairly easy in the sliding model, by moving the discs according to the lexicographic order of their centers in a suitable coordinate frame; see~\cite{sliding}.
Moreover, the universal algorithm also holds for convex objects (in the sliding model, rotations are necessary for the second step).
In the translating model the universal algorithm also applies, but the details are more involved, and are given in \Subsection~\ref{subsec:translating}.

If we assume that no start disc coincides with its target placement, then each disc has to move at least once.
On the other hand, the universal algorithm provides us with an upper bound of $2n$ moves.
Thus, the bounds considered for the reconfiguration problems of all the kinds mentioned above are between $n$ and $2n$ moves.

\subsection{The Lifting Model}
The lifting model was first presented by Bereg and Dumitrescu~\cite{lifting}.
Recall that in this model, each move consists of lifting one disc and placing it back on the plane, such that it does not intersect any other disc in the current configuration.
Bereg and Dumitrescu establish the following theorems for bounding the number of moves.

\subsubsection{Unlabeled Version}
\begin{theorem}
	\textup{\cite{lifting}}
	Given a pair of start and target configurations $S$ and $T$, each with $n$ congruent unlabeled discs, one can move the discs from $S$ to $T$ using $n + O(n^{2/3})$ moves in the lifting model.
	The entire itinerary can be computed in $O(n \log n)$ time.
	On the other hand, for each $n$, there exist pairs of configurations of $n$ discs each, which require $n +\Omega(n^{1/2})$ moves for this task.
\end{theorem}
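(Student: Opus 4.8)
The plan is to treat the two directions separately, reducing the upper bound to a combinatorial \emph{deadlock} analysis of the overlap structure between $D(S)$ and $D(T)$, and the lower bound to an explicit dense packing.

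For the upper bound I would first isolate the only mechanism that can force more than $n$ moves. Since $T$ is a valid configuration, two target discs never overlap, so a disc already placed at a target never obstructs another target; the sole obstruction to placing a disc at a target $t$ is an as-yet-unmoved start disc whose center lies within distance $2$ of $t$. Define the bipartite \emph{overlap graph} $G$ on $S\cup T$ by joining $s$ and $t$ when $|s-t|<2$. Because the points of $S$ (and of $T$) are pairwise at distance $\ge 2$, a packing argument shows every vertex of $G$ has degree $O(1)$, so $G$ has $O(n)$ edges and is built in $O(n\log n)$ time by grid/nearest-neighbour search. A greedy schedule then moves, at each step, some unmoved start disc onto a currently free target (one move); this is always a legal lifting move, and it fails only at a \emph{deadlock}, where every remaining target is still blocked. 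A deadlock is resolved by lifting one blocking disc far into empty space (always available in the unbounded plane and placed so as to obstruct nothing), which frees at least one target and lets the greedy process resume; that displaced disc is later placed at a target, costing exactly one extra move. Hence the total is $n+(\text{number of deadlocks})$, and the itinerary is computed in $O(n\log n)$ time.

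The crux is therefore to show that, exploiting the freedom of the unlabeled matching, the number of forced deadlocks is $O(n^{2/3})$. I would argue this geometrically: deadlocks arise only from the strongly interlocked parts of $G$, and by choosing which free target each start disc is routed to, one can arrange that the only unavoidable temporary displacements lie on the ``interface'' of dense clusters of $D(S)\cup D(T)$. Overlaying a grid whose cells are tuned to hold about $n^{1/3}$ discs and handling each cell locally, the residual cross-cell obstructions number $O(n^{2/3})$; balancing the cell count against the per-cell boundary cost is exactly what yields the exponent $2/3$. For the lower bound I would construct, for each $n$, a pair $S,T$ with essentially no free space whose obstructions interlock along a one-dimensional frontier of length $\Theta(\sqrt n)$: pack $D(S)$ into a tight $\Theta(\sqrt n)\times\Theta(\sqrt n)$ cluster and let $T$ be a small shift of the same packing, so that every target is blocked and every legal move into the cluster must first be enabled by a relocation across its boundary. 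An area/boundary count---no disc can be placed inside the occupied cluster, which shrinks only from its $\Theta(\sqrt n)$-length frontier, and each temporary relocation advances the frontier by $O(1)$---forces $\Omega(\sqrt n)$ discs to move twice, independently of the matching and schedule, giving $n+\Omega(\sqrt n)$ moves.

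The main obstacle is pinning the two exponents, and the gap between $n^{2/3}$ and $n^{1/2}$ marks precisely where the difficulty lies. For the upper bound the delicate point is the clustering/grid argument certifying that only $O(n^{2/3})$ deadlocks survive an optimal unlabeled routing; for the lower bound it is designing a packing rigid enough that no clever matching or ordering avoids $\Omega(\sqrt n)$ double moves. Everything else---the $O(1)$ degree bound on $G$, the legality of greedy placements and far-away displacements, and the $O(n\log n)$ bookkeeping---is routine.
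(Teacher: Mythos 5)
Both halves of your proposal have a genuine gap, and in each case the gap sits exactly where the content of the theorem lies.

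\emph{Upper bound.} Your reduction to counting relocations (the greedy/deadlock framing, the $O(1)$-degree overlap graph, the $O(n\log n)$ bookkeeping) is sound but routine; the theorem \emph{is} the claim that $O(n^{2/3})$ relocations suffice, and that is precisely the step you leave as a gesture. Moreover, the parameters you gesture with are backwards: cells ``tuned to hold about $n^{1/3}$ discs'' means roughly $n^{2/3}$ cells of side $\Theta(n^{1/6})$ in a dense instance, whose boundaries are crossed by $\Theta(n^{5/6})$ discs, not $O(n^{2/3})$. The correct balance, used by Bereg and Dumitrescu, is the opposite: $O(n^{1/3})$ regions of $\Theta(n^{2/3})$ discs each, obtained not by a grid but by a balanced separator lemma (any $k$ interior-disjoint unit discs can be split by a line crossing only $O(\sqrt{k})$ of them, with at most $2k/3$ centers on each side), applied recursively; then all region boundaries together meet only $O(n^{1/3})\cdot O(\sqrt{n^{2/3}})=O(n^{2/3})$ discs. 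Your plan to ``handle each cell locally'' is also self-defeating even with the right tuning: the instance inside a cell is as hard as the original problem, so recursing yields an exponent strictly worse than $2/3$. The actual algorithm never matches a start to a target of its own region; after clearing the boundary discs and a seed set, it orders the regions by their target-minus-start surplus and ships the starts of each region to targets of preceding regions. Such cross-region moves are automatically unobstructed (a start fully inside one region and a target fully inside another cannot overlap), which is exactly what keeps the number of twice-moved discs at $O(n^{2/3})$. None of this machinery, or a substitute for it, appears in your argument.

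\emph{Lower bound.} Your construction fails outright, because in the lifting model a disc's overlap with its \emph{own} target is not an obstruction: you lift the disc and set it down. If $T$ is any rigid shift of a packed $S$, the target at the leading corner of the shift is blocked by no start disc other than its own, and sweeping the cluster from that corner inward --- each disc lifted onto its own translated target, in decreasing order along the shift direction --- completes the task in exactly $n$ moves with zero relocations: when a disc is processed, the only starts overlapping its target are itself and discs ahead of it in the sweep, which have already moved. So ``every legal move into the cluster must first be enabled by a relocation across its boundary'' is false, and the area/frontier accounting cannot be repaired, since the frontier is advanced mainly by ordinary to-target moves rather than by relocations. This is exactly why the construction in the paper is not a shifted packing: it interleaves an $m\times m$ target grid with an $(m-1)\times(m-1)$ grid of starts placed at the centers of the target cells (each start blocking four targets, no start coinciding with a target), surrounds it with $2m-2$ extra starts each blocking two outer targets, adds an auxiliary chain to balance the counts, and then proves --- by an argument the source itself calls quite technical --- that occupying $m-1$ targets in distinct rows forces at least $(m-1)/2$ starts to move to non-target positions. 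Nothing in your proposal substitutes for that lemma, and no matching or schedule flexibility is what saves your instance; a single fixed sweep already solves it optimally.
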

\noindent\emph{Upper bound:}
Bereg and Dumitrescu first claim that any set of $k$ discs can be separated by a line $\ell$ such that $\ell$ intersects $O(\sqrt{k})$ discs and each halfplane bounded by $\ell$ contains at most $\frac{2}{3}k$ centers of the discs.
Hence, the plane can be partitioned into $O(n^{1/3})$ convex polygonal regions by recursively applying the above dissection, such that each convex region fully contains between $\ceil{n^{2/3}}$ to $3\ceil{n^{2/3}}$ start and/or target discs.
The total number of discs intersecting the boundaries of all regions is $O(n^{2/3})$.
In order to come up with an efficient itinerary, the regions are sorted by the difference between the number of target discs and the number of start discs fully contained in them.
This way, we can shift the start discs from one region to the target discs in a preceding region according to the descending order of this difference.
If we wish to shift the discs according to this order, we first need to make sure that there are no discs on the boundaries of the regions, so that target discs which are fully contained in a region are not blocked by start discs which are on its boundary.
Moreover, we need to move away sufficient number of start discs from the first regions, so there can be enough free targets for the first shifting step.
Then, it is easy to move start discs to free target discs of a different region (not on the boundaries) and eventually place the discs that were initially moved to the free leftover target discs.

\noindent\emph{Lower bound:}
See Figure~\ref{fig:lifting_unlabeled_lower}.
Let $m\approx\sqrt{n}$ and densely place $m\times m$ target discs in a grid, and densely place $(m-1)\times (m-1)$ start discs in a similar grid, so that each such start disc overlaps with 4 target discs.
Place $2m-2$ start discs around the grid, so that each such start disc overlaps with two target discs; see Figure~\ref{fig:lifting_unlabeled_lower}.
Place the remaining $n-m^2$ target discs and the remaining $n-m^2+1$ start discs along a horizontal line, so that each such target disc overlaps with two of these start discs.
It is enough to show that in order to occupy $m-1$ target discs of distinct rows (or of distinct columns) of the grid, at least $\frac{m-1}{2}$ start discs needs to move away (not to a target disc).
The proof is quite technical; see~\cite{lifting} for more details.
We conclude that every valid itinerary for this construction has to perform at least $n+\Omega(n^{1/2})$ moves.

\begin{figure}
	\centering
	\includegraphics[trim=0 0 0 0, clip,width=0.8\textwidth]{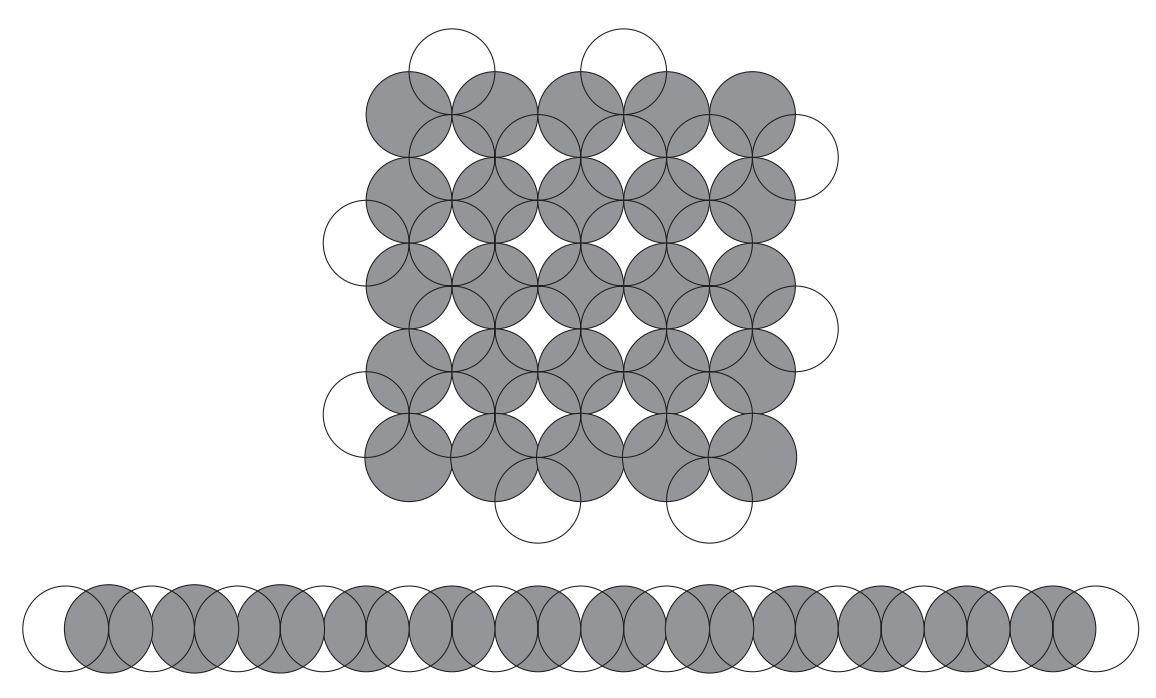}
	\caption{\sf 
		Lower bound construction for the unlabeled version of the lifting model.
		The start configuration (as empty discs) and target configuration (as shaded discs) in a grid like construction.
		Figure taken from~\cite{lifting}.
	} \label{fig:lifting_unlabeled_lower}
\end{figure}

\subsubsection{Labeled Version}

\begin{theorem}
\textup{\cite{lifting}}
Given a pair of start and target configurations $S$ and $T$, each with $n$ discs with arbitrary radii, $9n/5$ moves always suffice for transforming the start configuration into the target configuration under the lifting model.
On the other hand, for each $n$, there exist pairs of configurations which require $\floor{5n/3}$ moves for this task.
\end{theorem}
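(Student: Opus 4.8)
The plan is to prove the two bounds separately. For the upper bound I would reduce everything to counting how many discs must be moved twice. Any disc can be lifted directly to its target in a single move whenever that target is currently unoccupied, and otherwise it must first be parked in temporary free space and placed later, costing two moves; hence any schedule costs exactly $n + (\text{number of parked discs})$. The universal algorithm parks all $n$ discs, giving the baseline $2n$, so to reach $9n/5 = 2n - n/5$ it suffices to exhibit a schedule in which at least $n/5$ discs are placed directly, i.e., at most $4n/5$ are ever parked. (Parking is always possible, since the plane offers unbounded free space.)

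To organize the direct placements I would build the dependency digraph $H$ on the $n$ discs, with an arc $j \to i$ whenever $D(s_j)$ overlaps $D(t_i)$, meaning that disc $j$, while still at its start, blocks the target of disc $i$. A disc may be placed directly precisely when all its in-neighbours have already vacated their starts, so if $H$ were acyclic a topological order would place every disc directly in $n$ moves; the obstructions are the directed cycles, which force some discs to be parked first. The heart of the proof is therefore a geometric-combinatorial lemma guaranteeing that one can always schedule so that at most $4n/5$ discs are parked---equivalently, that $H$ admits a large ``acyclically placeable'' subset. Here I would exploit that $S$ and $T$ are packings (and that each start disc is paired with a target of equal radius): each disc meets only a bounded number of others, so $H$ is sparse and locally structured, and a charging argument based on disc-packing density should convert this sparsity into the constant fraction $1/5$ of direct placements. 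Pinning down exactly this constant---showing that packing density forces the ratio $1/5$ and not merely some smaller fraction---is the main obstacle on the upper-bound side.

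For the lower bound I would build a modular construction out of \emph{blocking gadgets}. The basic gadget has three discs $A,B,C$ arranged so that each target $D(t_X)$ overlaps \emph{both} of the two foreign starts; then the first disc to reach its final target can do so only after the other two have vacated their starts, and those two cannot have been placed earlier (their own targets are likewise blocked), so at least two discs are parked and the gadget costs at least $3+2=5$ moves. I would place $\lfloor n/3 \rfloor$ such gadgets far apart in the plane, so that no disc's target region is ever met by a disc of another gadget; the only cross-gadget interaction is the sharing of free, unbounded parking space, which cannot save a move, so the minimum move counts add and the instance costs at least $5\lfloor n/3\rfloor$. To obtain $\lfloor 5n/3\rfloor$ exactly I would absorb the remainder with a two-disc mutually blocking gadget (forcing $3$ moves, by the same argument) when $n\equiv 2 \pmod 3$, and with a single isolated disc (forcing one move) when $n\equiv 1 \pmod 3$.

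The main obstacle on the lower-bound side is geometric realizability: a target overlapping two foreign starts pulls those two starts close together, while the three targets must themselves remain pairwise disjoint, so the fully symmetric all-to-all overlap is in tension with the packing constraints. I would resolve this using the radius freedom granted by the statement (arbitrary radii), choosing each gadget's radii and offsets so that every target is large enough to reach the two foreign starts while the targets stay disjoint, and then verify the required and the forbidden overlaps by a direct coordinate computation. Once a single realizable gadget is in hand, the additivity argument above completes the bound.
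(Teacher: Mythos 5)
Your lower bound is sound and is essentially the paper's own argument: the paper also uses a three-disc gadget in which each target overlaps both foreign starts, repeated many times, with the same accounting ($1+2+2=5$ moves per gadget). Your worry about geometric realizability is unfounded, and radius freedom is not needed: three mutually tangent congruent starts, with targets equal to the same triple rotated by $60^\circ$ about the center and labels chosen so that each target is tangent to its own start and overlaps the two foreign ones, already realizes the gadget. Your remainder gadgets and the additivity argument are fine.

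The genuine gap is in the upper bound, exactly at the step you flag as the ``main obstacle,'' and the route you sketch for closing it cannot work. You propose to extract the fraction $1/5$ from sparsity of the dependency digraph $H$, on the grounds that ``each disc meets only a bounded number of others'' because $S$ and $T$ are packings. For \emph{arbitrary radii} this is false: a single huge start disc can overlap arbitrarily many pairwise disjoint small targets (and vice versa), so $H$ has vertices of unbounded in- and out-degree, and no bounded-degree or packing-density charging argument applies. The proof in \cite{lifting}, as summarized in the paper, replaces sparsity by two different ingredients: (1) a Caro--Wei-type bound for digraphs,
$$\beta(G) \geq \max\left(\sum\limits_{v}\frac{1}{d^+_v + 1}, \sum\limits_{v}\frac{1}{d^-_v+1} \right),$$
where $\beta(G)$ is the size of the largest vertex subset inducing an acyclic subgraph; and (2) \emph{planarity} of the intersection graph of the open discs of $S \cup T$ --- a global structural fact about two interleaved packings that survives unbounded degree. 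Since the blocking digraph's arcs correspond to edges of this planar bipartite graph on $2n$ vertices, it has at most $4n-4$ arcs, so the average out-degree is below $4$, and convexity of $x\mapsto 1/(x+1)$ in the bound above yields $\beta \ge n/5$: a subset $S'$ of at least $n/5$ discs admitting a topological order along which each is placed directly once the remaining at most $4n/5$ discs are parked, for a total of $\frac{4n}{5}+\frac{n}{5}+\frac{4n}{5}=\frac{9n}{5}$ moves. Your scheduling framework (cost $= n + \#\text{parked}$, acyclic subsets yield direct placements) is the right one, but without the planarity lemma and the digraph bound the constant $1/5$ is out of reach.
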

\noindent\emph{Upper bound:}
Bereg and Dumitrescu first prove that for any directed graph $G=(V,E)$,
$$\beta(G) \geq \max\left(\sum\limits_{v \in V}\frac{1}{d^+_v + 1}, \sum\limits_{v\in V}\frac{1}{d^-_v+1} \right),$$
where $\beta(G)$ is the maximum-size subset of vertices $V' \subseteq V$, such that the subgraph induced by $V'$ is acyclic, and $d^+_v$ (resp., $d^-_v$) is the out-degree (resp., in-degree) of $v$.
Moreover, they show that the intersection graph of the open discs of $S \cup T$ is planar.
Using the above claims, they prove that $\beta(D) \geq \frac{n}{5}$ for the directed \emph{blocking graph} $D=(S,F)$ on the set $S$ of $n$ start discs, where
$$F=\set{(s_i,s_j):i\neq j \text{ and } s_i \cap t_j \neq \emptyset, s_i,s_j \in S, t_j \in T}.$$
In simpler words, $(s_i,s_j)\in F$ if disc $s_i$ blocks disc $s_j$ from moving to its target.
Thus, there exists a topological ordering of $S' \subseteq S$ of size at least $\frac{n}{5}$, such that each of the discs in $S'$ can be moved to its target (according to the ordering) without intersecting any other disc of $S'$.
That said, we can initially move the discs in $S''=S\setminus S'$ to a free area, then move the discs in $S'$ to their targets according to the topological ordering, and eventually move the discs in $S''$, resulting in a valid itinerary comprising $9n/5$ moves.

\noindent\emph{Lower bound:} consider the construction in Figure~\ref{fig:lifting_labeled_lower}.
By repeating the construction a sufficient number of times, we get the lower bound of $\floor{5n/3}$ moves.

\begin{figure}
	\centering
	\includegraphics[trim=0 0 0 0, clip,width=0.2\textwidth]{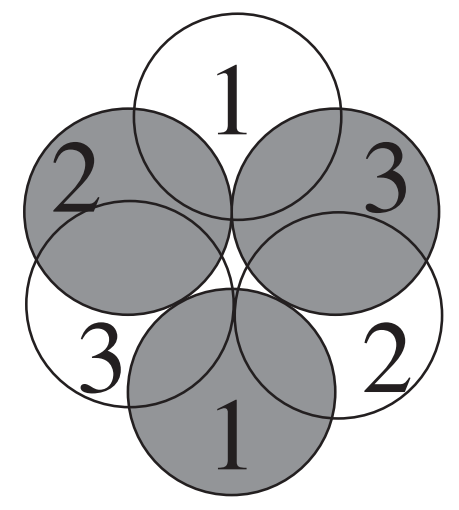}
	\caption{ \sf
		Lower bound construction for the labeled version of the lifting model.
		The start and target configurations are displayed as empty and shaded discs, respectively. Every valid itinerary for the labeled (or arbitrary radii) version of the lifting model has to perform at least $5$ moves.
		Figure taken from~\cite{lifting}.
	} \label{fig:lifting_labeled_lower}
\end{figure}

\medskip
\noindent{\bf Remark.}
The bounds for the labeled version also hold for pseudodiscs.

\subsection{The Sliding Model}
\label{subsec:sliding}

The sliding model was introduced in~\cite{sliding} by Bereg, Dumitrescu and Pach.
In this model, a disc is moved so its center follows a continuous curve in the plane, while the interior of the disc does not intersect any other unmoved disc throughout the motion.
Bereg et al.\ proved the following bounds.

\subsubsection{Unlabeled Version}

\begin{theorem}
\textup{\cite{sliding}}
Given a pair of start and target configurations $S$ and $T$, each consisting of $n$ congruent discs, $\frac{3n}{2} + O(\sqrt{n \log n})$ moves always suffice for transforming the start configuration into the target configuration under the sliding model.
The entire motion can be computed in $O(n^{3/2}(\log n)^{1/2})$ time.
On the other hand, there exist pairs of configurations that require $(1+\frac{1}{15})n - O(\sqrt{n})$ moves for this task, for arbitrarily large $n$.
\end{theorem}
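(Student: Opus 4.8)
The plan is to treat the two directions separately; the upper bound, together with the matching running-time claim, is the technical core, while the lower bound is a single tiled construction exploiting planarity.

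\textbf{Upper bound.} The rough idea is to pay one move for ``easy'' discs that can slide straight into a free target, and two moves for the ``hard'' ones, which must first be parked in a cleared region and only later brought to a target. If I can arrange that at most about half the discs are hard (up to the $O(\sqrt{n\log n})$ slack), the total is $\tfrac12 n\cdot 1 + \tfrac12 n\cdot 2 = \tfrac{3n}{2}$ plus a lower-order term. To localize the argument I would first apply the same balanced line separator used in the lifting upper bound: a line $\ell$ meeting only $O(\sqrt{k})$ of any $k$ discs while splitting the remaining centers into two nearly balanced parts. Recursively dissecting in this way partitions the plane into small regions whose separator-crossing discs number $O(\sqrt{n\log n})$ in total (the extra $\sqrt{\log n}$ arising from summing the separator sizes over the recursion tree). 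I would evacuate all of these crossing discs first, two moves apiece, which contributes exactly the claimed lower-order term and, crucially, clears corridors between regions. Inside each cleared region the configuration space of a single disc is the plane minus the radius-$2$ obstacle discs of the still-unmoved discs; I would argue this free space stays connected enough that a constant fraction of the region's discs slide directly into targets, while the remainder use a park-then-place pair of moves, yielding the $3/2$ ratio region by region.

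\textbf{Running time.} The cost is dominated by computing the separators together with the collision-free sliding paths. Each separator line meeting $O(\sqrt{k})$ discs and balancing $k$ centers can be found in near-linear time, and summing the work over the balanced recursion yields $O(n^{3/2}(\log n)^{1/2})$; routing each slide through the cleared corridors costs only $O(\log n)$ per disc with a suitable point-location structure, so it does not dominate.

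\textbf{Lower bound.} Here I would design a small gadget on a constant number $c$ of discs (with $c=15$ giving the stated constant) in which the start and target discs mutually block one another in a cyclic pattern that \emph{cannot} be undone by lifting, since the sliding model confines all motion to the plane. In such a gadget I would show, by a blocking/potential argument, that at least one disc must be slid \emph{away} from every target before any target in the gadget can be occupied, forcing one extra move per gadget beyond the trivial $n$. Tiling the plane with $\Theta(n/c)$ disjoint copies of the gadget and accounting for $O(\sqrt{n})$ boundary discs then yields $(1+\tfrac{1}{15})n - O(\sqrt{n})$.

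\textbf{Main obstacle.} The delicate step is the upper-bound claim that, after clearing the separator discs, a constant fraction of each region's discs are easy. This requires controlling both the connectivity of the radius-$2$ free space as discs are removed and the absence of long cyclic blocking chains inside a region; unlike the lifting model, a direct slide can be obstructed by a disc nowhere near the straight segment, so the path planning and the combinatorial ordering must be argued together. I expect this interplay, rather than the separator bookkeeping or the lower-bound gadget, to be where the real work lies.
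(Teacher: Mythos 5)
Your upper-bound plan does not establish the constant $\tfrac32$, and the step you yourself flag as the ``real work'' is exactly what is missing. To get $\tfrac{3n}{2}+O(\sqrt{n\log n})$ you need at least \emph{half} of all discs to move exactly once; a ``constant fraction of easy discs per region'' would only give $(2-c)n$ for some unspecified $c$, and the recursive separator decomposition borrowed from the lifting bound gives you no handle on making that fraction equal to $\tfrac12$. The paper's proof avoids per-region connectivity arguments entirely: it finds a \emph{single} strip of width $6$ whose midline $\ell$ bisects the centers of $S$ and which fully contains only $O(\sqrt{n\log n})$ discs; by symmetry, at least $n/2$ of the targets lie on one side of $\ell$, say the right (call them $T_R$). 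The schedule is then global and three-phased: (i) move all start discs inside the strip or right of $\ell$ horizontally far to the right (these, at most $n/2+O(\sqrt{n\log n})$ discs, are the ones moved twice), which frees all targets of $T_R$; (ii) move the at most $n/2$ start discs left of $\ell$ once each, directly into the leftmost free targets of $T_R$; (iii) bring the evacuated discs back to the remaining targets. The factor $\tfrac12$ thus comes from a symmetry argument about a single bisecting line, not from free-space connectivity inside cleared regions, and the $O(n^{3/2}(\log n)^{1/2})$ running time is tied to this one-strip scheme rather than to a balanced recursion.

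Your lower bound would fail outright. In the \emph{unlabeled} sliding model, a constant-size gadget tiled $\Theta(n/c)$ times cannot force one extra move per gadget: the disc that must be slid ``away'' from its own gadget's targets may legally end that very move at a target of a \emph{different} gadget (one whose blocking ring has already been breached), so the move is not an extra move at all. An adversarial schedule cascades discs from gadget to gadget in this way and pays only $O(1)$ non-target moves in total, which is why per-gadget potential arguments collapse here (constant gadgets do work for the \emph{labeled} lifting lower bound, but that argument crucially uses labels). The paper's construction is instead one global, interlocked cage surrounding all targets: arbitrarily long \emph{bridges} made of five rows of discs, connected by constant-size \emph{junctions}, arranged so that the only discs able to move at all are the four at the ends of each bridge, and so that freeing any target requires dismantling an entire row of a bridge. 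The constant $\tfrac{1}{15}$ is the length of such a row, $\tfrac{n}{15}-O(\sqrt{n})$ discs, not the size of a $15$-disc gadget as you conjectured; each of those row discs makes a non-target move and must move again later, which is what yields $(1+\tfrac{1}{15})n-O(\sqrt{n})$.
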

\noindent\emph{Upper bound:}
Bereg et al.\ first prove that there exists a strip of width 6 such that $\ell$, the midline of the strip, bisects the set of the centers of $S$, and the strip fully contains at most $O(\sqrt{n \log n})$ discs.
Assume without loss of generality that $\ell$ is the $y$-axis and (by symmetry) that 
the centers of target discs to the right of $\ell$, denoted by $T_R$, is of size at least $n/2$.
Devising an itinerary is fairly simple: first move horizontally far to the right, all the start discs that are fully contained in the strip or to its right.
Note that the discs centered in $T_R$ are now free.
Move the remaining start discs to the left of $\ell$ (at most $n/2$) to the leftmost target discs centered in $T_R$.
Finally, bring back the initially moved discs to the rest of $T_R$ or to the targets left of $\ell$ (which are now free).
Since the discs of the first step were moved twice, and the rest of the discs just once, the total number of moves is $\frac{3n}{2}+O(\sqrt{n \log n})$.

\begin{figure}
	\centering
	\includegraphics[trim=0 0 0 0, clip,width=0.5\textwidth]{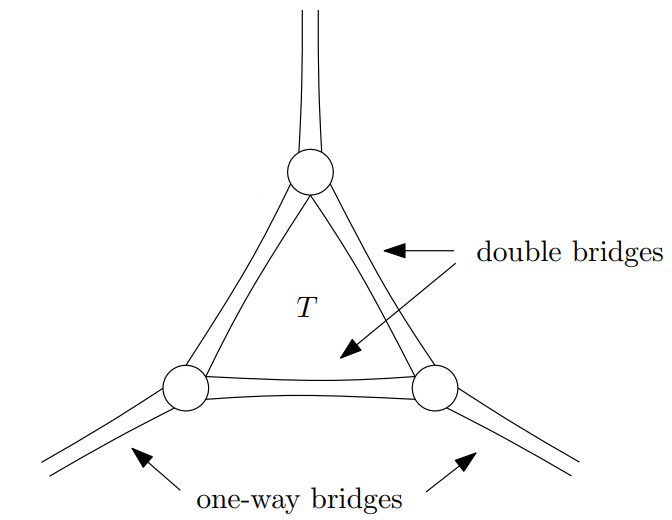}
	\caption{\sf
		Lower bound construction for the unlabeled version of the sliding model.
		The start configuration is positioned in a cage-like construction, where the target configuration $T$ is positioned in its center.
		Figure taken from~\cite{sliding}.
	} \label{fig:sliding_unlabeled_lower}
\end{figure}

\noindent\emph{Lower bound:}
The main idea is to create a cage-like construction, such that the target configuration is a set of  densely packed discs contained in a square of side length $\approx 2 \sqrt{n}$.
The start configuration is a set of discs that surround the target configuration and so it takes several moves until any start disc can move to some target.
A simple example for that will be $\Theta(\sqrt{n})$ concentric rings, each containing $\Theta(\sqrt{n})$ targets.
Before one can move a start disc to some target, at least one disc from each ring has to move away from the center of the ring, yielding a lower bound of $n + \Omega(\sqrt{n})$.
Bereg et al.\ present a more sophisticated cage; see Figure~\ref{fig:sliding_unlabeled_lower}.
In order to construct it, they present two main components: \emph{bridges}, which are arbitrarily long finite packings consisting of five rows of discs, and \emph{junctions}, which are other finite packings of a constant number of discs, and which are designed to connect between three bridges.
When these components are juxtaposed as in Figure~\ref{fig:sliding_unlabeled_lower}, the only discs that can move are the ones at the end of a bridge (four per such a bridge).
Hence, in order to enable a movement of one start disc to some target placement, one needs to move at least the number of discs of one row of a bridge, which is $\frac{n}{15}-O(\sqrt{n})$ long.
We arrive at a total of at least $(1+\frac{1}{15})n -O(\sqrt{n})$ moves, as claimed.

\subsubsection{Arbitrary Radii Version}
\label{subsec:sliding_arbitrary}

\begin{theorem}
	\textup{\cite{sliding}}
	Given a pair of start and target configurations, each consisting of $n$ discs of arbitrary radii, $2n$ moves always suffice for transforming the start configuration into the target configuration under the sliding model.
	The entire motion can be computed in $O(n \log n)$ time.
	On the other hand, there exist pairs of configurations that require $2n - o(n)$ moves for this task, for every sufficiently large $n$.
\end{theorem}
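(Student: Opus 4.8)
The plan is to treat the two bounds separately: to get the upper bound from a careful instantiation of the \emph{universal algorithm}, and the lower bound from a rigidly interlocked construction that exploits the freedom in the radii.

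For the \emph{upper bound}, the value $2n$ is exactly what the universal algorithm delivers when every disc moves at most twice, so the whole task is to realize both of its phases with a single slide per disc in the sliding model with arbitrary radii. The engine I would use is a \emph{separation lemma}: among any family of pairwise interior-disjoint discs, the disc $D_i$ whose center has the largest $x$-coordinate can be slid straight in the $+x$ direction, to arbitrarily large $x$, without meeting another disc. Indeed, for any remaining disc $D_j$ we have $x_j \le x_i$, so while $D_i$ moves right its signed horizontal gap $x_i+t-x_j \ge x_i-x_j \ge 0$ only grows; hence the squared distance $(x_i+t-x_j)^2+(y_i-y_j)^2$ only increases, and it starts at $\ge (r_i+r_j)^2$ by disjointness, so no collision can occur. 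Iterating — always evacuating the current rightmost-center disc — moves all start discs, one slide each, into a staging region placed far to the right of every target. Since the sliding model permits arbitrary curved paths, each evacuated disc can then be steered within the (otherwise empty) far region into a private slot, keeping the staged discs pairwise disjoint.

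For the second phase, the matching-radius assumption guarantees that each start disc has a congruent target, so the argument is the mirror image: processing the targets in a suitable sweep order (say by decreasing $y$) and routing the appropriate staged disc in from the far region along a curve that goes around the discs already placed, each target is filled by one further slide. This yields $2n$ moves in total, and since the only nontrivial computation is the two sorts that determine the orders, the itinerary is produced in $O(n\log n)$ time.

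For the \emph{lower bound} I would construct a pair $S,T$ of dense packings occupying essentially the same bounded region and \emph{interleaved}, using the arbitrary radii to make the interleaving rigid, so that (i) every target position is covered by several start discs, hence no target is initially vacant, and (ii) the common region supports only $o(n)$ discs on its boundary. The counting is then transparent: if $m$ discs are each moved exactly once, the total is at least $m+2(n-m)=2n-m$, so it suffices to show $m=o(n)$, i.e.\ that all but $o(n)$ discs are moved at least twice. A disc moved only once is placed directly onto its target and never moves again, which forces its target to be free of every not-yet-finally-placed disc at that instant; by the density condition (i) this can happen only near the fringe, after the interior has been cleared, so the number of discs ever placed by a single direct move is bounded by the boundary size, which is $o(n)$ by (ii). The \textbf{main obstacle} is the rigor of this last step: ruling out \emph{pipelining}. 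For congruent discs one can relocate an entire cycle cheaply (evacuate one disc, shift the rest along the cycle, return the evacuated one), which is precisely why the congruent lower bound stalls at $(1+\frac{1}{15})n$; the construction must therefore use the size disparities to create deep, acyclic blocking dependencies — a large disc whose approach corridor is plugged by a specific small disc, whose own target is in turn guarded, and so on — so that no such amortization is available and the ``clear the interior, then refill'' order is essentially forced. Making this dependency structure precise, and proving it admits no shortcut beyond the $o(n)$ fringe discs, is where the real work lies; the claimed $2n-o(n)$ then follows from the counting above.
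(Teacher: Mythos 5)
Your upper bound is fine and is essentially the paper's: the $2n$ bound ``follows from the universal algorithm,'' and your rightmost-evacuation lemma plus time reversal is a legitimate way to realize its two phases with one slide per disc in $O(n\log n)$ time. (A small simplification: if you evacuate bottommost-first straight downward, the time-reversed process fills targets in decreasing-$y$ order by straight vertical slides, so no routing ``around the discs already placed'' is ever needed; your curved detours are harmless but unnecessary.)

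The lower bound, however, has a genuine gap, and you have named it yourself: the entire content of this direction is exhibiting a construction in which all but $o(n)$ discs are forced to move at least twice, and you leave exactly that construction and its no-pipelining proof as ``where the real work lies.'' The counting identity $m+2(n-m)=2n-m$ is trivial; without the construction there is no proof. Moreover, the flat ``interleaved dense packing with an $o(n)$ boundary'' you sketch cannot work on its own: for flat, cage-like constructions the cascading strategy you worry about is real, which is precisely why the congruent-disc sliding bound stalls near $(1+\frac{1}{15})n - O(\sqrt{n})$. What the cited construction of Bereg, Dumitrescu and Pach does---and what the paper reproduces---is replace the flat packing by a \emph{recursive, hierarchical} one: there are $k$ levels of disc sizes, and each disc at a given level is surrounded, in both the start and target configurations, by $2m+1$ smaller discs of the next level. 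To move any disc other than one of the smallest size, one must first recursively move a fixed number of the smaller discs guarding it, and those guards are themselves guarded; this forces roughly $n + n/2 + n/4 + \cdots + n/2^k = 2n - n/2^k$ moves, and letting $k$ grow slowly with $n$ gives $2n - o(n)$. The hierarchy of radii is exactly what defeats pipelining---discs of different sizes cannot be rotated along a cycle---and it is the ingredient your sketch postulates (``deep, acyclic blocking dependencies'') but never constructs; the recursion, not a boundary-versus-interior count, is what makes the $o(n)$ slack provable.
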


\noindent\emph{Upper bound:} 
Follows from the universal algorithm.

\noindent\emph{Lower bound:} see Figure~\ref{fig:sliding_labeled_lower}.
At each level of the recursive construction, each disc is surrounded by $2m+1$ smaller discs.
Note that in order to move a disc, one has to recursively move a fixed number of discs from the set of smaller discs that surrounds it (unless it is a disc of the smallest size).
Hence, if there are $k$ levels in the recursion, about $n + n/2 + n/4 + \dots + n/2^k$ moves are necessary.

\begin{figure}
	\centering
	\includegraphics[trim=0 0 0 0, clip,width=0.75\textwidth]{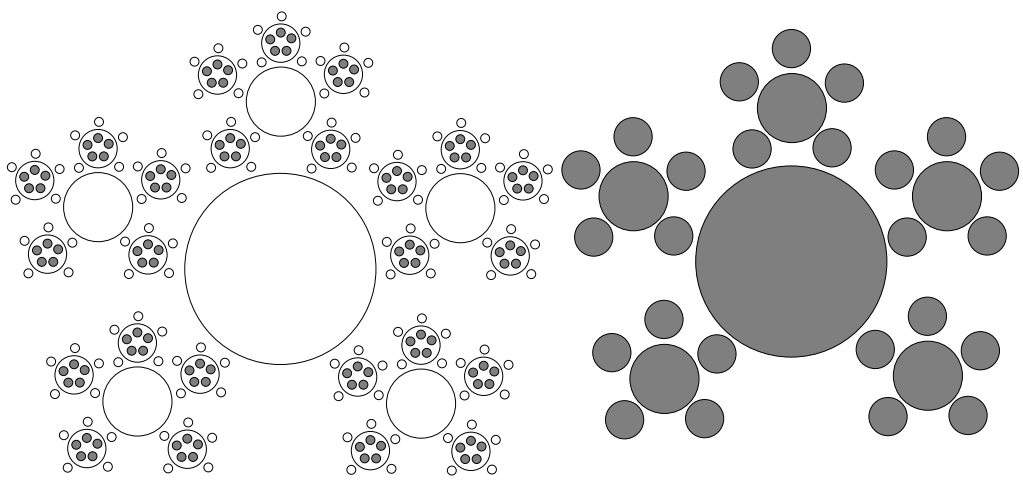}
	\caption{\sf
		Lower bound construction for the arbitrary radii version of the sliding model.
		The start configuration (as empty discs) and target configuration (as shaded discs) positioned in a recursive construction, where $m=2$ and $k=3$.
		Figure taken from~\cite{sliding}.
	} \label{fig:sliding_labeled_lower}
\end{figure}

\subsubsection{Labeled Version}
Clearly, the same construction for the lower bound of the arbitrary radii version in \Subsection~\ref{subsec:sliding_arbitrary} cannot hold for congruent discs.
As noted before, the lifting model can provide a lower bound of $\floor{5n/3}$ for the sliding model (see Figure~\ref{fig:lifting_labeled_lower}).

\subsubsection{Hardness results}
Dumitrescu and Jiang~\cite{in-the-plane} define the following decision problem.
Given a reconfiguration problem in the sliding model, and a positive integer $k$, determine if there is a valid itinerary of at most $k$ moves.
They denote the labeled (resp., unlabeled) version of the problem as \rm{L-SLIDE-RP} (resp., \rm{U-SLIDE-RP}).
They prove that both problems, \rm{L-SLIDE-RP} and \rm{U-SLIDE-RP}, are NP-hard, using reductions from the Rectilinear Steiner Tree problem, which is known to be strongly NP-complete.
The general idea is to show that a valid itinerary of $k$ moves is equivalent to a Steiner tree of length proportional to $k$.
The reduction can be done by centering a target disc on every point of the Steiner Tree problem and placing equal amount of start discs somewhere to the left of the first target disc.
By a polynomial number of obstacles (discs that coincide with their targets), which fills most of the working space, the construction forms a grid like environment that ensures that moving one disc is equivalent to one unit length of the Steiner tree.
This idea is an adaptation of a similar reduction from~\cite{graphs-and-grids} regarding chips on an infinite grid.

\subsection{The Translating Model}
\label{subsec:translating}
Although presented last, the translating model was the first to be considered in the reconfiguration problems realm, by Abellanas et al.~\cite{DBLP:journals/comgeo/AbellanasBHORT06}.
Recall that similarly to our work, every start disc can move along a straight line only (but discs can move more than once).
Other than the bounds on the number of moves of a valid itinerary without any space restrictions, Abellanas et al.\ also establish bounds under some confining assumptions on the problem.
Three different confining spaces are considered, where the discs and their motions are restricted to a given $a \times b$ bounding box:
\begin{description}
	\item[Unlabeled / Narrow:] $a \geq n$ and $b \geq 1$.
	\item[Arbitrary radii / Wide:] $a,b \geq D$ where $D$ is the sum of the diameters of the start discs.
	\item[Unlabeled / TooTight:]
	The given bounding box might be too small to admit any valid itinerary.
	Thus, the motions are restricted to a bounding box of size $\ceil{a} \times (b + \ceil{n/a})$, while the discs, in their start and target positions, are restricted to the given $a \times b$ box.
\end{description}
The bounds for the above assumptions are summarized in Table~\ref{tab:confining_bounds}.
\begin{table}[h]
	\centering
	\caption{\sf
	Bounds of the number of moves under different confining assumptions for the translating model.}
	\label{tab:confining_bounds}
	\begin{tabular}{||c | c | c  | c||} 
		\hline
		Version & Confining assumption & Lower bound & Upper bound \\ [0.5ex] 
		\hline\hline
		Unlabeled & Narrow & $\floor{8n/5}$ & $3n$ \\
		Arbitrary radii & Wide & $2n$ & $4n$ \\
		Unlabeled & TooTight & $\floor{8n/5}$ & $6n$ \\
		\hline
	\end{tabular}
\end{table}

\subsubsection{Labeled Version}
$2n$ moves are sufficient for any instance of the labeled version using the universal algorithm.
However, its implementation in the translating model is not straightforward, and so we present the key ideas that were proved in~\cite{DBLP:journals/comgeo/AbellanasBHORT06}.
Even for the version of arbitrary radii discs, Abellanas et al.\ prove that a set of discs can be translated, one by one according to their reversed lexicographic order, such that their centers are positioned along a (sufficiently far to the right and up) vertical segment, in any desired order along the segment.
Using this observation twice, once for moving the start discs to the vertical segment, and the second for ``moving'' the target discs to the vertical segment (moving the target discs is equivalent to reversing the process and moving the discs from their intermediate positions along the vertical segment to their target positions), yields the desired itinerary.
On the other hand, there exist instances that require $2n$ moves, by repeating, for example, the construction in Figure~\ref{fig:labeled_infeasible}, given in the next \ssection.

\subsubsection{Unlabeled Version}
Similarly to the labeled version, the upper bound of $2n-1$ moves in the unlabeled version is proven by using the universal algorithm with a minor change.
This time we can move the leftmost start disc directly to the leftmost target disc, so the total number of moves is reduced by 1.

\begin{figure}
	\centering
	\includegraphics[trim=0 0 0 0, clip,width=0.9\textwidth]{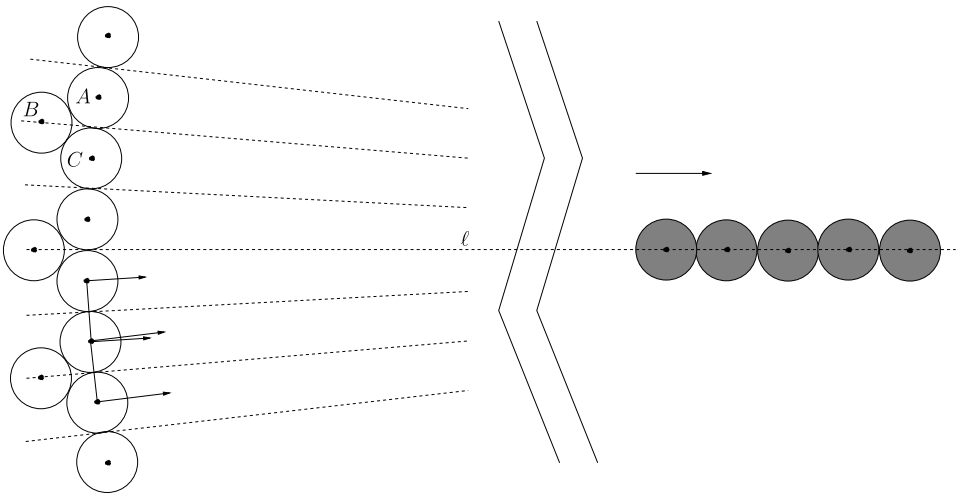}
	\caption{\sf
		Lower bound construction for the unlabeled version of the translating model.
		The start discs are placed on two concentric rings on the left, while the target discs are placed along a horizontal segment to the right.
		Figure taken from~\cite{in-the-plane}.
	} \label{fig:translating_unlabeled_lower}
\end{figure}

As for the lower bound, Abellanas et al.\ have shown that $\floor{8n/5}$ moves are necessary for some given construction.
Using a similar construction, Dumitrescu and Jiang~\cite{in-the-plane} managed to show a better lower bound of $\floor{5n/3}-1$ moves; see Figure~\ref{fig:translating_unlabeled_lower}.
In this construction, the start discs are centered on two concentric rings and the target discs are placed in a horizontal chain far to the right.
The rings are large enough so that for each two tangent discs placed on the inner ring, the open strip of width 2, where the perpendicular bisector of the centers of the discs runs in the middle of the strip, contains all the centers of the target discs.
This property ensures that for each pair of adjacent discs of the first layer (the inner ring), at least one has to perform an extra move before reaching a target.
Using the above observation, it is shown that any valid itinerary has to execute at least twice the number of discs in the first layer (minus 1) non-target moves, i.e., moves that their destination is not a target disc.

\subsubsection{Hardness results}
As already mentioned in the introduction, some problems in this model are hard to solve.
Recall the problems defined in \Subsection~\ref{subsec:sliding}.
The problems \rm{U-TRANS-RP} and \rm{L-TRANS-RP} are defined similarly, and are also proven in~\cite{in-the-plane} to be NP-hard.
The \rm{3-SET-COVER} problem can be reduced to both of the problems, using adaptations of the ideas found in~\cite{graphs-and-grids}, regarding chips on a ``broom''-shaped graph.

On the bright side, deciding whether there exists any valid itinerary for the labeled version can be done in polynomial time.
The proof can be found in~\cite{DBLP:journals/comgeo/AbellanasBHORT06} and in \Section~\ref{section:labeled}.

	\newpage
	\ifthesis
\section[Labeled Version: Analysis of the Translation Plane]%
{Labeled Version: \\ Analysis of the Translation Plane}
\vspace{-10pt}
\else
\section{Labeled Version: Analysis of the Translation Plane}
\fi
\label{section:labeled}

In this \ssection we consider the labeled version \lst{} of the problem.
First, in \Subsection~\ref{subsec:labeled_analysis}, we formally define the problem.
We then present an algorithm that determines whether there is a valid itinerary for a fixed initial translation $\vec{v}$, and computes such an itinerary if it exists.
We then compute, in \Subsection~\ref{subsec:labeled_Q}, the region $Q$ of all initial translations for which there exists a valid itinerary.
Finally, we present the data structure $D_Q$, that is used to query $Q$ efficiently, in preparation for the optimization algorithms in \Section~\ref{section:labeled_sa}.

\subsection{Preliminary Analysis}
\label{subsec:labeled_analysis}
\vspace{-5pt}

We are given two valid configurations $S$ and $T$ of $n$ points each, and a one-to-one matching $M$ between the positions of $S$ and those of $T$, which is the set of pairs $\set{(s,M(s))\mid s \in S}$, where $s$ and $M(s)$ share the same label, for each $s \in S$.
Our goal is to find a translation $\vec{v} \in \R^2$ such that there is a valid collision-free itinerary of $n$ unit discs from $S$ to $T + \vec{v}$ with respect to the matching $M$.
That is, the goal is to define an ordering on the elements of $M$, denoted by $(s_1, M(s_1)), (s_2, M(s_2)), \ldots, (s_n,M(s_n))$,
so that, for each $i=1,\ldots,n$ in this order, we can translate the disc placed at $s_i$ to the position $M(s_i) + \vec{v}$,
so that it does not collide with any still unmoved discs, placed at $s_{i+1}, \ldots, s_n$, nor with any of the already translated discs, placed at $M(s_1)+\vec{v}, \ldots, M(s_{i-1}) + \vec{v}$.

We call a translation $\vec{v}$ a \emph{valid translation} if it yields at least one valid itinerary.
In the labeled version, we show how to compute the set of all valid translations in $O(n^6)$ time.
We then present, in \Section~\ref{section:labeled_sa}, three algorithms, each of which finds a valid translation $\vec{v}$ that minimizes a different measure of proximity between $S$ and $T+\vec{v}$, as reviewed in the introduction.

We first address the subproblem in which $\vec{v}$ is fixed and our goal is to order $M$ so as to obtain a valid itinerary, if at all possible.
Let $A=(s,M(s))$ be a pair in the matching.
For convenience, we denote $s$ and $M(s)$ by $A^S$ and $A^T$, respectively.
Define the \emph{hippodrome} of two unit discs $D, D'$ to be the convex hull of their union.
Observe that the hippodrome is exactly the area that a unit disc will cover while moving from $D$ to $D'$ along a straight trajectory.
Define $\Hv(A)$ to be the hippodrome of $D(A^S)$ and $D(A^T + \vec{v})$.
Denote by $\kv$ the overall number of intersecting pairs of hippodromes $\set{\Hv(A), \Hv(B)}$, for all $A\neq B \in M$.
See Figure~\ref{fig:hippodromes} for an illustration.

\begin{figure}
  \centering
    \includegraphics[trim=150 10 140 65, clip,width=0.8\textwidth]{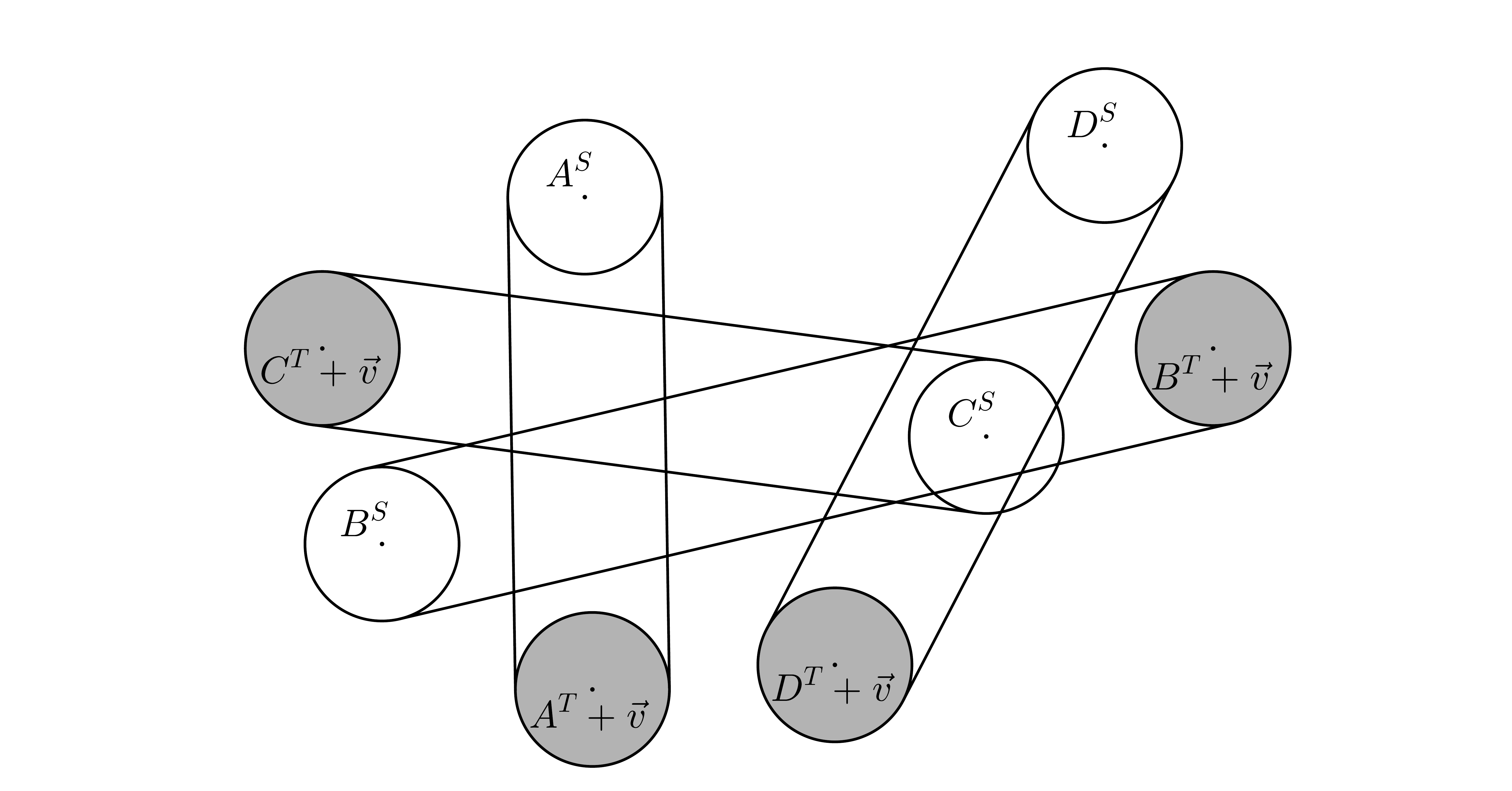}
    \caption{
       \sf The hippodromes $\Hv$ for four pairs $A,B,C,D \in M$ and some fixed $\vec{v}$. Notice that even though ${\Hv(A) \cap \Hv(B) \neq \emptyset}$, there is no restriction that the motion of $A$ must precede or succeed the motion of $B$. Such restrictions do exist for many other pairs, such as $B$ and $C$ ($C$ has to perform a motion before $B$).
    } \label{fig:hippodromes}
\end{figure}

\begin{theorem}[Abellanas et al.~\cite{DBLP:journals/comgeo/AbellanasBHORT06}]
\label{theo:abellanas}
Let $S$ and $T$ be two valid configurations of $n$ points each, and let $\vec{v}$ be a fixed translation.
Let $M:S \rightarrow T$ be a bijection between the two configurations.
Then one can compute, in $O(n \log n + \kv)$ time, a valid itinerary for $S$ and $T+\vec{v}$ with respect to $M$, if one exists.
\end{theorem}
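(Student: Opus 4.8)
The plan is to reduce the existence of a valid itinerary for a fixed $\vec{v}$ to testing acyclicity of a suitable \emph{precedence digraph} on the $n$ moves, and then to show that this digraph has $O(\kv)$ edges and can be built within the stated time. The starting point is the collision characterization: since $\Hv(A)$ is exactly the region swept by a unit disc moving from $A^S$ to $A^T+\vec{v}$, the moving disc of $A$ collides with a stationary unit disc centered at a point $p$ if and only if $D(p)\cap\Hv(A)\neq\emptyset$ (equivalently, $p$ lies within distance $2$ of the center segment of $\Hv(A)$, which follows from the triangle inequality). I would then define a directed graph $G$ on the $n$ pairs of $M$, adding, for each ordered pair $A\neq B$, an edge forcing $B$ before $A$ whenever $D(B^S)\cap\Hv(A)\neq\emptyset$ (the disc $B$, if still at its start, blocks $A$, so $B$ must vacate first), and an edge forcing $A$ before $B$ whenever $D(B^T+\vec{v})\cap\Hv(A)\neq\emptyset$ (the disc $B$, once at its target, blocks $A$, so $B$ must move later).

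Correctness then follows from showing these constraints are exactly necessary and sufficient. Necessity is immediate: if $D(B^S)\cap\Hv(A)\neq\emptyset$, then in any order placing $A$ before $B$ the disc $B$ is still at $B^S$ when $A$ sweeps $\Hv(A)$, forcing a collision; symmetrically for the target constraint. Hence every valid itinerary respects all edges of $G$, so if $G$ contains a cycle (in particular a $2$-cycle, which occurs precisely when both $D(B^S)$ and $D(B^T+\vec{v})$ meet $\Hv(A)$) no valid order exists. For sufficiency, I would take any topological order of an acyclic $G$ and verify that each move is collision-free: when $A$ moves, every disc $B$ generating a ``$B$ before $A$'' edge has already left its start, every disc $B$ generating an ``$A$ before $B$'' edge has not yet reached its target, and every remaining disc $B$ has neither $D(B^S)$ nor $D(B^T+\vec{v})$ meeting $\Hv(A)$, so no obstacle lies in the swept region regardless of whether $B$ has moved. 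Thus a valid itinerary exists if and only if $G$ is acyclic, and a topological sort produces one.

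For the running time I would use the key observation that $D(B^S)\subseteq\Hv(B)$ and $D(B^T+\vec{v})\subseteq\Hv(B)$, so every edge of $G$ corresponds to a pair of \emph{intersecting} hippodromes; consequently $G$ has only $O(\kv)$ edges, and once the intersecting hippodrome pairs are known, each of the (at most four) constraints per pair is checked in $O(1)$ and the topological sort costs $O(n+\kv)$. The crux, and the step I expect to be the main obstacle, is reporting all $\kv$ intersecting pairs of hippodromes in output-sensitive $O(n\log n+\kv)$ time: the hippodromes are constant-complexity convex ``stadium'' regions (Minkowski sums of a segment with a unit disc), and two of them intersect iff their center segments lie within distance $2$, so one needs an appropriate geometric intersection-reporting procedure for these regions rather than a naive $O(n^2)$ scan. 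Everything else—the collision lemma, the digraph reduction, and the acyclicity test—is routine, so the geometric reporting is where the real work lies.
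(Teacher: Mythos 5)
Everything structural in your proposal coincides with the paper's proof: your precedence digraph is exactly what the paper calls the translation blocking graph $\Gv$; your two edge rules are (after swapping the names of $A$ and $B$ in the first one) precisely conditions (1) and (2) of Lemma~\ref{lemma:constraint_conditions}; the equivalence ``valid itinerary exists iff the digraph is acyclic, and a topological order yields an itinerary'' is argued the same way; and your observation that $D(B^S),D(B^T+\vec{v})\subseteq\Hv(B)$, so that every edge comes from an intersecting pair of hippodromes and hence $\mv\le\kv$, is the paper's edge-count bound as well. Your sufficiency argument is sound: under acyclicity, whenever $A$ moves, any disc currently at its start position induces no ``before'' edge into $A$ (hence misses $\Hv(A)$), and any disc currently at its target induces no edge out of $A$ (hence also misses $\Hv(A)$).

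The genuine gap is the step you flagged yourself and then left open: reporting all $\kv$ intersecting pairs of hippodromes in $O(n\log n+\kv)$ time. This is not a side issue---it is where the theorem's running time comes from, so as written the proof establishes correctness but not the stated bound (a naive scan gives $O(n^2)$, and even a standard Bentley--Ottmann sweep gives only $O(n\log n+\kv\log n)$, which the paper explicitly notes falls short). The paper closes this step as follows: each hippodrome boundary consists of two straight segments and two circular arcs, and each circular arc splits into at most two $x$-monotone pieces, so the $n$ hippodrome boundaries become a collection of $O(n)$ well-behaved $x$-monotone curves; to these one applies the sweep-line algorithm of Balaban~\cite{DBLP:conf/compgeom/Balaban95}, which reports all intersecting pairs of such curves in $O(n\log n+\kv)$ time. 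In other words, the missing ingredient is not a new geometric insight about hippodromes but the invocation of a known output-sensitive curve-intersection algorithm; once those pairs are in hand, your $O(1)$-per-pair check of the blocking conditions and the $O(n+\kv)$ topological sort finish the proof exactly as you describe.
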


We review the proof of the theorem, adapting it to our notations, and exploit later the ingredients of the analysis for the general problem (where we allow $T$ to be translated).
The constraints that the positions of the discs impose on the problem are as follows.
We say that a pair $A=(A^S, A^T)$ (in $M$) has to perform a motion (from $D(A^S)$ to $D(A^T + \vec{v})$) before another pair $B$, for a given translation $\vec{v}$, if in any ordering $\Pi$ of $M$ that yields a valid itinerary, the index of $A$ in $\Pi$ is smaller than the index of $B$.
In other words, for any two pairs $A,B \in M$, $A$ has to perform a motion before $B$ if either the disc $D(A^S)$ blocks the movement of $D(B^S)$ to the position $B^T+\vec{v}$, or the disc $D(B^T + \vec{v})$ blocks the movement of $D(A^S)$ to the position $A^T+\vec{v}$.
Formally, we have:

\begin{lemma}
\label{lemma:constraint_conditions}
Given pairs $A,B \in M$ and a fixed translation $\vec{v}$, $A$ has to perform a motion before $B$ (with respect to $\vec{v}$) if and only if at least one of the following conditions holds:
\begin{enumerate}
    \item $ D(A^S) \cap \Hv(B) \neq \emptyset. $
    \item $ D(B^T + \vec{v}) \cap \Hv(A) \neq \emptyset. $
\end{enumerate}
\end{lemma}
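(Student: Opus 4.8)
The plan is to prove the two directions separately, leaning on the key fact (noted right after the definition of the hippodrome) that $\Hv(B)$ is \emph{exactly} the region swept by the disc of $B$ as it translates from $D(B^S)$ to $D(B^T+\vec{v})$, and symmetrically for $\Hv(A)$. Throughout I read ``$A$ has to perform a motion before $B$'' as the direct pairwise blocking constraint described informally just above the lemma; feasibility (acyclicity of the induced relation) and the passage to \emph{all} valid itineraries are handled by the framework of Theorem~\ref{theo:abellanas}.

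For the ``if'' direction, suppose condition 1 holds, i.e.\ $D(A^S)\cap \Hv(B)\neq\emptyset$, and consider any valid itinerary in which, for contradiction, $B$ moves before $A$. At the moment $B$ moves, $A$ has not moved yet, so the disc $D(A^S)$ still occupies its start position. Since $\Hv(B)$ is precisely the set of points covered by $B$'s disc during its translation, a nonempty intersection $D(A^S)\cap\Hv(B)$ means that at some instant of $B$'s motion its disc overlaps $D(A^S)$, contradicting collision-freeness; hence $A$ must precede $B$. The argument for condition 2 is symmetric in role: if $D(B^T+\vec{v})\cap\Hv(A)\neq\emptyset$ and $B$ were moved before $A$, then when $A$ later translates it sweeps $\Hv(A)$ while the disc $D(B^T+\vec{v})$ already sits at its target, again forcing an overlap, so again $A$ must precede $B$.

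For the ``only if'' direction I would argue the contrapositive by an exchange argument. Assume neither condition holds, and take a valid itinerary in which $A$ and $B$ appear consecutively with $A$ immediately before $B$; I claim swapping them keeps the itinerary valid, so $A$-before-$B$ is not forced. Swapping only alters the environment seen by $A$ and by $B$. When $B$ now moves earlier, the only disc whose status differs from before is $A$'s, which is now at its start $D(A^S)$ rather than at its target; validity of $B$'s (earlier) move therefore reduces to $\Hv(B)\cap D(A^S)=\emptyset$, exactly the negation of condition 1. Symmetrically, when $A$ now moves later, the only changed obstacle is $B$, now sitting at $D(B^T+\vec{v})$, so validity of $A$'s move reduces to $\Hv(A)\cap D(B^T+\vec{v})=\emptyset$, the negation of condition 2. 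All other discs are in identical positions in both itineraries, so every other move remains collision-free; thus the swap is legal.

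The main obstacle I anticipate is entirely in the ``only if'' direction: making precise that the swap isolates exactly conditions 1 and 2 and nothing else (careful bookkeeping of which discs are at start versus target positions at the two affected time steps), and reconciling the pairwise formulation with the global quantifier ``in any ordering that yields a valid itinerary''. The clean resolution is to treat the two conditions as defining the direct edges of a constraint relation and to rely on the topological-ordering structure behind Theorem~\ref{theo:abellanas}: a valid itinerary exists iff this relation is acyclic, and its valid orderings are exactly the linear extensions, so a pair is forced precisely when it is a (direct or implied) constraint. I would also double-check the degenerate tangency cases, where $D(A^S)$ merely touches $\partial\Hv(B)$, to confirm that under the open-disc convention they do not create a genuine blocking constraint.
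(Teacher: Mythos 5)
The paper itself gives no proof of Lemma~\ref{lemma:constraint_conditions}: the lemma is stated as the formalization of the sentence immediately preceding it (``$D(A^S)$ blocks the movement of $D(B^S)$\dots or $D(B^T+\vec{v})$ blocks the movement of $D(A^S)$''), and the equivalence is treated as immediate from the observation that $\Hv(B)$ is \emph{exactly} the region swept by $B$'s disc during its translation. Your ``if'' direction is correct and complete, and it is precisely the intended content: a direct blocking intersection forces the order $A$ before $B$ in every valid itinerary. The problem lies in your ``only if'' direction.

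Your exchange argument presupposes a valid itinerary in which $A$ \emph{immediately} precedes $B$, and this assumption can fail exactly in the cases that matter. Take three pairs with direct constraints $A$ before $C$ and $C$ before $B$, but with no direct constraint between $A$ and $B$ (easy to realize geometrically, with $A$ and $B$ far apart). Then every valid itinerary orders $A$ before $C$ before $B$, so $A$ precedes $B$ in all of them --- hence, under the paper's literal, universally quantified definition, $A$ ``has to perform a motion before'' $B$ --- yet neither condition 1 nor condition 2 holds for the pair $(A,B)$. Moreover, $C$ always lies between $A$ and $B$, so no valid itinerary has them adjacent and the swap can never be invoked: the gap is not patchable, because under that literal reading the ``only if'' direction is simply false (it also fails vacuously whenever no valid itinerary exists at all, since then every pair is ``forced''). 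Your own closing remark is the correct resolution: read the lemma as characterizing the \emph{direct} edges of the blocking graph $\Gv$, and let the acyclicity/topological-ordering framework behind Theorem~\ref{theo:abellanas} govern global forcing. This is exactly how the paper uses the lemma, and under that reading both directions follow at once from the swept-region property of the hippodrome (with the open-disc convention disposing of tangencies, as you note), so no exchange argument is needed. In short, your proposal lands in the right place, but the exchange argument should be dropped rather than repaired, and the reinterpretation you relegate to a final remark is in fact the proof.
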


We next create a digraph whose vertices are the pairs of $M$, and whose edges are all the ordered pairs $(A,B) \in M^2$, for $A \neq B$, that satisfy (1) or (2).
Borrowing a similar notion from assembly planning~\cite{DBLP:journals/algorithmica/HalperinLW00}, we call the graph, for a fixed translation $\vec{v}$, the \emph{translation blocking graph} (TBG),
and denote it as $\Gv$.
Denote the number of edges in $\Gv$ as $\mv$, and observe that $\mv \leq \kv$.
Indeed, for every edge $(A,B) \in \Gv$ the hippodromes $\Hv(A)$ and $\Hv(B)$ intersect, as is easily verified, but not every pair of intersecting hippodromes necessarily induce an edge; see the pairs $A,B$ in Figure~\ref{fig:hippodromes}.
As proved in~\cite{DBLP:journals/comgeo/AbellanasBHORT06}, and as is easy to verify, the subproblem for a fixed $\vec{v}$ is feasible if and only if $\Gv$ is acyclic.

The circular arcs of a hippodrome can be split into two arcs, each of which is $x$-monotone.
This allows us to construct $\Gv$ in $O(n \log n + \kv)$ time, by the sophisticated sweep-line algorithm of Balaban~\cite{DBLP:conf/compgeom/Balaban95}, which applies to any collection of well-behaved $x$-monotone arcs in the plane. (A standard sweeping algorithm would take $O(n \log n + \kv \log n)$ time.)
Checking whether $\Gv$ is acyclic, and, if so, performing topological sorting on $\Gv$, takes $O(n+\mv)$ time.
By definition, any topological order of the vertices of $\Gv$, that is of $M$, defines a valid itinerary.
If $\Gv$ has cycles, no valid itinerary exists for $\vec{v}$.

We now consider the translation plane $\R^2$, each of whose points corresponds to a translation vector $\vec{v}$.
We say that a point $\vec{v}$ in the translation plane is valid, if the corresponding translation vector is valid, i.e., admits a valid itinerary from $S$ to $T + \vec{v}$.
We say that a set of points (in the translation plane) is valid, if each of its points is valid.
Our goal is to construct the region $Q$ of all the valid points (translations), and to partition $Q$ into maximal connected cells, so that all translations in the same cell have the same TBG.
Thus, for each cell, either all its points are valid (with the same set of common valid itineraries) or all its points are invalid.

 \begin{figure}
     \centering
     \includegraphics[trim=0 50 0 0, clip,width=0.4\textwidth]{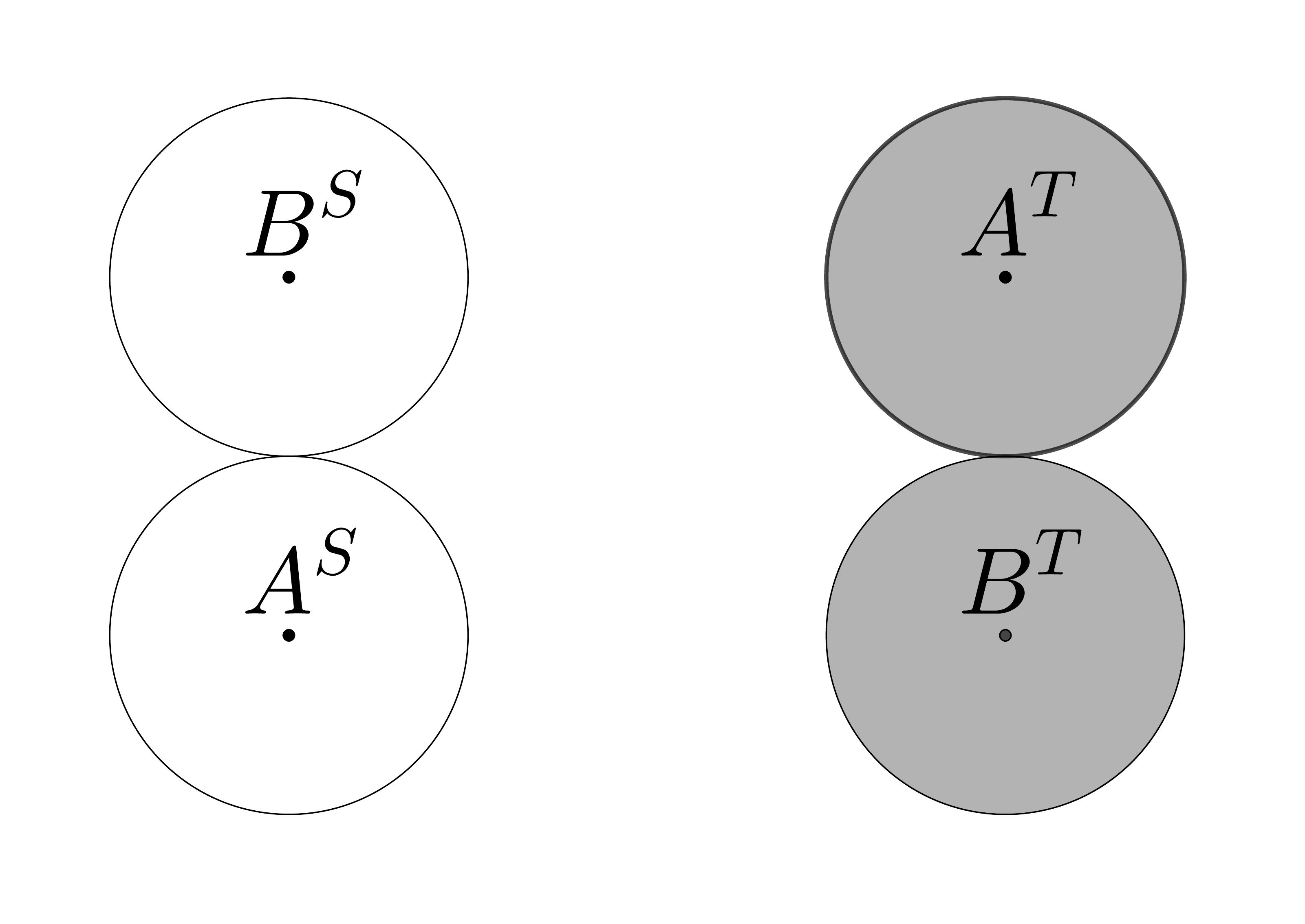}
     \caption{\sf An infeasible instance for the labeled version:
     	no valid itinerary exists between $S$ and $T+\vec{v}$, for any translation $\vec{v}$.}
     \label{fig:labeled_infeasible}
 \end{figure}

\medskip
\noindent{\bf Remark.}
For some instances, $Q$ is empty, as in the scenario depicted in Figure~\ref{fig:labeled_infeasible}.
In that case, our algorithms will report that no valid translation exists.
We also remark that tangency is not a necessary characteristic of infeasible instances, as we will shortly show.

We first fix two pairs $A,B \in M$, and consider the region $\V_{AB}$, which is the locus of those $\vec{v}$ for which the (directed) edge $AB$ is present in $\Gv$.
We can write $\V_{AB} = \V^{(1)}_{AB} \cup \V^{(2)}_{AB}$, where $\V^{(1)}_{AB}$ (resp., $\V^{(2)}_{AB}$) is the locus of all $\vec{v}$ for which condition (1) (resp., (2)) in Lemma~\ref{lemma:constraint_conditions} holds.
We thus have
\begin{align*}
    \V_{AB}^{(1)} &= \set{\vec{v} \in \R^2 \mid D(A^S) \cap \Hv(B) \neq \emptyset}, \\
    \V_{AB}^{(2)} &= \set{\vec{v} \in \R^2 \mid D(B^T + \vec{v}) \cap \Hv(A) \neq \emptyset}.
\end{align*}
We call $ \V_{AB}^{(1)}$ (resp., $ \V_{AB}^{(2)}$) the \emph{vippodrome} of $(A,B)$ of the first (resp., second) type.

\begin{figure}
    \centering
    \includegraphics[trim=0 200 0 130,clip,width=0.7\textwidth]{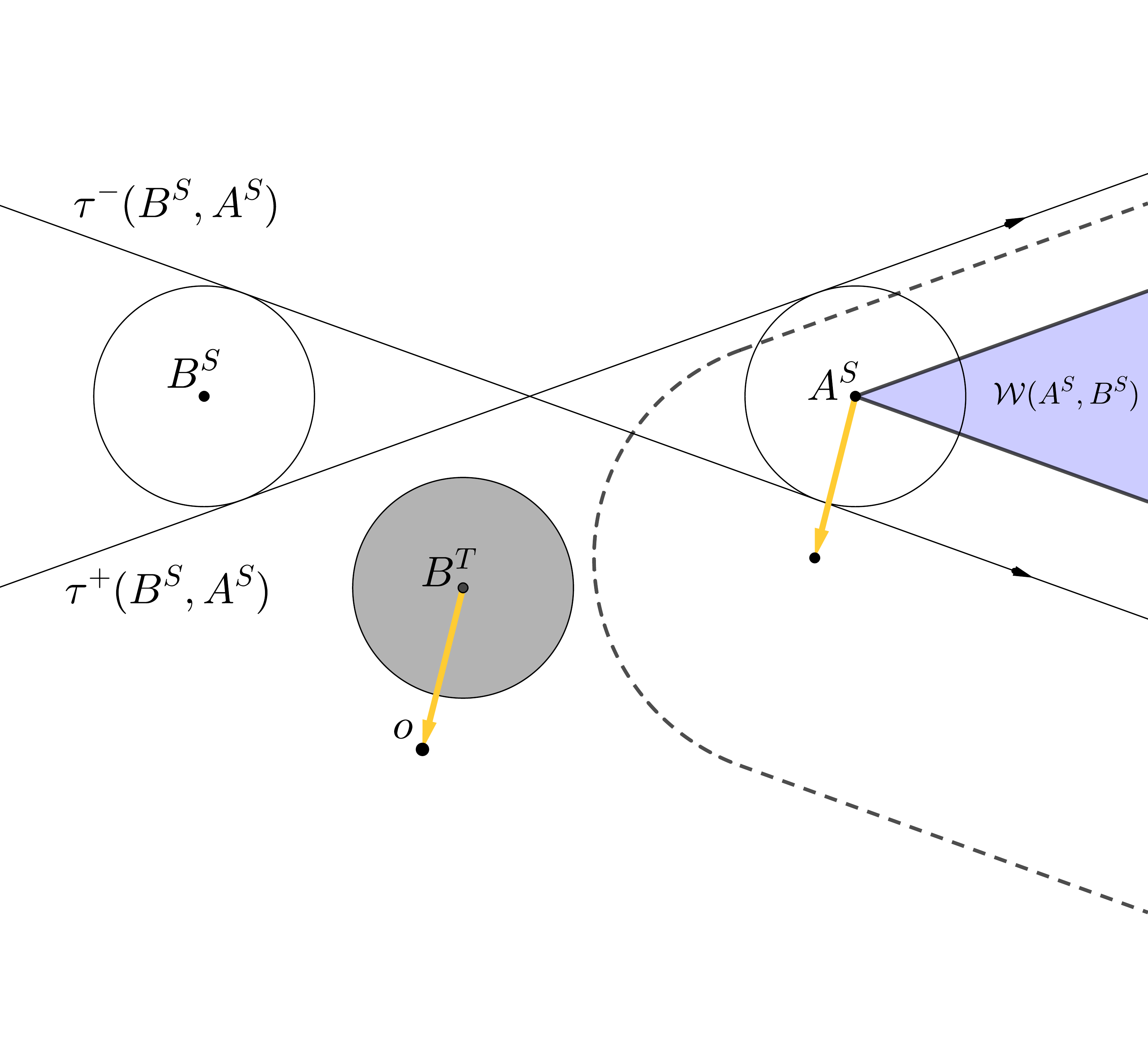}
    \caption{\sf The vippodrome $\V^{(1)}_{AB}$, which is the region to the right of the dashed curve in the translation plane.
    The wedge $\W(A^S,B^S)$ is colored in blue.
    The vippodrome is obtained by first expanding $\W(A^S,B^S)$ by $D_2(o)$, and then by shifting by the vector $-B^T$ (in orange).}
    \label{fig:vippo_construction}
\end{figure}

To construct $\V^{(1)}_{AB}$, we proceed as follows; see Figure~\ref{fig:vippo_construction}.
For given pairs $A,B \in M$, consider the two inner tangent lines, $\tau^-(B^S,A^S)$
and $\tau^+(B^S,A^S)$, to $D(B^S)$ and $D(A^S)$, and assume that they are both directed from $B^S$ to $A^S$, so that $B^S$ lies to the right of $\tau^-(B^S,A^S)$ and to the left of $\tau^+(B^S,A^S)$, and $A^S$ lies to the left of $\tau^-(B^S,A^S)$ and to the right of $\tau^+(B^S,A^S)$.
Let $\W(A^S,B^S)$ denote the wedge whose apex is at $A^S$ and whose rays are parallel to (and directed in the same direction as) $\tau^-(B^S, A^S)$ and $\tau^+(B^S,A^S)$.
Denote the origin as $o$.
We then have the following representation.

\begin{lemma}
\label{lemma:vippodrome_construction}
\begin{align} \label{vips-and-wedges}
\V^{(1)}_{AB} & = \W(A^S,B^S) \oplus D_2(o) - B^T
= (\W(A^S,B^S) - B^T) \oplus D_2(o) \\
\V^{(2)}_{AB} & = -\W(B^T,A^T) \oplus D_2(o) + A^S
= -(\W(B^T,A^T) - A^S) \oplus D_2(o). \nonumber
\end{align}
\end{lemma}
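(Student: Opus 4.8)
The plan is to translate the set-intersection conditions of Lemma~\ref{lemma:constraint_conditions} into a single point-to-segment distance condition, and then to identify the resulting locus as an inflated wedge. First I would record the two Minkowski-sum identities $\{A^S\} \oplus D(o) = D(A^S)$ and $\Hv(B) = [B^S,\, B^T+\vec{v}] \oplus D(o)$, the latter because the hippodrome of two unit discs is exactly the segment joining their centers dilated by a unit disc. Since dilating both operands by the common unit disc $D(o)$ turns an intersection into a distance test, condition (1) becomes $\dd\!\left(A^S,\, [B^S,\, B^T+\vec{v}]\right) < 2$, i.e.\ the segment $[B^S, B^T+\vec{v}]$ meets the open disc $D_2(A^S)$. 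Because $S$ is a valid configuration, $|A^S - B^S| \ge 2$, so $B^S \notin D_2(A^S)$; hence, treating $q := B^T+\vec{v}$ as the free endpoint, the set $W := \{q : [B^S,q]\cap D_2(A^S)\neq\emptyset\}$ is the \emph{visibility shadow} of the disc $D_2(A^S)$ as seen from the viewpoint $B^S$.

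The crux is to show $W = \W(A^S,B^S)\oplus D_2(o)$. I would describe $\partial W$ explicitly: it consists of the two tangent rays drawn from $B^S$ to the circle $\partial D_2(A^S)$, together with the near (front) arc of that circle facing $B^S$. The boundary of the inflated wedge $\W(A^S,B^S)\oplus D_2(o)$ consists of the two lines obtained by offsetting the wedge sides outward by $2$, capped at the apex side (which faces $B^S$, since the wedge opens in direction $A^S-B^S$) by an arc of radius $2$ centered at $A^S$. The two descriptions match once I verify two things: the capping arcs coincide (both are the arc of $\partial D_2(A^S)$ facing $B^S$), and the straight boundaries coincide. For the latter I would use the angle computation: writing $d=|A^S-B^S|$, an inner tangent of the two unit discs makes angle $\alpha$ with the line $A^SB^S$ where $\sin\alpha = 2/d$, so $\W(A^S,B^S)$ has half-angle $\alpha$; the tangent from the external point $B^S$ to the radius-$2$ circle about $A^S$ makes the same angle, since $\sin\beta = 2/d$ as well. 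As each offset wedge-side and the corresponding tangent line share both this direction and perpendicular distance $2$ from $A^S$ on the same side, they are the same line. This identifies $W$ with the inflated wedge, and the first equation follows by translating by $-B^T$ and using that the Minkowski sum commutes with translation, which also gives the factored form.

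For $\V^{(2)}_{AB}$ the extra difficulty is that now \emph{both} objects move with $\vec{v}$: the disc $D(B^T+\vec{v})$ and the hippodrome $\Hv(A) = [A^S,\,A^T+\vec{v}]\oplus D(o)$. I would neutralize this by translating the whole picture by $-\vec{v}$, which preserves the intersection; condition (2) then reads $\dd\!\left(B^T,\, [A^S-\vec{v},\, A^T]\right) < 2$. This has exactly the shape handled above, now with fixed viewpoint $A^T$, fixed disc-center $B^T$, and free endpoint $r := A^S - \vec{v}$. Applying the same shadow-equals-inflated-wedge identity (with the roles of the two points swapped) gives the set of admissible $r$ as $\W(B^T,A^T)\oplus D_2(o)$. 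Solving $r = A^S-\vec{v}$ for $\vec{v}$ reflects this region through $A^S$, and since $D_2(o)$ is centrally symmetric ($-D_2(o)=D_2(o)$), the reflection commutes with the dilation, yielding $\V^{(2)}_{AB} = -\W(B^T,A^T)\oplus D_2(o) + A^S$ and its factored form.

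The main obstacle is the geometric identity of the second paragraph---showing that the visibility shadow of a radius-$2$ disc coincides with the unit disc's inner-tangent wedge inflated by $2$. Everything hinges on the coincidence $\sin\alpha=\sin\beta=2/d$ between the inner-tangent angle of two unit discs and the tangent-cone angle to the radius-$2$ disc, and on placing the capping arc (front vs.\ back) on the correct side; the open-disc convention for $D_r(\cdot)$ keeps the boundary and tangency cases consistent with the strict inequalities throughout. The reduction for $\V^{(2)}_{AB}$ is then essentially bookkeeping once the translation-by-$(-\vec{v})$ trick is in place.
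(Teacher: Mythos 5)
Your proposal is correct and takes essentially the same route as the paper: both reduce condition (1) via the Minkowski trick to the segment $[B^S,\,B^T+\vec{v}]$ meeting $D_2(A^S)$, identify the locus of the free endpoint $B^T+\vec{v}$ with the inflated wedge $\W(A^S,B^S)\oplus D_2(o)$, and obtain $\V^{(2)}_{AB}$ by the symmetric argument that switches $S$ and $T$ and reverses the translation. Your visibility-shadow description and the $\sin\alpha=\sin\beta=2/d$ angle computation simply spell out the step the paper compresses into ``from which the claim easily follows.''
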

\begin{proof}
$\V^{(1)}_{AB}$ is the locus of all translations $\vec{v}$ at which $D(A^S)$ intersects $H_{\vec{v}}(B)$. 
Equivalently, $\V^{(1)}_{AB}$ is the locus of all translations $\vec{v}$ at which $D_2(A^S)$ intersects the segment $e_{\vec{v}} = {(B^S,B^T+\vec{v})}$.
The boundary of $\V^{(1)}_{AB}$ thus consists of all translations $\vec{v}$ for which either $e_{\vec{v}}$ is tangent to $D_2(A^S)$ or $B^T+\vec{v}$ touches $D_2(A^S)$.
That is, $\bd \V^{(1)}_{AB}$ consists of all translations $\vec{v}$ for which $B^T + \vec{v}$ lies on the boundary of $\W(A^S,B^S)\oplus D_2(o)$, from which the claim easily follows.
The claim for $\V^{(2)}_{AB}$ follows by a symmetric argument, switching between $S$ and $T$ and reversing the direction of the translation.
\end{proof}

Note that the boundary of a vippodrome $\V^{(1)}_{AB}$ is the smooth concatenation of two rays and a circular arc, where the rays are parallel to the rays of $\W(A^S,B^S)$, and where the arc is an arc of the disc $D_2(A^S-B^T)$, of central angle $\pi-\theta$, where $\theta$ is the angle of $\W(A^S,B^S)$.
The same holds for $\V^{(2)}_{AB}$, with the same disc $D_2(A^S-B^T)$. Hence the boundaries of $\V^{(1)}_{AB}$ and of $\V^{(2)}_{AB}$ (more precisely, the circular portions of these boundaries) might overlap. See
Figure~\ref{fig:overlap_arcs} for an illustration.

 \begin{figure}
	\centering
	\includegraphics[trim=70 150 80 100, clip, width=0.3\textwidth]{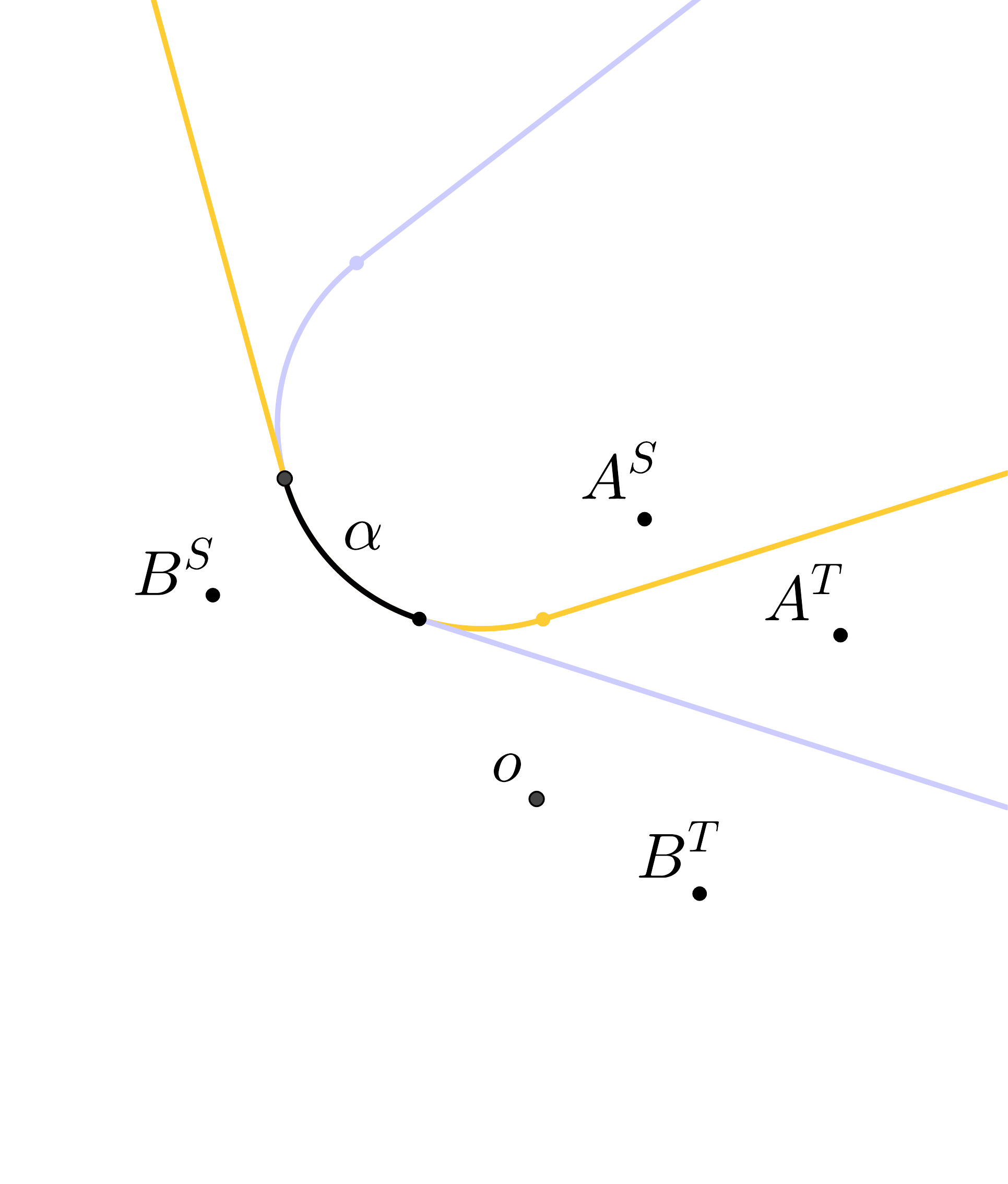}
	\caption{\sf The vippodromes $\V^{(1)}_{AB}$ and $\V^{(2)}_{AB}$, colored in blue and orange, respectively. The arc $\alpha$ is the overlap portion of the two vippodrome boundaries.}
	\label{fig:overlap_arcs}
\end{figure}

Using vippodromes, we can give, as promised, a scenario of a start and target configurations that admit no valid translation even though the discs of each configuration do not touch each other.
Such a scenario is depicted in the self-explanatory Figure~\ref{fig:labeled_infeasible_nontangent}.

\begin{figure}
	\begin{minipage}[c]{0.5\textwidth}
		\centering
		\includegraphics[trim=0 190 0 220, clip, width=0.7\textwidth]{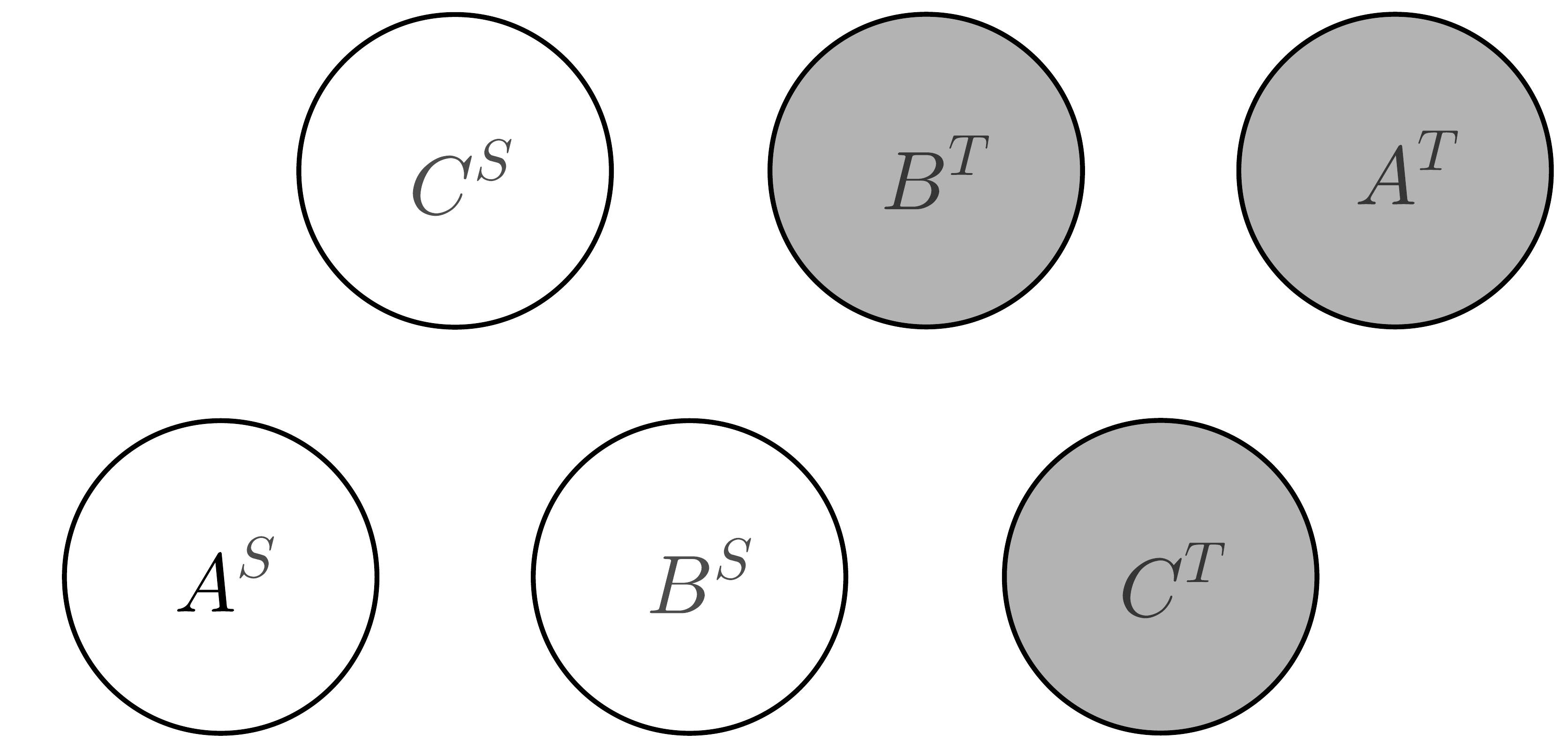}
	\end{minipage}
	\begin{minipage}[c]{0.5\textwidth}
		\centering
		\includegraphics[trim=0 0 0 0, clip, width=.7\textwidth]{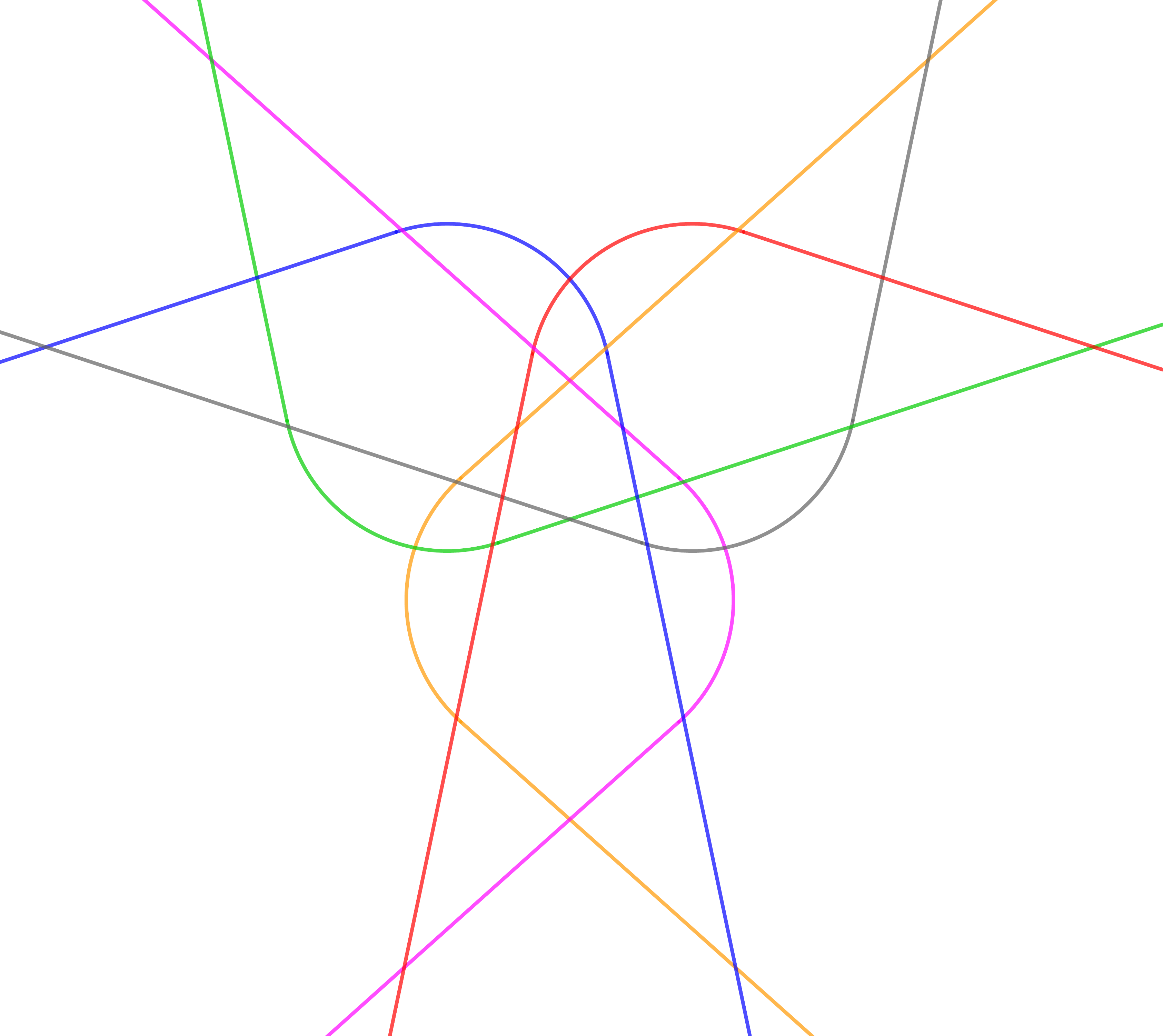}
	\end{minipage}
	\caption{\sf A scenario where no valid translation exists. 
		Left: The start and target configurations.
		Right: All the 12 vippodromes are drawn (in the translation plane).
		Observe that the vippodromes $\V^{(2)}_{AB}$ and $\V^{(1)}_{BA}$ coincide (their boundaries are drawn in orange), and so does every other similar pair of vippodromes for the other pairs of positions.
		Each point in the translation plane is covered by at least two vippodromes of contradicting constraints (e.g., for points inside the orange vippodrome, $A$ has to perform a motion before $B$ and vice versa), and so every translation is invalid (its TBG contains a 2-cycle).}
	\label{fig:labeled_infeasible_nontangent}
\end{figure}

It is clear that, unless they partially overlap, any pair of vippodrome boundaries intersect in a constant number of points.
The following lemma gives a sharp estimate on the number of intersections.

\begin{lemma}
	\label{lemma:intersect_four}
	For any $i,j \in \set{1,2}$ and any pairs $A \neq B, C\neq D \in M$, the vippodromes boundaries $\bd \V^{(i)}_{AB}, \bd \V^{(j)}_{CD}$ intersect at most four times.
\end{lemma}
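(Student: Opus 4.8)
The plan is to exploit the very rigid shape of a vippodrome boundary established right after Lemma~\ref{lemma:vippodrome_construction}: by that lemma each vippodrome is a Minkowski sum of a (translated, possibly reflected) wedge with $D_2(o)$, hence a \emph{convex} region, and its boundary is the smooth concatenation of two rays with a single circular arc \emph{of radius $2$}, the two rays being tangent to that arc. Write $\beta_1:=\bd\V^{(i)}_{AB}$ and $\beta_2:=\bd\V^{(j)}_{CD}$, and let $K_1,K_2$ be the two shifted wedges, so that $\V^{(i)}_{AB}=K_1\oplus D_2(o)$ and $\V^{(j)}_{CD}=K_2\oplus D_2(o)$; thus $\beta_1=\set{x\mid \dd(x,K_1)=2}$ and similarly for $\beta_2$. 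Since on $\beta_1$ we have $\dd(x,K_1)=2$, a point of $\beta_1$ lies on $\beta_2$ exactly when $\dd(x,K_2)=2$ as well. I would therefore recast the lemma as the statement that the convex function $g(x):=\dd(x,K_2)-2$ has at most four zeros along the convex curve $\beta_1$.

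Because the sublevel set $\set{g\le 0}=\V^{(j)}_{CD}$ is convex, the plan is to bound the number of connected components of $\beta_1\cap\V^{(j)}_{CD}$: if there are at most two such components, then, since every finite endpoint of a component is a transversal crossing with $\beta_2$, we get at most four crossings (an unbounded component, running off to infinity inside the cone $K_2$ along a ray of $\beta_1$, contributes fewer). The three building blocks I would use are: (i) convexity of $\V^{(j)}_{CD}$ forces its intersection with each of the two straight rays of $\beta_1$ to be a single (possibly empty or unbounded) interval, i.e.\ one component per ray; (ii) the equal-radius fact, that the arcs of $\beta_1$ and $\beta_2$ lie on two circles of radius $2$, which therefore meet in at most two points, so the two arcs cross at most twice; and (iii) the smoothness of $\beta_1$ at its two ray--arc junctions, which lets neighbouring pieces of $\beta_1\cap\V^{(j)}_{CD}$ merge rather than multiply.

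The hard part is to combine (i)--(iii) into the clean bound of two components (equivalently, to prune the spurious crossings that a naive piece-by-piece count would allow). The danger is that the single arc of $\beta_1$ dips into $\V^{(j)}_{CD}$ in a stretch separated from the intervals contributed by the two rays, producing a third component and hence six crossings; I note that the equal-radius hypothesis is precisely what excludes the ``square-versus-circle'' type configurations in which a convex curve meets a convex boundary many times. I expect to close this by a global argument on outward normals: along the convex curve $\beta_1$ the outward normal sweeps an arc of directions of measure $\pi-\theta_1<\pi$ (constant along each ray, monotone along the arc), and likewise for $\beta_2$; tracking which of $\beta_1,\beta_2$ supplies each point of the convex, hence simple, boundary $\bd(\V^{(i)}_{AB}\cap\V^{(j)}_{CD})$ as the normal turns once around, the forced-$\beta_1$ and forced-$\beta_2$ normal ranges each being an arc of length $<\pi$, the only remaining ambiguity lives in the two overlap ranges, where the competing pieces are the two equal-radius arcs, whose at-most-two crossings by (ii) cap the total. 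A short case check then exhibits a configuration attaining four crossings, so the estimate is sharp.
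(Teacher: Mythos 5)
Your framework is genuinely different from the paper's (which never uses convexity of the vippodromes; it decomposes the underlying \emph{wedges} and applies a pseudodisc theorem), and several of your ingredients are correct: each vippodrome $\V^{(i)}_{AB}$ is indeed convex, a ray meets a convex set in a single interval, two radius-$2$ circles cross at most twice, and the reduction of the lemma to counting alternations of $\beta_1$- and $\beta_2$-pieces along the convex curve $\bd\bigl(\V^{(i)}_{AB}\cap\V^{(j)}_{CD}\bigr)$ is sound (writing $\beta_1,\beta_2$ for the two boundaries, $a_1,a_2$ for their circular arcs, and $N_1,N_2$ for the arcs of outward normal directions of the two regions). The gap is in the closing Gauss-map argument. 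A small point first: there is at most \emph{one} overlap range, not two — if $N_1\cap N_2$ had two components then $N_1\cup N_2$ would be the whole circle of directions, impossible when $|N_1|,|N_2|<\pi$. The serious problem is the claim that inside the overlap range ``the competing pieces are the two equal-radius arcs.'' The normal of each ray is an \emph{endpoint} of its normal arc, and whenever $N_1$ and $N_2$ overlap, an endpoint of $N_1$ lies inside $N_2$ and an endpoint of $N_2$ lies inside $N_1$; so ray pieces of one boundary can perfectly well alternate with arc (or ray) pieces of the other inside the overlap range.

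Because of this, your stated ingredients — convexity, at most two arc--arc crossings, smooth junctions, and a monotone normal sweep with $|N_1|,|N_2|<\pi$ — do not cap the count at four. For instance, the cyclic pattern of pieces along $\bd\bigl(\V^{(i)}_{AB}\cap\V^{(j)}_{CD}\bigr)$ given by $[$ray of $\beta_1]$, $[$ray of $\beta_2]$, $[a_2]$, $[a_1]$, $[$other ray of $\beta_2]$, $[a_1$ again$]$ has six corners — one ray--ray, four ray--arc, and a single arc--arc crossing — and it satisfies every constraint on your list, including the normal-range bookkeeping (each ray used once, arc--arc budget not exceeded, normal ranges of the pieces in monotone cyclic order inside arcs of length $<\pi$). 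Ruling out such patterns requires information about how the two underlying \emph{wedges} interact, which your argument never touches. For example, if the two arcs do cross twice, then each $N_i$, being an arc of length $<\pi$ containing both radial directions to the crossing points, must contain the direction toward the other center, which forces the two wedges to lie in opposite halfplanes of a slab — i.e., to be interior-disjoint. This interaction is exactly what the paper's proof supplies: it splits one wedge relative to the other into at most two convex pieces, each interior-disjoint from the first wedge, maps every boundary crossing to its nearest point on the wedge, and invokes the result of \cite{DBLP:journals/dcg/KedemLPS86} that boundaries of Minkowski sums of interior-disjoint convex sets with a common disc cross at most twice. Without an ingredient of this kind (or a much finer positional analysis of where crossing normals can sit, which your sketch does not contain), the count cannot be closed at four.
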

\begin{proof}
Consider for simplicity two vippodromes of the form $\V^{(1)}_{AB}$ and $\V^{(1)}_{CD}$, and the corresponding wedges $U = \W(A^S,B^S) - B^T$ and $V = \W(C^S,D^S) - D^T$.
Observe that $\bd U$ and $\bd V$ intersect in at most four points, because each boundary consists of two rays.
Let $Z = U\cap V$; it is a possibly unbounded convex polygon with at most four edges.
Let $V' = V \setminus Z$.
It is easy to see that at most one vertex of $Z$ can lie in the interior of $V$.
If there is such a vertex then $V'$ is connected but not convex.
If there is no such vertex, then $V'$ consists of at most two connected components (it can also be empty), and each component is convex.
See Figure~\ref{fig:intersection_options} for an illustration.

\newcommand{\intersectionssubfigurewidth}{0.7}
\begin{figure}[h!]
	\centering
	\begin{subfigure}{.3\linewidth}
		\centering
		\includegraphics[trim=0 100 10 0,clip,width=\intersectionssubfigurewidth\textwidth]{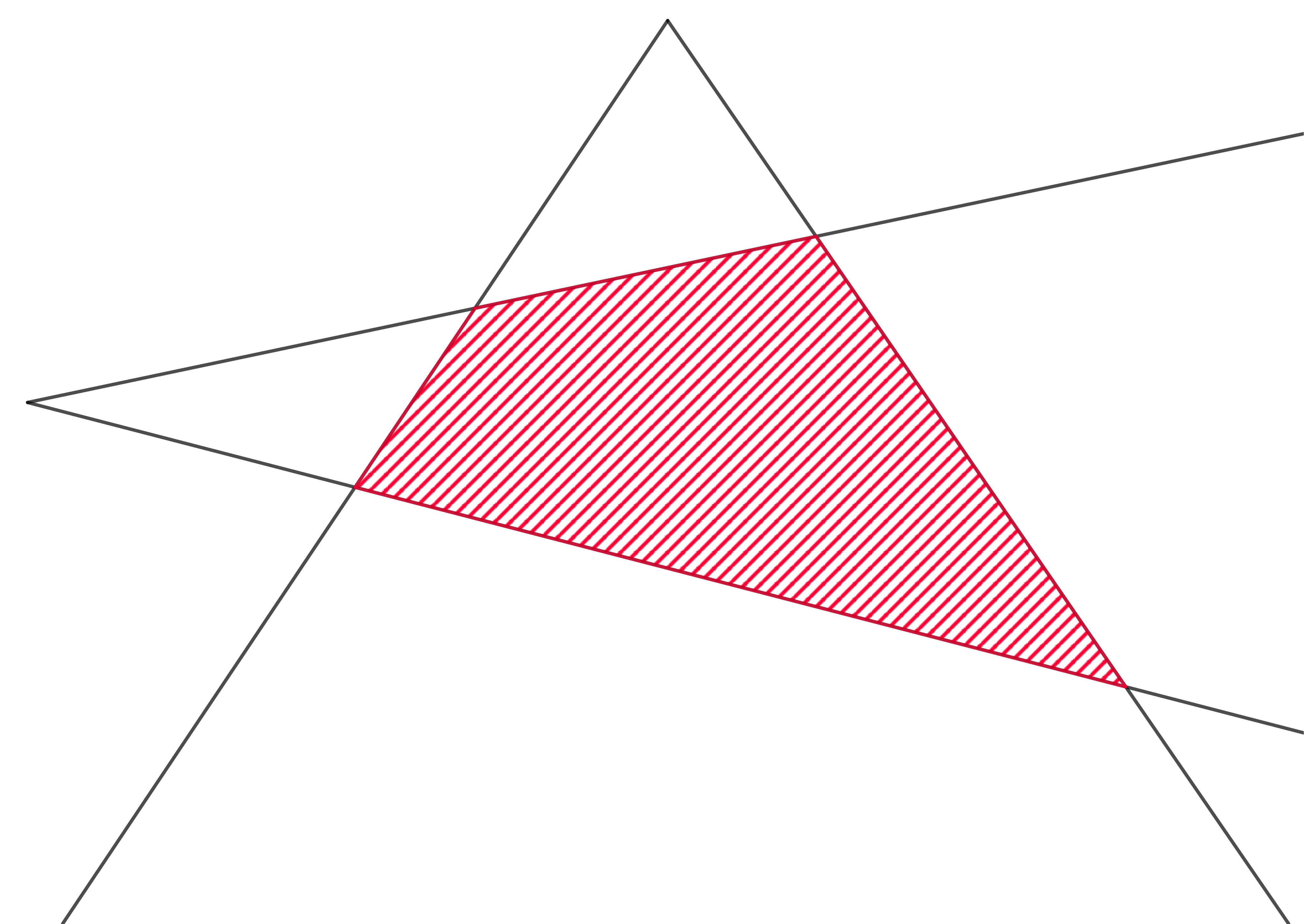}
		\caption{}
	\end{subfigure}%
	\begin{subfigure}{.3\linewidth}
		\centering
		\includegraphics[trim=0 100 10 0,clip,width=\intersectionssubfigurewidth\textwidth]{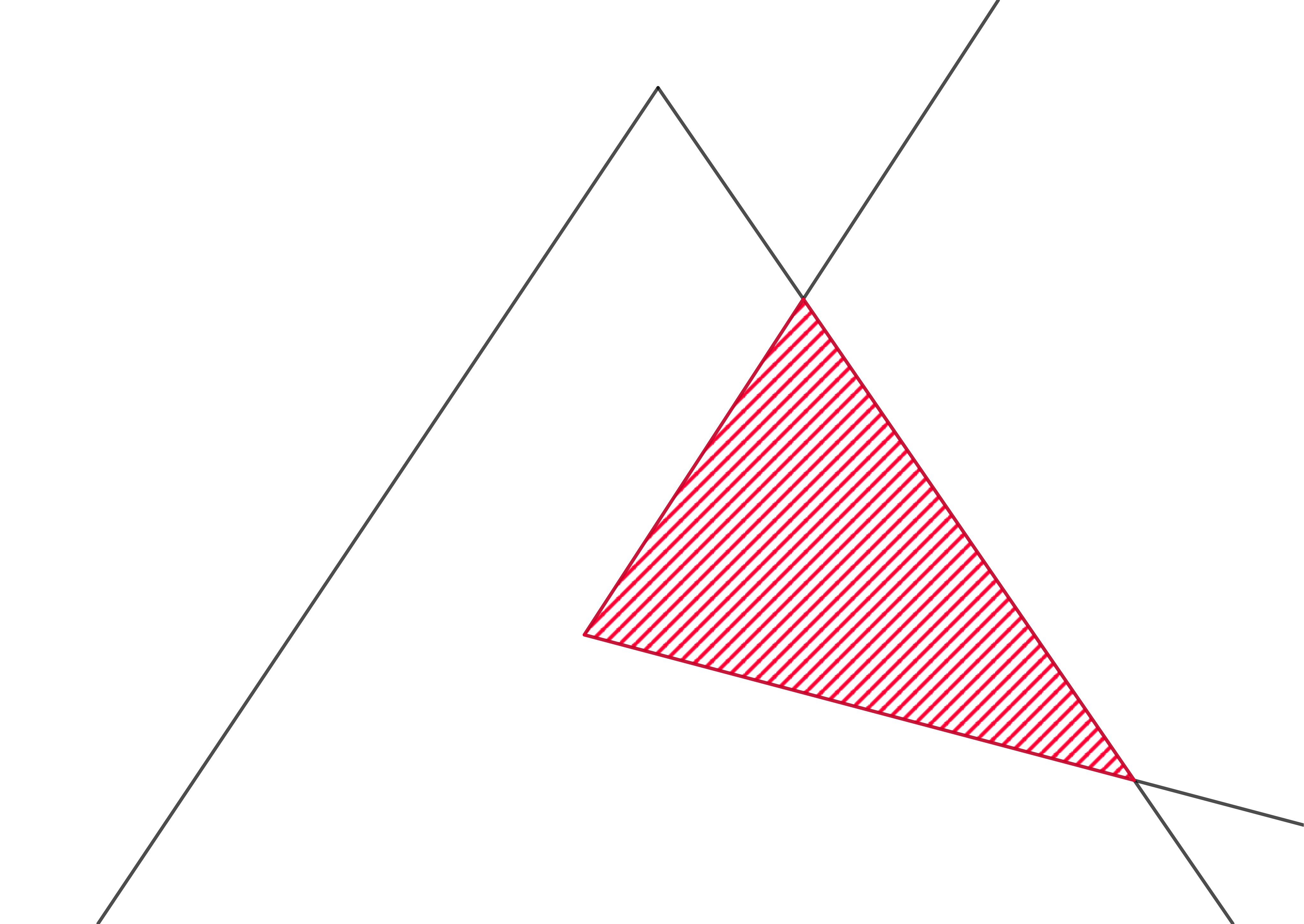}
		\caption{}
	\end{subfigure}\\[1ex]
	\begin{subfigure}{.3\linewidth}
		\centering
		\includegraphics[trim=0 100 10 0,clip,width=\intersectionssubfigurewidth\textwidth]{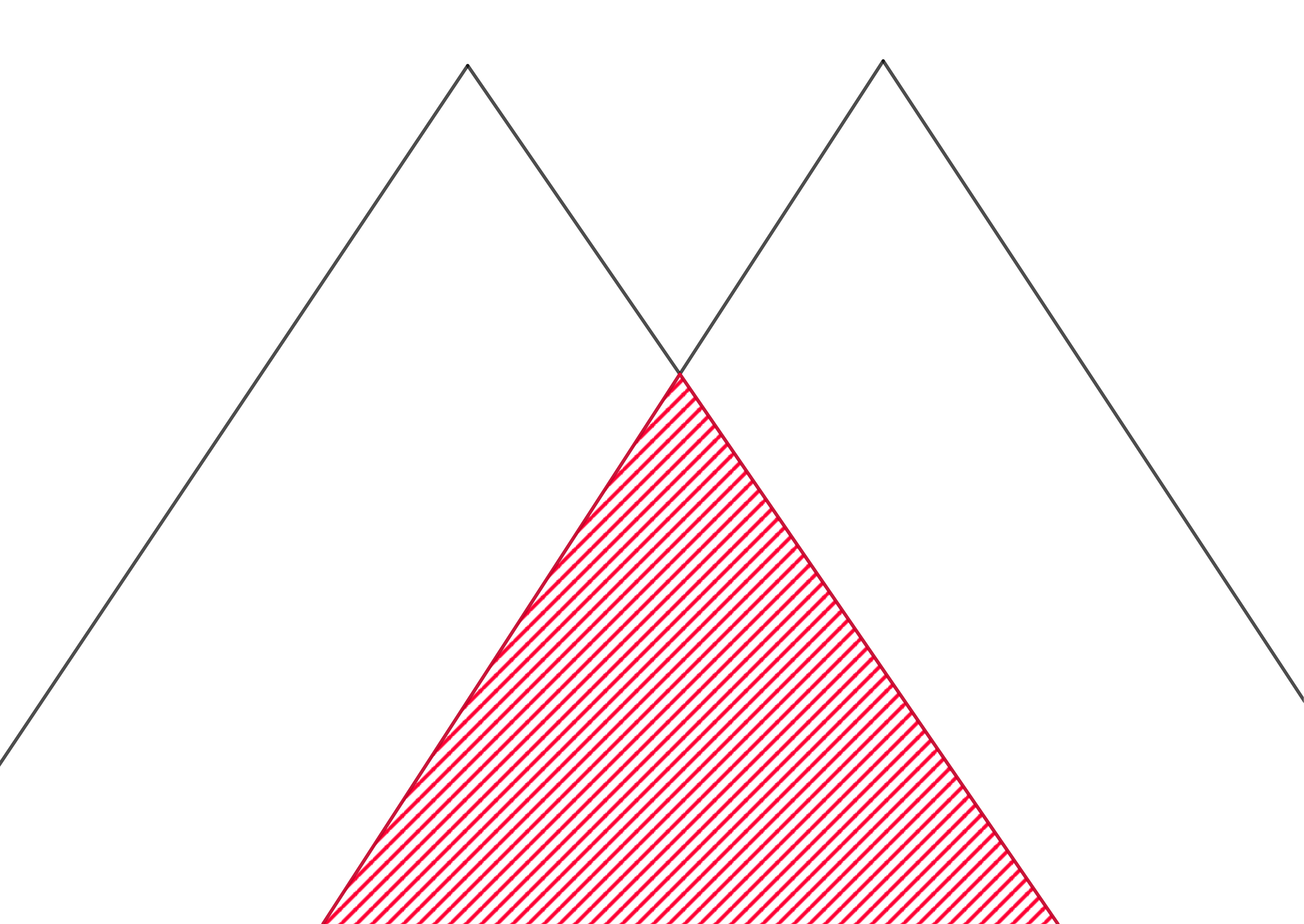}
		\caption{}
	\end{subfigure}
	\begin{subfigure}{.3\linewidth}
		\centering
		\includegraphics[trim=0 100 10 0,clip,width=\intersectionssubfigurewidth\textwidth]{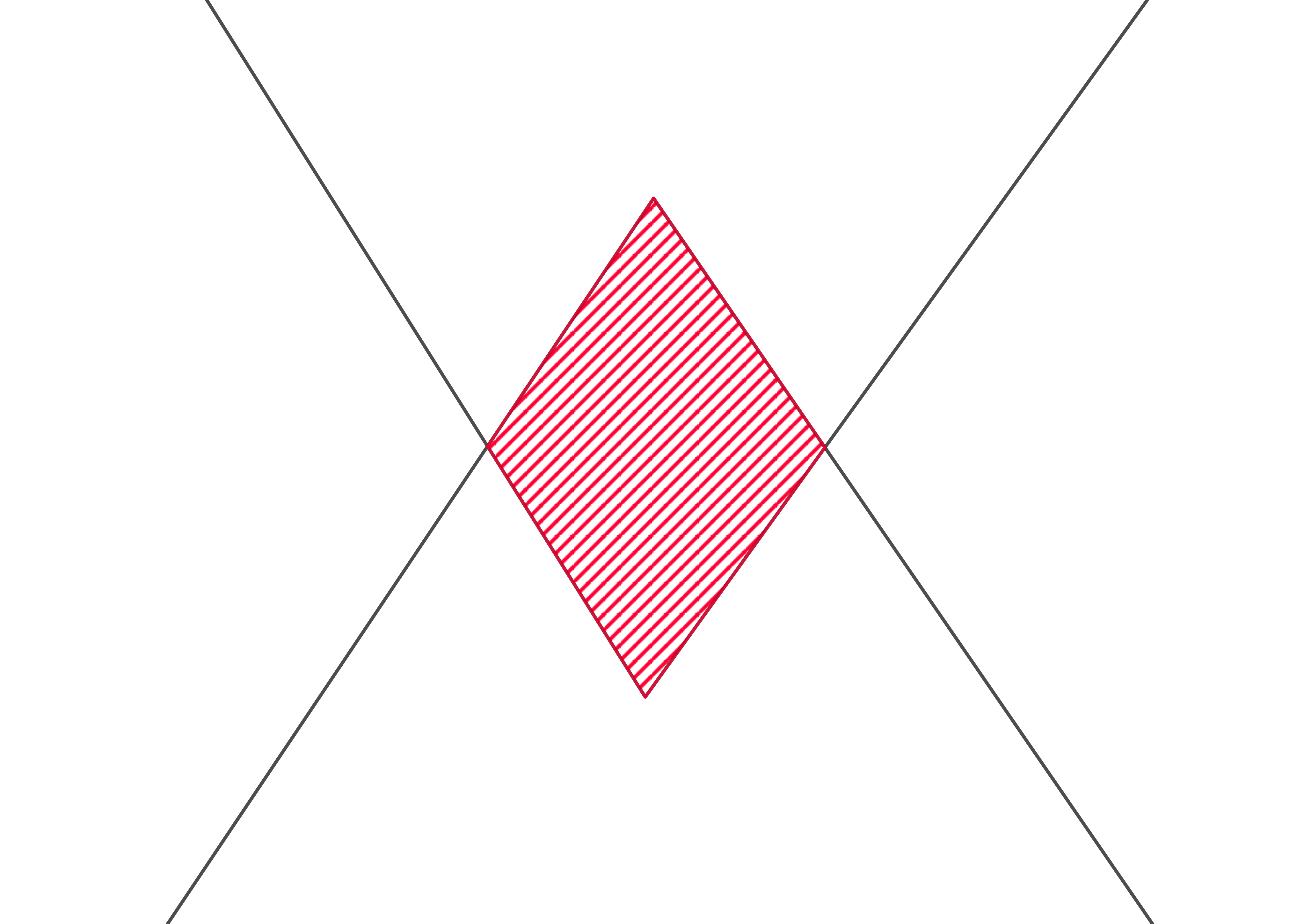}
		\caption{}
	\end{subfigure}
	\caption{\sf 
		Some of the possible intersections of $U$ and $V$, with the intersection $Z$ highlighted.
		In all cases, $V'$ consists either of one non-convex connected component, see (b) and (d), or of at most two convex connected components.}
	\label{fig:intersection_options}
\end{figure}

Let $q$ be an intersection point of $\bd \V^{(1)}_{AB}$ and $\bd \V^{(1)}_{CD}$; see Figure~\ref{fig:intersect_four}.
By (\ref{vips-and-wedges}), $q$ lies at distance $2$ from $\bd U$ and $\bd V$. Let $p$ be the point on $\bd V$ nearest to $q$.
It is easily checked that $p$ lies on $\bd V'$, so it lies either on $\bd V'_1$ or on $\bd V'_2$.
Since $V'_1$ and $U$ are interior-disjoint and convex, it follows from \cite{DBLP:journals/dcg/KedemLPS86} that the boundaries of their expansions by $D_2(0,0)$ intersect at most twice (unless they partially overlap).
The same holds for $V'_2$ and $U$, implying that $\bd \V^{(1)}_{AB}$ and $\bd \V^{(1)}_{CD}$ intersect in at most four points, as claimed.
\end{proof}

\begin{figure}
	\centering
	\includegraphics[trim=0 100 10 0,clip,width=0.5\textwidth]{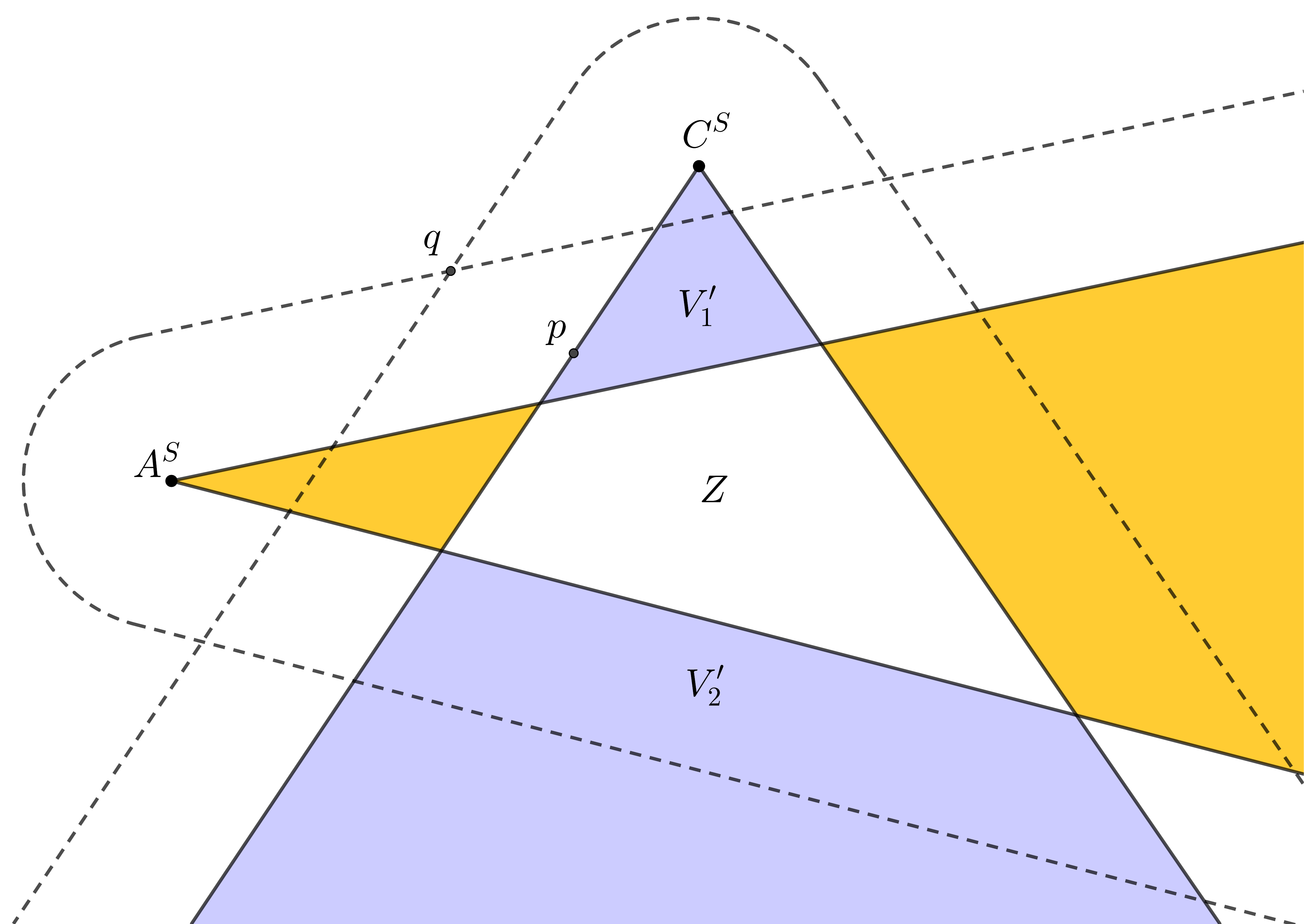}
	\caption{\sf Two vippodrome boundaries $\bd \V^{(1)}_{AB}$ and $\bd \V^{(1)}_{CD}$ marked by dashed curves.
	$V'$ and $U'= U\setminus Z$ are colored in blue and orange, respectively.
	Note that $p$ cannot lie on $\bd Z$.}
	\label{fig:intersect_four}
\end{figure}

\subsection{Constructing the Set of Valid Translations}
\label{subsec:labeled_Q}

Let $\V^\bd = \set{\bd \V^{(i)}_{AB} \mid i \in \set{1,2}, A \neq B \in M}$ and observe that $|\V^\bd| = 2n(n-1)$.
Define the vippodrome arrangement $\VA$, induced by $M$, to be the arrangement formed by the curves of $\V^\bd$; it is an arrangement of $O(n^2)$ rays and circular arcs.
Assuming general position, the only overlaps between features of the arrangement are between circular arcs of the two vippodromes of the same ordered pair $A,B$ (see Figure~\ref{fig:overlap_arcs} again).
To avoid these overlaps, we partition each of these arcs into two subarcs at the point where the overlap begins or ends (so each circle $\bd D_2(A^S-B^T)$ contributes at most three arcs to the arrangement).
It is worth mentioning that it is fairly easy to handle instances that are not in general position, where degeneracies may appear, such as collinear rays, or overlapping circular arcs, of two unrelated vippodromes.
In the rest of the \paper, we assume general position of pairs of unrelated vippodromes, to simplify the presentation.
Nevertheless, degeneracies can also be handled by suitable (and standard) extensions of our techniques.
We do, however, allow tangency between discs, which will require some special consideration.

We now observe that any pair of features (rays and circular arcs) of the (modified) arrangement intersect at most twice.
Hence the number of vertices in $\VA$ is at most $O(n^4)$, and so the overall complexity of the arrangement is also $O(n^4)$.
Consider a face $f$ of $\VA$.
We showed that for every ordered pair of pairs $(A,B) \in M^2$, $A \neq B$, the edge $AB$ is either in every graph $\Gv$, for $\vec{v} \in f$, or in none of these graphs.
Hence all the graphs $\Gv$, for $\vec{v} \in f$, are identical, and we denote this common graph as $G_f$.

We can construct $\VA$ either in $O(n^4 \log n)$ time using a plane-sweep procedure, or in $O(n^2 \lambda_4 (n^2))$ time,%
\footnote{Here $\lambda_s(m)$ denotes the maximum length of a Davenport-Schinzel sequence of order $s$ on $m$ symbols; see \cite{DBLP:books/daglib/0080837} for details.}
using the incremental procedure described in \cite[Theorem 6.21, p.~172]{DBLP:books/daglib/0080837}.

After $\VA$ has been constructed, we traverse its faces and construct the graphs $G_f$, over all faces $f$, noting that when we cross from a face $f$ to an adjacent face $f'$, the graph changes by the insertion or deletion of a single edge.%
\footnote{As already mentioned, although we assume general position, circular arcs of vippodromes may still overlap (see Figure~\ref{fig:overlap_arcs}).
Notice, however, that the overlapping arcs bound vippodromes that induce the same constraint on the itinerary and hence crossing the overlapping arcs still incurs insertion or deletion of a single edge to the graph.} We then test each graph $G_f$ for acyclicity.
The union of (the closure of) all the faces $f$ for which $G_f$ is acyclic is the desired region $Q$ of valid translations.
For each face $f$ that participates in $Q$, we run a linear-time procedure for topological sorting of $G_f$, and the order that we obtain\footnote{In general, $G_f$ can have exponentially many topological orders, each of which yields a valid itinerary.} defines a valid itinerary for all translations $\vec{v}\in f$.
The running time, for a fixed face $f$, is $O(n+m_f)$, where $m_f$ is the number of edges of $G_f$.
In the worst case we have $m_f = O(n^2)$, so the overall cost of the algorithm is $O(n^6)$.

We have thus obtained the following main result of this \ssection.

\begin{theorem}
\label{theorem:constructing_q}
Given a labeled instance ${\rm LST}(S,T,M)$ of the reconfiguration problem, with a valid start configuration $S$, a valid target configuration $T$, of $n$ points each, and a matching $M$ between $S$ and $T$, we can compute the region of all valid translations for $T$ in $O(n^6)$ time.
\end{theorem}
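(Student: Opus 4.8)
The plan is to reduce the construction of $Q$ to building the arrangement $\VA$ of vippodrome boundaries, exploiting the fact, established above, that the translation blocking graph $\Gv$ is invariant across each open face of this arrangement. By Theorem~\ref{theo:abellanas} and the discussion following Lemma~\ref{lemma:constraint_conditions}, a translation $\vec{v}$ is valid exactly when $\Gv$ is acyclic; and by Lemma~\ref{lemma:vippodrome_construction} the presence of each directed edge $AB$ in $\Gv$ is governed solely by whether $\vec{v}$ lies in $\V^{(1)}_{AB}$ or in $\V^{(2)}_{AB}$. Since membership in any fixed vippodrome can change only when $\vec{v}$ crosses a boundary curve of $\V^\bd$, the graph $\Gv$ is constant on each face of $\VA$. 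Hence it suffices to compute one representative graph $G_f$ per face, test it for acyclicity, and take $Q$ to be the union of the closures of the faces whose graph is acyclic.

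First I would bound the combinatorial complexity of $\VA$. There are $|\V^\bd| = 2n(n-1) = O(n^2)$ boundary curves, each a smooth concatenation of two rays and a circular arc. After splitting the circular arcs of paired vippodromes at their overlap endpoints, every pair of boundary curves meets in at most a constant number of points — at most four, by Lemma~\ref{lemma:intersect_four} — so the arrangement has $O(n^4)$ vertices, edges, and faces. I would construct it either by a plane-sweep in $O(n^4 \log n)$ time, or, to shave the logarithmic factor, by the incremental construction for well-behaved arcs at a cost of $O(n^2 \lambda_4(n^2))$.

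Next I would compute the graphs $G_f$ by a traversal of the (dual) adjacency graph of the faces. Fixing a seed face, I would build its TBG from scratch in $O(n^2)$ time; the key observation is then that crossing a single edge of $\VA$ toggles membership in exactly one vippodrome, and hence inserts or deletes exactly one directed edge of the current TBG. (The overlapping circular arcs of $\V^{(1)}_{AB}$ and $\V^{(2)}_{AB}$ need care, but both encode the same ordering constraint between $A$ and $B$, so crossing such an overlap still changes the graph by a single edge.) At each face I would test $G_f$ for acyclicity and, when acyclic, perform a topological sort to extract a valid itinerary common to all $\vec{v}\in f$, each such test running in $O(n+m_f)=O(n^2)$ time.

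The dominant cost, and the step I expect to be the main obstacle, is this per-face acyclicity testing: there are $O(n^4)$ faces and each test may cost $\Theta(n^2)$ in the worst case (when $m_f=\Theta(n^2)$), which gives the overall $O(n^6)$ bound and swamps the $O(n^4\log n)$ construction of $\VA$. The delicate point is justifying that one cannot cheaply reuse acyclicity information across neighbouring faces: a single edge insertion can create a cycle and a single deletion can destroy one, and maintaining a topological order under arbitrary edge insertions and deletions in a dense digraph does not obviously admit a sublinear amortized update. I would therefore recompute acyclicity independently at each face, which yields the claimed $O(n^6)$ running time.
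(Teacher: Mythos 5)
Your proposal is correct and follows essentially the same route as the paper: build the arrangement $\VA$ of the $O(n^2)$ vippodrome boundaries (complexity $O(n^4)$), exploit the invariance of the TBG on each face, maintain the graph by single edge insertions/deletions while traversing adjacent faces, and test each face's graph for acyclicity from scratch at $O(n^2)$ per face, yielding $O(n^6)$ overall. Even your closing observation---that the per-face acyclicity test is the bottleneck and that dynamic cycle-detection does not obviously help---mirrors the remark the paper makes after the theorem.
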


\medskip
\noindent{\bf Remark.} Clearly, Theorem~\ref{theorem:constructing_q} implies that we can find a valid translation, if one exists, within the same time bound $O(n^6)$.

\medskip
\noindent{\bf Remark.}
The bottleneck that determines the efficiency of the algorithm in Theorem~\ref{theorem:constructing_q} is the cost of testing for acyclicity of each of the graphs $G_f$ from scratch.
It would be interesting to see whether dynamic algorithms for maintaining acyclicity in a directed graph, under insertions and deletions of edges (namely, fully dynamic cycle detection algorithms), could be applicable when we traverse the faces of $\VA$.
Such algorithms can be found in \cite{DBLP:journals/jea/PearceK06}, but they do not seem to improve the asymptotic running time of our algorithm.
The only efficient algorithms that we are aware of, those with low total update time, only support insertions of edges but not deletions; see~\cite{bernstein2018incremental,cycleDetection}.
If a dynamic algorithm of this kind, that can handle both insertions and deletions on a prespecified sequence of operations, were available, with sublinear update time for each insertion and deletion, it would clearly improve the total running time of our procedure of finding $Q$, as well as the space-aware optimizations that are presented in the next \ssection.

In the algorithms of \Section~\ref{section:labeled_sa}, we need to minimize some function of $\vec{v}$ along the boundaries of the vippodromes (edges of $\VA$).
Since we require the translations to be valid, we are only interested in the valid intervals along the boundaries of the vippodromes, namely the edges of $Q$.
Moreover, for each vippodrome boundary, we will need to search, for any query point and direction along the boundary, for the nearest valid interval to the point in this direction.
Instead of iterating over each edge of $\VA$ or of $Q$ (which might take overall $O(n^4)$ time), we construct, as a by-product of the construction of $Q$, an auxiliary data structure $\D_Q(\gamma)$, whose performance is given in the following lemma.

\begin{lemma}
	\label{lemma:dq}
	We can construct, for each vippodrome boundary $\gamma$, a ``ray-shooting'' data structure $\D_Q(\gamma)$, so that each query to $\D_Q(\gamma)$ is a triplet $(a,d,b)$, where $a\in\gamma$ is a starting point, $d$ is a direction, which can be clockwise or counterclockwise along $\gamma$, and $b\in\gamma$ is a limit point beyond which the search stops, and which can be $\infty$ if the interval along $\gamma$ is unbounded.
	The answer to the query is the interval of $\gamma \cap Q$ nearest to $a$ along $\gamma$ in the direction $d$, which starts before we reach $b$, or else is an indication that no such interval exists.
	If $a \in Q$, it is reported as the beginning of the desired interval.
	$\D_Q(\gamma)$ can be constructed in $O(n^2)$ time, and each query takes $O(\log n)$ time.
\end{lemma}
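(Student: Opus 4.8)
The plan is to build, for each vippodrome boundary $\gamma$, a one-dimensional ``interval map'' of $\gamma \cap Q$ and to answer queries by binary search on it. First I would fix a monotone parametrization of $\gamma$ (say by arc length, with $\gamma$ split at its two ray--arc junctions so that the parameter increases monotonically from one unbounded end to the other). By Lemma~\ref{lemma:intersect_four}, each of the other $2n(n-1)-1 = O(n^2)$ vippodrome boundaries crosses $\gamma$ at most four times, so $\gamma$ carries at most $O(n^2)$ vertices of $\VA$, which cut it into $O(n^2)$ arcs; these are precisely the edges of $\VA$ lying on $\gamma$. Crucially, I would not re-sort these crossings: the doubly-connected edge list produced while building $\VA$ already lists the edges along $\gamma$ in their order along the curve, so they are obtained, together with the records of their two incident faces, in $O(n^2)$ time.

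Next I would classify each such arc as valid or invalid. Since $Q$ is the closure of the union of the faces $f$ with $G_f$ acyclic, a point in the relative interior of an arc of $\gamma$ lies in $Q$ if and only if at least one of the two faces incident to that arc is valid. These per-face validity labels are computed anyway during the construction of $Q$ (when each $G_f$ is tested for acyclicity), so deciding the status of every arc of $\gamma$ is a free by-product: I would simply read off, for each edge on $\gamma$, the labels of its two incident faces. Merging maximal runs of consecutive valid arcs then yields the maximal valid intervals of $\gamma \cap Q$; I would store their parameter endpoints in a single sorted array, which is $\D_Q(\gamma)$. All of this is a left-to-right scan over the $O(n^2)$ arcs of $\gamma$, hence $O(n^2)$ time and space.

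To answer a query $(a,d,b)$, I would binary-search the array for the parameter of $a$. If $a$ falls inside a valid interval, that interval is reported, with $a$ itself as its beginning (as required when $a \in Q$); otherwise the array immediately gives the neighboring valid interval in the direction $d$ (clockwise or counterclockwise, corresponding to decreasing or increasing parameter), and a single comparison of its starting parameter against that of $b$ decides whether it begins before $b$, or whether the answer is ``no such interval.'' Each query thus costs one binary search, i.e.\ $O(\log n)$ time.

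The delicate points are bookkeeping ones rather than genuine obstacles. I must ensure the parametrization is globally monotone across the two rays and the circular arc, so that the crossings are totally ordered and the binary search is legitimate; I must treat boundary arcs (incident to one valid and one invalid face) as belonging to $Q$, consistently with $Q$ being closed; and I must respect the earlier splitting of the two circular arcs of $\V^{(1)}_{AB}$ and $\V^{(2)}_{AB}$ at their overlap endpoints (see Figure~\ref{fig:overlap_arcs}), so that overlapping subarcs of a related pair---which impose the same constraint---do not create spurious interval endpoints. Handling tangency between discs, which we allow, requires the same care as elsewhere in the construction but introduces no new difficulty here. The main thing to verify is that the edges along $\gamma$ really are available in sorted order directly from $\VA$, which is what keeps the construction at $O(n^2)$ rather than $O(n^2\log n)$.
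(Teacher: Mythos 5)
Your proposal is correct and follows essentially the same route as the paper: the paper also builds $\D_Q(\gamma)$ as a sorted search structure (a balanced binary search tree rather than your sorted array, an immaterial difference) over the endpoints of the arcs of $\gamma \cap Q$, obtained as a by-product of constructing $Q$, with $O(n^2)$ construction time and storage and $O(\log n)$ binary-search queries. The paper states this in three lines and calls the construction ``trivial''; your write-up simply spells out the bookkeeping (the ordered edge list along $\gamma$ from the arrangement, validity read off from incident-face labels) that the paper leaves implicit.
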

\begin{proof}
	We construct $\D_Q(\gamma)$ as a balanced binary search tree of the endpoints of the arcs of $\gamma \cap Q$.
	It can be trivially built in $O(n^2)$ time while constructing $Q$, requires $O(n^2)$ storage (as there are $O(n^2)$ other vippodrome boundaries that can intersect $\gamma$) and answers a query in $O(\log n)$ time.
\end{proof}

Let $\D_Q$ denote the collection of the structures $\D_Q(\gamma)$, over all vippodrome boundaries $\gamma$.
$\D_Q$ can be constructed in $O(n^4)$ time while constructing $Q$, and its overall complexity is $O(n^4)$.
	
	\newpage
	\section{Labeled Version: Space-Aware Optimization}
\label{section:labeled_sa}

In \Subsections~\ref{subsec:labeled_short_vector},~\ref{subsec:labeled_aabr}, and~\ref{subsec:labeled_sed}, we study the following three respective variants of the optimization criteria for the \salst{} problem:
\begin{enumerate}[(i)]
	\item \salstv{$S$,$T$,$M$}, for minimizing the length of the translation vector $\vec{v}$.
	\item \salsta{$S$,$T$,$M$}, for minimizing the area of the axis-aligned bounding rectangle of $D(S) \cup D(T+\vec{v})$, denoted as $\aabr{D(S) \cup D(T + \vec{v})}$.
	\item \salstd{$S$,$T$,$M$}, for minimizing the area of the smallest enclosing disc of $D(S) \cup D(T+\vec{v})$, denoted as \SED{$(D(S) \cup D(T+\vec{v}))$}.
\end{enumerate}

Our algorithms begin by computing $Q$ and $D_Q$, in $O(n^6)$ time, using the algorithm of the previous \ssection.
They then use $D_Q$ to solve the respective optimization problems, in time that is substantially smaller---it is $O(n^2 \log n)$ for problems (i) and (ii), and $O(n^5\log n)$ for problem (iii).
The results and bounds presented in this \ssection cater only to the optimization phases.

\subsection{Minimizing the Translation Vector}
\label{subsec:labeled_short_vector}
We define \salstv{$S$,$T$,$M$} as the following problem:
Given an \lst{} instance, of two configurations $S$ and $T$, and a matching $M$ between $S$ and $T$, find a valid translation $\vec{v} \in \R^2$, with respect to $M$, such that $|\vec{v}|$ is minimized.%
\footnote{Recall that we assume that at translation $\vec{v} = \vec{0}$ the centers of mass of $S$ and $T +\vec{v} = T$, or the centers of their smallest enclosing discs, coincide.
This makes the shortest valid translation a plausible criterion for space-aware optimization.}

\begin{theorem}
Once $Q$ and $\D_Q$ have been computed, \salstv{$S$,$T$,$M$} can be solved in $O(n^2 \log n)$ time.
\end{theorem}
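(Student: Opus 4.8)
The plan is to observe that \salstv{$S$,$T$,$M$} is exactly the problem of finding the point of $Q$ nearest to the origin $o$, since $\vec{v}=\vec{0}$ represents the ideal placement and $|\vec{v}|$ is the Euclidean distance from $\vec{v}$ to $o$. First I would test whether $o\in Q$: if so, the optimum is $\vec{v}=\vec{0}$ with $|\vec{v}|=0$, and this membership test costs $O(\log n)$ either by point location in $\VA$ (built within the $O(n^6)$ budget) or by simply recording it while constructing $Q$. If $Q=\emptyset$ we report infeasibility. Otherwise $Q$ is closed and $o\notin Q$, so the nearest point $p_0$ of $Q$ lies on $\bd Q$, which is contained in the union of the $O(n^2)$ vippodrome boundaries $\gamma\in\V^\bd$. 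Crucially, since $p_0$ is the closest point of \emph{all} of $Q$, it is in particular the closest point of $\gamma_0\cap Q$ for the boundary $\gamma_0\ni p_0$. Hence it suffices to compute, for each vippodrome boundary $\gamma$, the point of $\gamma\cap Q$ nearest to $o$, and return the overall closest; this avoids iterating over the $O(n^4)$ edges of $Q$.

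For a fixed $\gamma$, I would exploit its simple shape. By Lemma~\ref{lemma:vippodrome_construction}, a vippodrome is a wedge expanded by $D_2(o)$, hence convex, so $\gamma$ is a convex curve consisting of two rays and a circular arc of radius $2$. The squared distance from $o$ restricted to a ray is convex, and restricted to the circular arc it has the form $A+B\cos(\theta-\phi)$, which has at most one interior critical point; therefore $\gamma$ splits into a constant number of maximal sub-arcs on each of which the distance to $o$ is monotone. On such a monotone sub-arc, the point of $\gamma\cap Q$ nearest to $o$ is precisely the first valid point encountered starting from the low-distance endpoint, which is exactly the query answered by $\D_Q(\gamma)$ of Lemma~\ref{lemma:dq}: I issue a query $(a,d,b)$ with $a$ the low-distance endpoint of the sub-arc, $d$ the direction of increasing distance, and $b$ the other endpoint. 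Taking the best candidate over the constant number of sub-arcs yields the nearest valid point of $\gamma\cap Q$.

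Each $\D_Q(\gamma)$ query costs $O(\log n)$, and only $O(1)$ queries are issued per boundary, so processing all $O(n^2)$ vippodrome boundaries takes $O(n^2\log n)$; together with the $O(\log n)$ membership test for $o$, the total is $O(n^2\log n)$ once $Q$ and $\D_Q$ are available, as claimed. The step I expect to be most delicate is the reduction itself—arguing that the global nearest point of $Q$ equals the minimum, over vippodrome boundaries, of the nearest valid point on each, and then justifying that a single ray-shooting query per monotone sub-arc recovers that nearest valid point. The one technical wrinkle is that a circular arc may contain the \emph{farthest} (rather than nearest) point of its supporting circle from $o$, making the distance hump-shaped; splitting each arc at its unique interior critical point into monotone pieces removes this issue without affecting the asymptotic bound. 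Tangencies between discs require the same minor care already flagged in the construction of $Q$ and $\D_Q$, but do not alter the analysis.
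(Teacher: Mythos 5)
Your proposal is correct and follows essentially the same route as the paper's proof: test the origin for validity, reduce the search to the $O(n^2)$ vippodrome boundaries (since the minimum of $|\vec{v}|$ over any valid face is attained on its---still valid---boundary unless the face contains the origin), split each boundary into $O(1)$ monotone subarcs of the distance function, and issue one $\D_Q(\gamma)$ ray-shooting query per subarc, for $O(n^2\log n)$ total. The only cosmetic differences are that the paper tests the origin directly in $O(n^2)$ time via Theorem~\ref{theo:abellanas} rather than by point location or a flag recorded during preprocessing, and that it explicitly handles the degenerate case where the origin is the center of the circle supporting a boundary's circular arc (so the distance is constant there), which your ``at most one interior critical point'' claim glosses over but which is trivially handled.
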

\begin{proof}
Let $\psi(\vec{v}) = |\vec{v}|$ and observe that it has a single global minimum at the origin $o$ and no other local minima.
It thus follows that $\psi$ attains its minimum over any connected region $K$ at a point on $\bd K$, unless $o \in K$.
We claim that for every valid face $f$ of $\VA$, $\bd f$ is also valid.
This is a straightforward consequence of the fact that we allow the translating disc to touch other discs (without penetrating into them).
Thus, a valid translation of minimum length $\vec{v}$ is either $o$ or a point (in the valid portion) of some vippodrome boundary.
This suggests the following procedure.
We check whether $o$ is a valid translation in $O(n^2)$ time, according to Theorem~\ref{theo:abellanas}.
If so, we report it as the desired valid translation of minimum length.
If not, iterate over all the $O(n^2)$ vippodrome boundaries.
For each such boundary $\gamma$, we find the point $\vec{v}\in \gamma$ that minimizes $\psi(\vec{v})$ (i.e., that is nearest to the origin), as follows.

It is easily checked that $\restr{\psi}{\gamma}$ has only $O(1)$ points of local extrema along $\gamma$.
(There is an exceptional situation when $o$ is the center of the circle containing the circular arc of $\gamma$.
In this case we use a single arbitrary point on this boundary as a local extremum.)
We split $\gamma$ at these points into $O(1)$ subarcs, and $\psi$ is monotone along each subarc.

Let $\bar{\gamma}$ be one of these subarcs, let $a$ denote its endpoint at which $\restr{\psi}{\bar{\gamma}}$ is minimal, let $b$ be its other endpoint (possibly $\infty$), and let $d$ denote the direction from $a$ to $b$ (along $\gamma$).
We search along $\bar{\gamma}$, using Lemma~\ref{lemma:dq}, for the nearest point to $a$ that is in $Q$.
Repeating this to each subarc of each vippodrome boundary, we get, in $O(n^2\log n)$ time, $O(n^2)$ candidate translations, and return the one with the smallest value of $\psi$, if it exists.
Hence, after having computed $Q$ and $\D_Q$, \salstv{$S$,$T$,$M$} can be solved in $O(n^2\log n)$ time, as asserted.
\end{proof}

\ifthesis
\subsection[Minimizing the Area of the Axis-Aligned Bounding Rectangle]%
{\parbox[t]{17em}{Minimizing the Area of the \\ Axis-Aligned Bounding Rectangle}}
\else
\subsection{Minimizing the Area of the Axis-Aligned Bounding Rectangle}
\fi
\label{subsec:labeled_aabr} We define \salsta{$S$,$T$,$M$} as the following problem:
Given an \lst{} instance, of two configurations $S$ and $T$, and a matching $M$ between $S$ and $T$, find a valid translation $\vec{v} \in \R^2$, with respect to $M$, such that the area of $\aabr{D(S) \cup D(T+\vec{v})}$ is minimized.%

\begin{theorem}
Once $Q$ and $\D_Q$ have been computed, \salsta{$S$,$T$,$M$} can be solved in $O(n^2\log n)$ time.
\end{theorem}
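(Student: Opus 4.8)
The plan is to exploit the product structure of the objective. Write $\vec{v}=(v_x,v_y)$, and let $S_L,S_R,S_B,S_U$ (resp.\ $T_L,T_R,T_B,T_U$) denote the extreme $x$- and $y$-coordinates of the centers of $S$ (resp.\ $T$). Since all discs are unit discs, $\aabr{D(S)\cup D(T+\vec{v})}$ has width $W(v_x)=\max(S_R,T_R+v_x)-\min(S_L,T_L+v_x)+2$ and height $H(v_y)=\max(S_U,T_U+v_y)-\min(S_B,T_B+v_y)+2$, so the objective is $A(\vec{v})=W(v_x)\,H(v_y)$. Thus $A$ separates into a product of two univariate convex piecewise-linear functions, each with at most two breakpoints, slopes in $\{-1,0,1\}$, and value $\geq 2$. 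In particular $W$ is minimized on a bounded interval $R^*_x$ and $H$ on a bounded interval $R^*_y$, both computable in $O(n)$ time, and the unconstrained minimizers of $A$ form the axis-parallel rectangle $R^*=R^*_x\times R^*_y$.

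Next I would localize the constrained optimum, mirroring the previous theorem. Because $W,H>0$, we have $\partial A/\partial v_x=W'(v_x)H(v_y)$ and $\partial A/\partial v_y=W(v_x)H'(v_y)$, so $\nabla A$ vanishes exactly on $R^*$; moreover $A(\vec{v})\to\infty$ as $|\vec{v}|\to\infty$ (one factor grows while the other stays $\geq 2$), so the minimum over the closed valid region $Q$ is attained. Hence any minimizer lying in the interior of $Q$ must be a point of $R^*$, and otherwise the minimizer lies on $\bd Q$, which is contained in the union of the vippodrome boundaries. As before, $\bd Q$ is valid since we allow tangency, so every boundary point is a genuine candidate. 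It therefore suffices to (a) detect whether $R^*$ meets $Q$, and (b) minimize $A$ along the valid portions of the vippodrome boundaries.

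For (b) I would process each of the $O(n^2)$ vippodrome boundaries $\gamma$ separately. Restricting $A$ to $\gamma$: on a ray, $v_x$ and $v_y$ are affine in the parameter, so $A|_\gamma$ is piecewise quadratic with $O(1)$ pieces; on a circular arc $v_x=c_x+r\cos\theta$, $v_y=c_y+r\sin\theta$, and $W,H$ become affine in $\cos\theta,\sin\theta$ on each of $O(1)$ angular subintervals, where $dA/d\theta$ is a bounded-degree trigonometric polynomial. Either way $A|_\gamma$ has only $O(1)$ local extrema, so I split $\gamma$ into $O(1)$ subarcs on which $A$ is monotone. On such a subarc the minimum of $A$ over its valid part is attained at the valid point nearest to the endpoint with smaller $A$-value, which I retrieve with one $\D_Q(\gamma)$ query $(a,d,b)$ of Lemma~\ref{lemma:dq} in $O(\log n)$ time. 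This gives $\min_{\gamma\cap Q}A$ per $\gamma$ using $O(1)$ queries, hence $O(n^2\log n)$ in total. For (a) I would test a single representative point of $R^*$ (its center $c$) for validity via Theorem~\ref{theo:abellanas} in $O(n^2)$ time; if $c\in Q$ then, since $c\in R^*$, it is a global optimum and I report it. The answer is the best of $c$ and all boundary candidates, for an overall cost of $O(n^2\log n)$.

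The step I expect to require the most care is (a): unlike the previous theorem, where the unconstrained optimum is the single point $o$, here $R^*$ is a two-dimensional region, so one validity test does not obviously certify $R^*\cap Q$. What rescues a single test is the convexity of $R^*$ together with the validity of $\bd Q$: if $R^*$ lies entirely inside one (necessarily valid) face, its center $c$ is valid and is caught by the test; otherwise, if $R^*$ meets the interior of $Q$ but $c\notin Q$, then the segment in $R^*$ from $c$ to an interior valid point crosses $\bd Q$ at a valid point of $R^*$, which lies on some vippodrome boundary and attains the globally minimal $A$-value, and is thus found by the boundary search of (b). I would make this case analysis explicit to confirm that no valid minimizer in $R^*$ is overlooked. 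The remaining points---bounding the breakpoints of $W$ and $H$ and verifying the $O(1)$ extremum count on arcs---are routine.
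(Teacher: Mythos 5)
Your proposal is correct and follows essentially the same route as the paper: both decompose the objective into a product of two univariate piecewise-linear functions with a flat minimizing rectangle (your $R^*$ is the paper's $R_0$), test a single point of that rectangle for validity via Theorem~\ref{theo:abellanas}, and otherwise minimize along the $O(n^2)$ vippodrome boundaries by splitting each into $O(1)$ monotone subarcs and querying $\D_Q$ in $O(\log n)$ time per subarc. Your explicit connectivity argument in step (a) is exactly the justification implicit in the paper's terse claim that an invalid test point means $R_0$ is not contained in any cell of $Q$, so the optimum is then caught by the boundary search.
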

\begin{proof}
Denote $\aabr{D(S)}$ by $R_1$ and $\aabr{D(T)}$ by $R_2$.
Note that \\
$\aabr{D(S)\cup D(T+ \vec{v})}$ is the axis-aligned bounding rectangle of $R_1 \cup (R_2 + \vec{v})$. Write $R_1 = [a_S, b_S] \times [c_S, d_S]$ and $R_2 = [a_T, b_T] \times [c_T, d_T]$.
Then, putting $\vec{v} = (x,y)$, we have that $\aabr{D(S)\cup D(T+ \vec{v})}$ is the axis-aligned bounding rectangle of
\[
\Bigl( [a_S, b_S] \times [c_S, d_S] \Bigr) \cup \Bigl( [x+a_T, x+b_T] \times [y+c_T, y+d_T] \Bigr) ,
\]
so it is the rectangle $[a^*(x), b^*(x)] \times [c^*(y), d^*(y)]$, where
\begin{align*}
a^*(x) & = \min \{ a_S,x+a_T \}, & b^*(x) & = \max \{ b_S,x+b_T \}, \\
c^*(y) & = \min \{ c_S,y+c_T \}, & d^*(y) & = \max \{ d_S,y+d_T \} .
\end{align*}
Let $\varphi(\vec{v}) = \varphi(x,y)$ denote the area of $\aabr{D(S)\cup D(T+ \vec{v})}$.
That is,
\[
\varphi(x,y) = \left( b^*(x) - a^*(x) \right) \left( d^*(y) - c^*(y) \right) .
\]
The function $b^*(x) - a^*(x)$ is piecewise linear in $x$, with the two breakpoints $a_S-a_T$, $b_S-b_T$.
Similarly, the function $d^*(y) - c^*(y)$ is piecewise linear in $y$, with the two breakpoints $c_S-c_T$, $d_S-d_T$ (the breakpoints of either function may appear in any order).
Each function is constant over the interval between the breakpoints, has slope $-1$ to the left of the interval, and
slope $+1$ to the right of the interval.

\begin{figure}
	\centering
	\includegraphics[trim=170 200 220 110, clip, width=\textwidth]{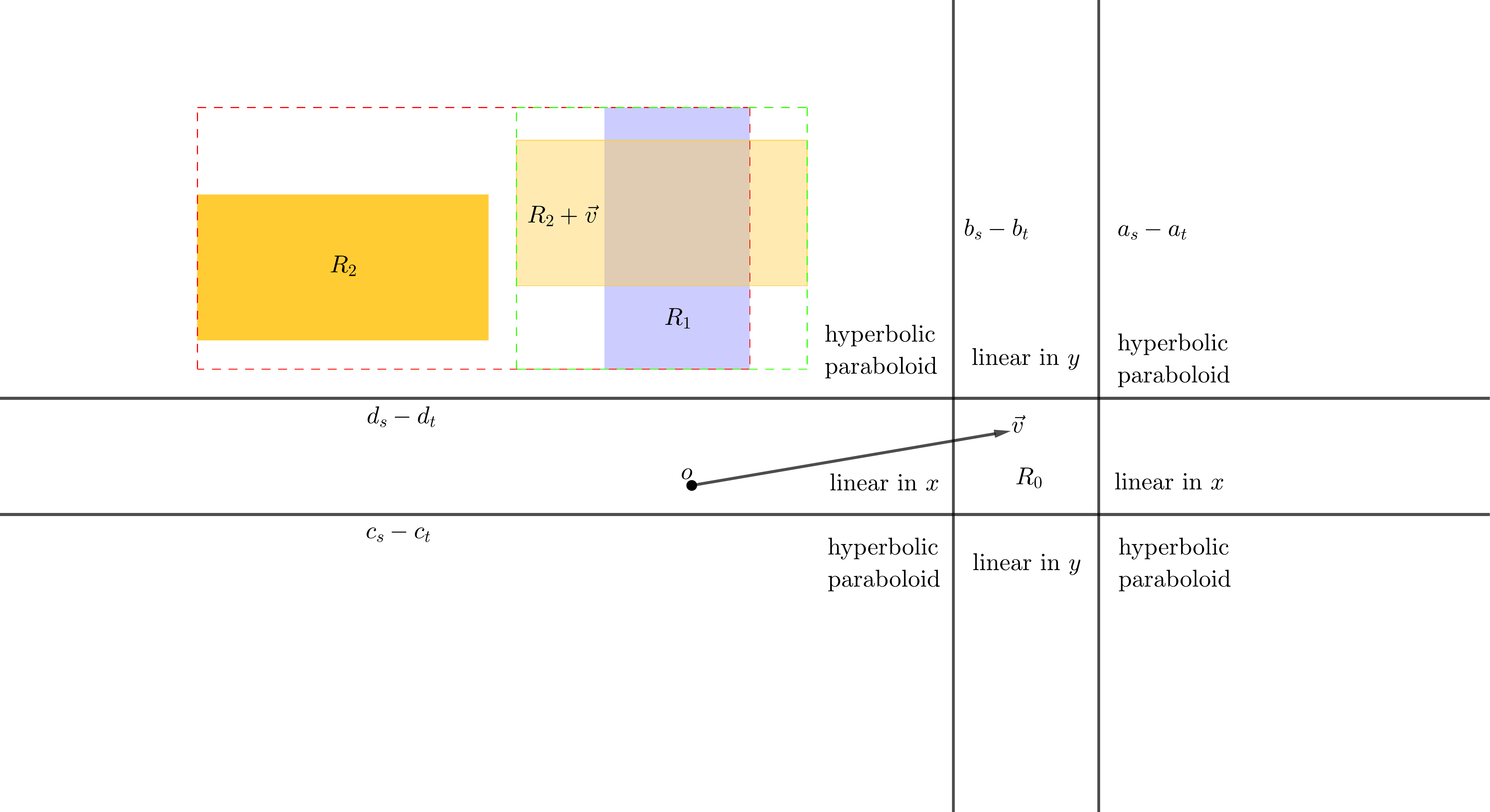}
	\caption{\sf An example of $\aabr{D(S)}$ (in blue), $\aabr{D(T)}$ (in orange) and $\aabr{D(T+\vec{v})}$ (in light orange) for some $\vec{v} \in R_0$.
	The lines of $L$ partition the translation plane into nine regions, in each of which $\varphi$ behaves differently, as indicated.
	The boundaries of $\aabr{D(S)\cup D(T)}$ and $\aabr{D(S)\cup D(T+\vec{v})}$ are depicted by red and green dashed segments, respectively.
	Observe that $\aabr{D(S)\cup D(T+\vec{v})}$ is of minimal area.}
	\label{fig:AABR}
\end{figure}

Consider the vertical lines through the breakpoints of $b^*(x) - a^*(x)$ and the horizontal lines through the breakpoints of $d^*(y) - c^*(y)$, and denote the set of these four lines by $L$.
$L$ partitions the plane into nine rectangular (bounded and unbounded) regions; see Figure~\ref{fig:AABR}.
The function $\varphi(x,y)$ is constant over the center region $R_0$, is a linear function in $x$ over the regions to the left and to the right of $R_0$, is a linear function in $y$ over the regions above and below $R_0$, and is a hyperbolic paraboloid, of the form $\pm (x-\alpha)(y-\beta)$, over each of the other four regions.
Observe that $\varphi$ is continuous.
Since none of the expressions for $\varphi$ has any local minimum in the interior of its region (except for $R_0$ where the expression is constant), it thus easily follows that, for any connected region $K$, $\varphi$ attains its minimum over $K$ at a point on $\bd K$, unless $R_0$ is fully contained in $K$, in which case the minimum is attained at all the translations in $R_0$.

Hence, we proceed exactly as in the previous problem.
We check whether an arbitrary point in $R_0$ is valid.
If so, report it as the desired valid translation that minimizes $\varphi(\vec{v})$.
If not, it means that $R_0$ is not contained in any cell of $Q$, and so it suffices to check for the minimum on the vippodrome boundaries.
We iterate over the $O(n^2)$ vippodrome boundaries, and for each such boundary $\gamma$, we minimize $\varphi$ over $\gamma$, by applying a variant of the procedure presented for the case of the shortest valid translation, in $O(\log n)$ time.
(Here too, ignoring validity, $\gamma$ contains only $O(1)$ local extrema of $\varphi$ restricted to $\gamma$, and we split $\gamma$ at these points, as in the preceding case.
Note that $\gamma$ may cross some lines of $L$.
In such a case, we split $\gamma$ further into $O(1)$ subarcs, each fully contained in one of the rectangular regions of the partition, and minimize $\varphi$ separately over each subarc.)
We output the translation $\vec{v}$ that minimizes $\varphi(\vec{v})$, over all the translations that we have obtained.
Hence, after having computed $Q$ and $\D_Q$, \salsta{$S$,$T$,$M$} can be solved in $O(n^2\log n)$ time, as asserted.
\end{proof}

\ifthesis
\subsection[Minimizing the Area of the Smallest Enclosing Disc]%
{\parbox[t]{17em}{Minimizing the Area of the \\ Smallest Enclosing Disc}}
\else
\subsection{Minimizing the Area of the Smallest Enclosing Disc}
\fi
\label{subsec:labeled_sed}

The problem here, denoted as \salstd{$S$,$T$,$M$}, is to find a valid translation $\vec{v}$ (with respect to $M$) that minimizes the radius of the smallest enclosing disc%
\footnote{This is indeed an equivalent formulation to the one given earlier: The smallest enclosing disc of $D(S)\cup D(T+\vec{v})$ has the same center as the disc that we find, and its radius is larger by $1$.
In contrast to other sections in the \paper, here we consider the \SED{} to be a closed disc.}
of $S\cup (T+\vec{v})$.
We denote the smallest enclosing disc of a set $P$ as $\SED(P)$, and its radius as $r(P)$.
A similar problem was studied in~\cite{DBLP:journals/comgeo/BanikBD14}, in which a characterization of the locus of the center of the smallest enclosing disc, and its radius, are given for a static set of points and only one mobile point, moving along a straight line.
Here, we study a more intricate problem, as our mobile points are more numerous and are not moving along a line, but are moving rigidly according to the valid translations of the region $Q$.
In other words, we want to optimize $\SED(S\cup (T+\vec{v}))$ only over translations $\vec{v} \in Q$.
(Note that without this constraint the problem is trivial: simply translate $T$ by $\vec{v}$ for which the centers of $\SED(S)$ and $\SED(T+\vec{v})$ coincide, and output the larger of the two discs.)

\begin{theorem}
	Once $Q$ and $\D_Q$ have been computed, \salstd{$S$,$T$,$M$} can be solved in $O(n^5 \log n)$ time.
\end{theorem}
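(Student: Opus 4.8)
The plan is to exploit convexity of the radius function and reduce, exactly as in the two preceding problems, to a one-dimensional optimization along the vippodrome boundaries. Write $r(\vec v)$ for the radius of $\SED(S\cup(T+\vec v))$ and note that
\[
r(\vec v)=\min_{c\in\R^2} F(c,\vec v),\qquad F(c,\vec v)=\max\Bigl(\max_{p\in S}|c-p|,\ \max_{q\in T}|c-q-\vec v|\Bigr).
\]
Each term $|c-p|$ and $|c-q-\vec v|$ is a norm of an affine function of $(c,\vec v)$, hence convex on $\R^2\times\R^2$; their pointwise maximum $F$ is jointly convex, and partial minimization over $c$ preserves convexity, so $r(\vec v)$ is a convex function of $\vec v$. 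In particular its only local minimum is the global one: its value is $\max(r(S),r(T))$, attained on the convex set $V^{*}$ of translations that place $\SED(T+\vec v)$ inside $\SED(S)$ (or vice versa). First I would compute $V^{*}$ and test, by locating it in the precomputed subdivision of $Q$ in $O(\log n)$ time, whether $V^{*}\cap Q\neq\emptyset$; if so we are done.

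If $V^{*}\cap Q=\emptyset$, then convexity of $r$, together with the fact that $Q$ is closed and every point of $\bd Q$ is valid (tangency is allowed, exactly as argued for the shortest-vector problem), forces the constrained minimum onto $\bd Q$. Since $\bd Q$ is a union of portions of the $O(n^{2})$ vippodrome boundaries, it suffices to minimize $r$ over $\gamma\cap Q$ for each boundary $\gamma$, using $\D_Q(\gamma)$ to jump between the valid subintervals of $\gamma$. This is where the work concentrates.

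The core subroutine is thus to minimize the convex function $r$ along a single vippodrome boundary, restricted to its valid part. When $\gamma$ is one of the straight (ray) portions, $r\!\mid_{\gamma}$ is convex and hence unimodal on each valid subinterval; I would phrase the minimization as a constant-dimensional convex program --- a smallest-enclosing-disc instance in which the translation is constrained to the supporting line of $\gamma$ --- and solve it by an LP-type algorithm in $O(n)$ expected (or $O(n\log n)$) time per interval. The circular-arc portions are more delicate, since $r\!\mid_{\gamma}$ need not be unimodal there. Writing the $\rho$-sublevel set of $r$ as the Minkowski difference $C(\rho)=\bigl(\bigcap_{p\in S}D_{\rho}(p)\bigr)\oplus\bigl(-\bigcap_{q\in T}D_{\rho}(q)\bigr)$, a convex region of complexity $O(n)$, one sees that a circle can meet $C(\rho)$ in $\Theta(n)$ arcs, so $r$ may have up to $\Theta(n)$ local minima along such a $\gamma$. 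To handle this I would track the smallest enclosing disc parametrically as $\vec v$ traverses $\gamma$, adapting the single-moving-point analysis of Banik et al.~\cite{DBLP:journals/comgeo/BanikBD14} to the present setting in which all of $T$ translates rigidly; this produces the $O(1)$-degree algebraic pieces of $r\!\mid_{\gamma}$, on each of which the defining basis is fixed, hence its candidate minima together with the basis-change breakpoints. Each candidate (a local minimum, a breakpoint, or an endpoint of a valid interval) is then tested against $Q$ by one $\D_Q(\gamma)$ query in $O(\log n)$ time and $r$ is evaluated there. Charging the candidates to the $O(n^{4})$ edges of $Q$ and spending near-linear time per candidate gives the claimed $O(n^{5}\log n)$ bound for the optimization phase.

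The step I expect to be the main obstacle is precisely this parametric analysis of $\SED(S\cup(T+\vec v))$ along a circular arc. Unlike the setting of \cite{DBLP:journals/comgeo/BanikBD14}, where a single point moves on a line, here $n$ points move together along an arc, so I must (i) bound the number of combinatorial changes of the defining basis along $\gamma$, (ii) compute the pieces and their local minima within budget, and (iii) interleave this correctly with the validity intervals reported by $\D_Q(\gamma)$, discarding any minimum of $r$ that falls in an invalid gap in favour of the nearest valid candidate. Convexity of $r$ settles the ray case completely and constrains the arc case, but by itself it does not bound the number of local minima on an arc, which is exactly what the parametric SED analysis must supply.
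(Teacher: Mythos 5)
Your convexity observation is correct, and it is a genuinely nice one that the paper does not use: $F(c,\vec v)$ is jointly convex, so $r(\vec v)=\min_c F(c,\vec v)$ is convex, which immediately forces the constrained minimum onto $\bd Q$ when $V^*\cap Q=\emptyset$, and completely settles the ray portions of the vippodrome boundaries (one LP-type computation per ray plus two $\D_Q(\gamma)$ queries, by unimodality). However, the proposal has a genuine gap exactly where you flag it, and it is not a small one: on the circular-arc portions you provide no bound on the number of combinatorial changes of the SED basis along the arc, hence no bound on the number of algebraic pieces and local minima of $r$ restricted to the arc, and no algorithm to compute them within budget. The accounting step you offer instead is unsound: ``charging the candidates to the $O(n^4)$ edges of $Q$'' cannot work, because the local minima of $r$ along an arc are determined by $S$, $T$ and the arc alone and have nothing to do with the arrangement $\VA$ or with $Q$. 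A generic completion does not rescue it either: writing $r$ restricted to an arc as the upper envelope of the miniball radii of all $O(n^3)$ bases of size at most three gives per-arc complexity and computation time of roughly $\lambda_s(n^3)\log n$, and summed over the $O(n^2)$ boundaries this already overshoots $O(n^5\log n)$; your own sublevel-set argument shows $\Theta(n)$ local minima per arc are possible, so some structural input beyond convexity is indispensable, and the proposal does not supply it.

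The paper avoids this obstacle by a different decomposition: instead of tracking $\SED(S\cup(T+\vec v))$ parametrically along each boundary, it branches on the \emph{combinatorial type} of the optimal disc. (i) If the optimal disc is determined by one set alone, it equals $\SED(S)$ (w.l.o.g.), and one needs only some valid $\vec v$ with $T+\vec v\subset\SED(S)$; this is found by overlaying the convex region $\bigcap_{t\in T}(\SED(S)-t)$ with $Q$ in $O(n^4)$ time (note this also corrects your claim that $V^*\cap Q\neq\emptyset$ can be tested by a single $O(\log n)$ point location --- $V^*$ is a region, not a point). (ii) If the disc passes through two points of one set and one of the other, the two points must span an edge of the farthest-point Voronoi diagram, leaving only $O(n)$ pairs; for each pair and each vippodrome boundary the problem becomes a sandwich-region computation between envelopes of the $O(n)$ constant-degree functions $f_t,g_t$, for $O(n^5)$ total. (iii) If the disc is diametral, determined by $s_0\in S$ and $t_0+\vec v$, there are $O(n^2)$ pairs; each yields a convex intersection $K_{s_0,t_0}$ of $O(n)$ halfplanes, and the minimization interleaves ray-shooting in $\D_K(\gamma)$ and $\D_Q(\gamma)$ along each boundary in $O(n\log n)$ time, for $O(n^5\log n)$ total. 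This case analysis is precisely the combinatorial bound your parametric plan is missing; without it (or an equivalent kinetic-SED bound along arcs, which you correctly identify as the obstacle but leave open), the claimed running time is not established.
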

\begin{proof}
If we fix $\vec{v}$, the smallest enclosing disc $D = \SED(S \cup (T+\vec{v}))$ passes through either two points, in which case $D$ is the diametral disc formed by the two points, or three points (or more, which we handle as degenerate instances of the first two options).
We conclude that $\bd D$ passes through either:
\begin{description}
	\item[(i)] two or three points that belong to the same set ($S$ or $T+\vec{v}$); or
	\item[(ii)] two points that belong to one set and a third point that belongs to the other set; or
	\item[(iii)] two points, each belonging to a different set.
\end{description}
For each case, we collect valid translations which are candidates to realize the smallest enclosing disc under the specific requirements of that case.
We then output $\vec{v}$ as the valid translation realizing the smallest $\SED(S \cup (T+\vec{v}))$ among the candidates.

Case~(i) is the simplest.
If $D$ is determined by points of the same set, say $S$, then $\SED(S)=\SED(S \cup (T+\vec{v}))$, unless other points from $T+\vec{v}$ are on the boundary as well, which we handle in Case (ii).
Indeed, this is trivial by the uniqueness of the $\SED$.
By the symmetry of the setup, assume without loss of generality that $r(S) \geq r(T)$.
It is then sufficient to find one valid translation $\vec{v}$ such that $T+\vec{v} \subset \SED(S)$.
Let $\xi$ be the center of $\SED(S)$.
In order for $\SED(S)$ to contain $T+\vec{v}$ for some translation $\vec{v}$,
$\xi$ has to lie in the intersection of all the discs of radius $r(S)$ centered at the points of $T+\vec{v}$.
Observe that this region of translations, denoted as $V(\xi)$, is the intersection $\bigcap_{t\in T} (\SED(S) - t)$.
We thus construct $V(\xi)$ and overlay it with the valid portion $Q$ of the arrangement $\VA$ of the vippodrome boundaries.
It is then easy to find, in time proportional to the complexity of the overlay, a valid translation $\vec{v}$ such that $T+\vec{v} \subset \SED(S)$, namely a translation in $Q \cap V(\xi)$, if one exists.
Since the complexity of the overlay is still $O(n^4)$ (its new vertices are intersections of edges of $V(\xi)$ with edges of $\VA$, and there are only $O(n^3)$ such intersections), this takes $O(n^4)$ time.
Observe that in this case only, if a valid translation was found, we do not need to consider other candidates from different cases, as $\SED(S)$ is clearly the smallest possible disc we are looking for.

Consider next Case (ii) where the boundary of the smallest enclosing disc passes through two points of $S$ and one point of $T+\vec{v}$.
(Handling the case where it passes through two points of $T+\vec{v}$ and one point of $S$ is fully symmetric.)
Any such pair of points of $S$ must define an edge of the farthest-neighbor Voronoi diagram $\FVD(S)$ of $S$, and this diagram has only $O(n)$ edges.
Fix such an edge, defined by two points $s_1, s_2\in S$, and denote it as $e = e_{s_1,s_2}$; see Figure~\ref{fig:SED_on_two}.
Assume, without loss of generality, that the perpendicular bisector $b(s_1,s_2)$ of $s_1s_2$ is the $x$-axis, with the origin placed at
the midpoint of $s_1s_2$.
Then we can write $e$ as an interval $[\xi_1,\xi_2]$.
Without loss of generality, assume that $0\le\xi_1 < \xi_2$.
(If $0$ is an interior point of $e$, split it into two subintervals at $0$ and handle each of them separately.
Note also that $e$ may be unbounded, in which case one of its endpoints is $\pm \infty$.)
For each $\xi\in e$, let $D(\xi)$ denote the disc centered at $\xi$ so that its bounding circle passes through $s_1$ and $s_2$.
Clearly, for a translation $\vec{v}$, $D(\xi)$ is the smallest enclosing disc of $S\cup (T+\vec{v})$, with respect to placements in $e$, if and only if $T+\vec{v} \subset D(\xi)$ and $D(\xi)$ is the smallest such
disc (meaning that $\xi$ is closest to the origin); by definition, $S \subset D(\xi)$ for every such $D(\xi)$.
Let $V(\xi)$, for $\xi \in e$, denote the set of all translations $\vec{v}$ for which $T+\vec{v} \subset D(\xi)$.
As in Case (i), we have $V(\xi) = \bigcap_{t\in T} (D(\xi) - t)$.
Define
$$
V = \{(\vec{v},\xi) \mid \xi\in e,\;\vec{v}\in V(\xi) \} ,
$$
which is a subset of $\reals^3$.
We claim that $V$ is connected.
For this, note that when $\xi' < \xi''$,  $D(\xi'')$ is a larger disc than $D(\xi')$, and therefore $V(\xi')$ can be translated to a subset of $V(\xi'')$.
More precisely, we have, as is easily verified,
$$
V(\xi') + (\xi''-\xi',0) \subseteq V(\xi'') .
$$
In particular, if $V(\xi')$ is nonempty then so is $V(\xi'')$.
Moreover, if $\vec{v}\in V(\xi')$ then $\vec{v} + (\xi''-\xi',0) \in V(\xi'')$, implying that we can continuously move from any point in $V$ to any point in a higher cross section of $V$ along a path that is contained in $V$, which clearly implies the connectivity of $V$.

\begin{figure}
	\centering
	\includegraphics[trim=20 20 20 20, clip, width=0.5\textwidth]{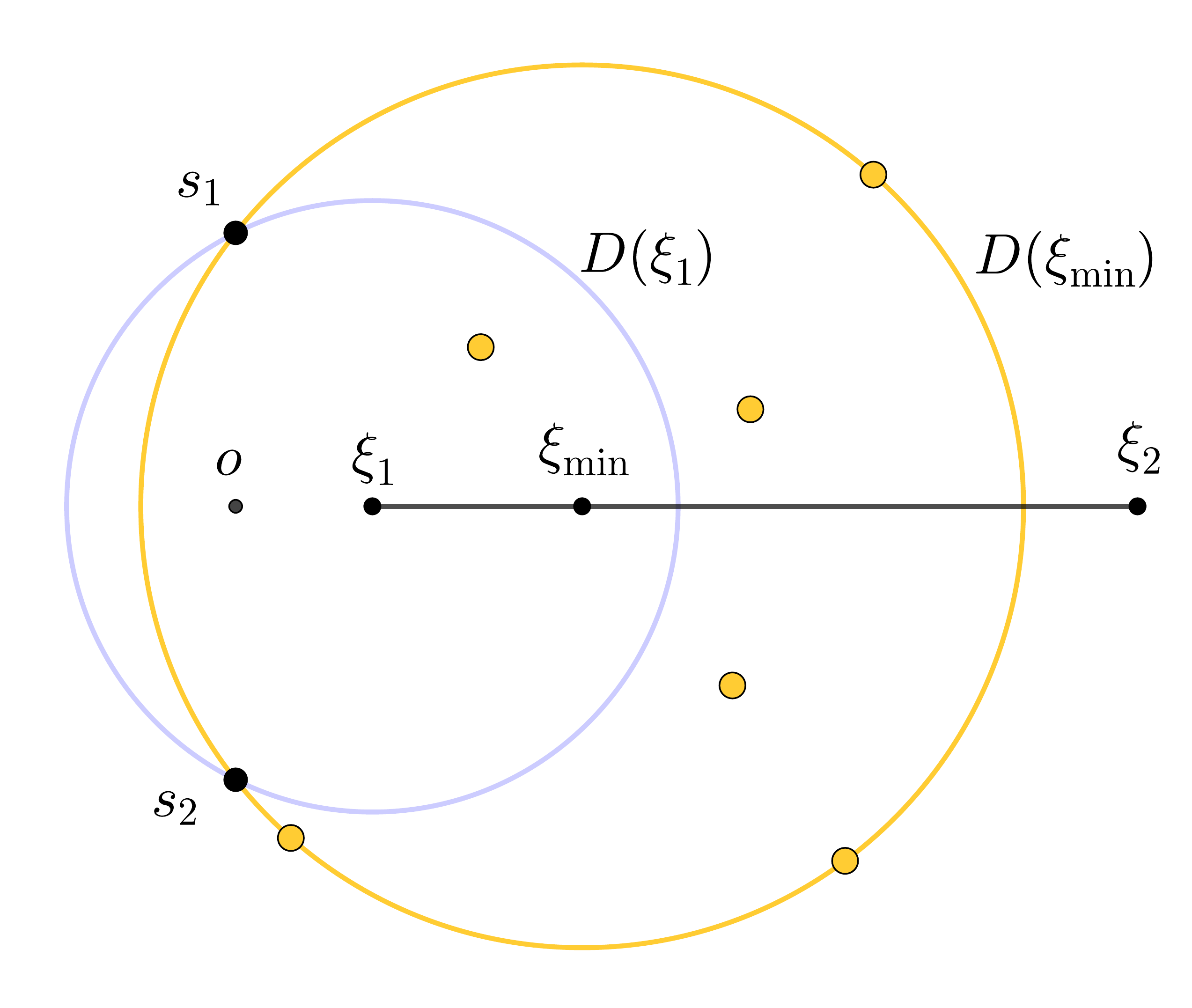}
	\caption{\sf The origin $o$ is the midpoint of $s_1 s_2$.
		The points of $T+\vec{v}$ are drawn in orange.
		The other points of $S$ are not drawn, but they lie in all the discs $D(\xi)$, for $\xi\in [\xi_1,\xi_2]$.
		$D(\xi_1)$ (in blue) is the smallest enclosing disc of $S$ among those centered on the Voronoi edge $e_{s_1,s_2} = [\xi_1, \xi_2]$.
		$D(\xi_{\min})$ (in orange) is the smallest enclosing disc of $S \cup (T + \vec{v})$ whose center lies on $e_{s_1,s_2}$.}
	\label{fig:SED_on_two}
\end{figure}

The coordinate $\xi_{\min}$ of the global minimum $(\vec{v},\xi_{\min})$ of $V$ can be attained at one of two specific points on $e$ (with a suitable $\vec{v}$), which is either $\xi_1$ if $r(D(\xi_1)) > r(T)$, or else the point $\xi_{\min}$ for which $D(\xi_{\min})$ is of radius $r(T)$.
In the latter case, the corresponding disc $D$ is $D(\xi_{\min})=  \SED(T + \vec{v})$, for some translation $\vec{v}$, and so is handled in Case~(i).
In the former case, the corresponding disc $D$ is $D(\xi_1)$ and we show below that either it is discovered by Case~(i) of the algorithm, or it cannot be the smallest among the candidates.
If $D(\xi_1) = \SED(S)$, it is discovered by Case~(i).
Otherwise, we argue that the translation $\vec{v}$ where the minimum is attained must be on the boundary of a vippodrome, a case that will be handled further below.
Assume to the contrary that there exists some valid translation $\vec{v}$ such that $\vec{v}$ is not on any vippodrome boundary and $D(\xi_1)$ encloses $S\cup(T+\vec{v})$.
Since $D(\xi_1)$ is not $\SED(S)$, the uniqueness of $\SED(S)$ implies that $r(S) < r(D(\xi_1))$.
Moreover, $\xi_1$ cannot be the midpoint $o$ of $s_1s_2$, for otherwise $D(\xi_1)$ would be $\SED(S)$.
For the same reason, the three points of $S$ on $\bd D(\xi_1)$ must form an obtuse triangle, implying that there exists a point $\xi'_1$ in the neighborhood of $\xi_1$, not necessarily on $e$, such that $D(\xi'_1)$ is smaller than $D(\xi_1)$.
We now claim that $D(\xi_1)$ cannot be the smallest among the candidates and so it can be ignored.
Indeed, since $r(T)$ is smaller than $r(D(\xi_1))$, we can choose the point $\xi_1'$ in the preceding argument so that $r(T) < r(D(\xi_1'))$.
Since $\vec{v}$ is not on a vippodrome boundary, there exists another valid translation $\vec{v}'$ sufficiently close to $\vec{v}$ so that $T+\vec{v}' \subset D(\xi_1')$.
Therefore, $D(\xi_1')$ is a smaller candidate than $D(\xi_1)$.

Hence, the minimum is attained at a (valid) translation $\vec{v}$ that lies on the boundary of some vippodrome.
We thus take each of the $O(n^2)$ such boundaries $\gamma$, and mark on it the maximal subarcs of valid translations (which are delimited at intersection points of $\gamma$ with other vippodrome boundaries).
For each $t\in T$ and each translation $\vec{v}$, the range of $\xi\in e$ for which $t+\vec{v} \in D(\xi)$ is a subinterval of $e$.
If the subinterval is nonempty, one of its endpoints is either $\xi_1$ or $\xi_2$.
We write this interval, if nonempty, as $f_t(\vec{v}) \le \xi \le g_t(\vec{v})$, write the empty intervals as $\xi\le -\infty$ or $\xi \ge +\infty$, and conclude that the range of $\xi$ for which $T+\vec{v} \subset D(\xi)$ is given by
$$
\max_{t\in T} f_t(\vec{v}) \le \xi \le \min_{t\in T} g_t(\vec{v}).
$$
We thus compute this sandwich region $\Xi$ between the upper envelope of the functions $f_t(\vec{v})$ and the lower envelope of the functions $g_t(\vec{v})$, as $\vec{v}$ ranges over $\gamma$, and seek a pair $(\vec{v},\xi) \in \Xi$ such that $\vec{v}$ is a valid translation and $\xi$ has the minimum value under these conditions.
Clearly $(\vec{v},\xi)$ lies on the upper envelope of the functions $f_t(\vec{v})$.
Since the functions $f_t,g_t$ are partially defined piecewise algebraic functions of constant degree, and $\vec{v}$ varies along a one-dimensional curve (a vippodrome boundary), this takes $O(\lambda_s(n)\log n)$ time, for some constant parameter $s$ (see, e.g., \cite{DBLP:books/daglib/0080837}).

For the running time of this procedure, we need to repeat it for the $O(n)$ pairs of neighbors $(s_1,s_2)$ in $\FVD(S)$ (and similarly for $\FVD(T)$).
For each pair we run $O(n^2)$ one-dimensional minimization procedures, over the vippodrome boundaries, each costing $O(n^2)$ time (just to mark on it the valid portions).
Thus the total running time in this case is $O(n^5)$.

Consider finally Case (iii), in which the smallest enclosing disc is the diametral disc of two points $s\in S$ and $t+\vec{v}\in T+\vec{v}$.
There are $O(n^2)$ pairs $(s,t)$ to test.
Fix one such pair $(s_0,t_0)$, and, for each $\vec{v}$, denote the corresponding diametral disc as $D_{s_0,t_0}(\vec{v})$
(note that we overload the definition of $D$ until the end of the \ssubsection).

\begin{figure}
	\centering
	\includegraphics[trim=0 150 0 50, clip, width=0.6\textwidth]{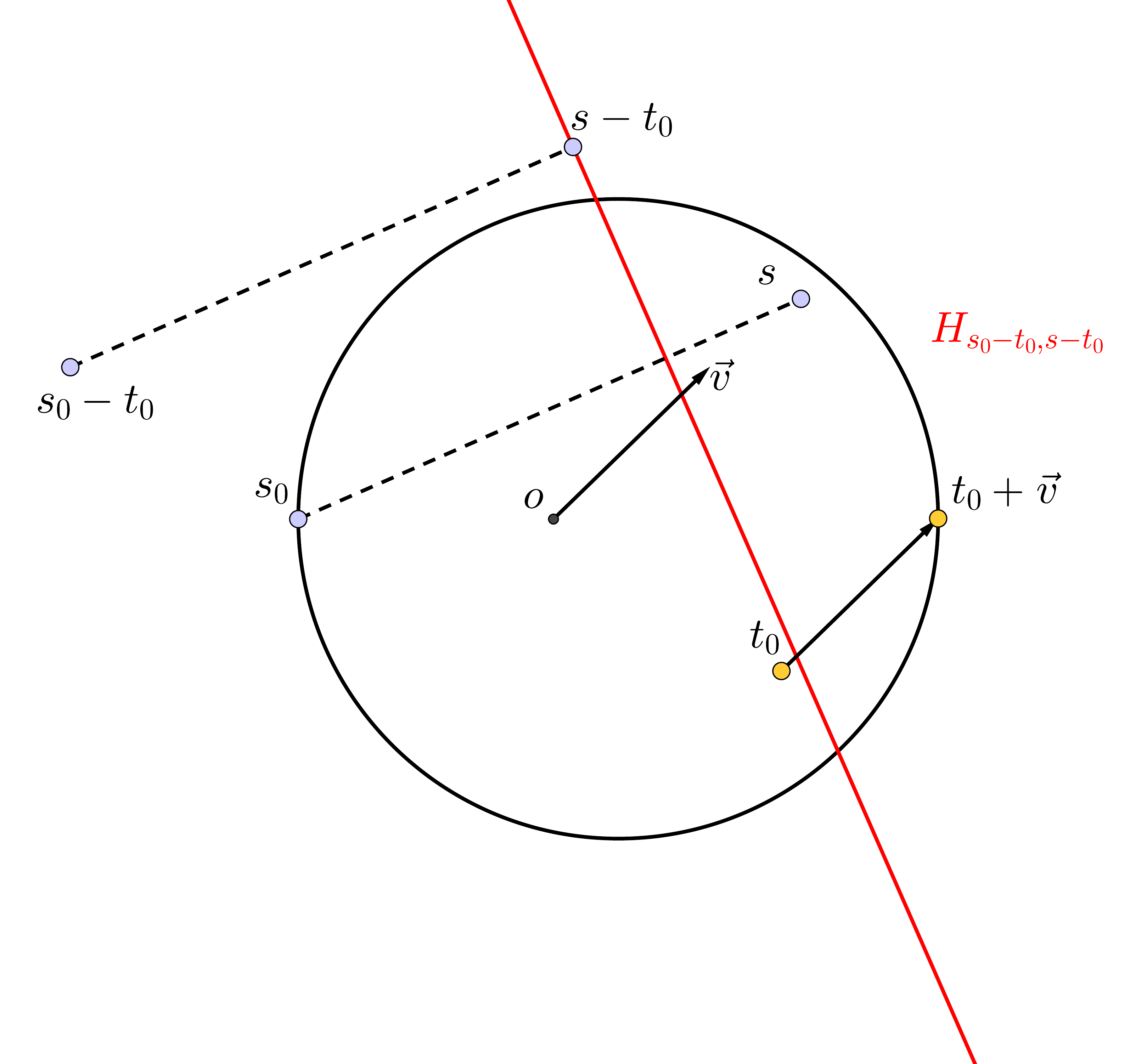}
	\caption{\sf The points $s_0, s$ (in blue), $t_0$ (in orange), and their respective translated points according to $-t_0$ and $\vec{v}$.
		The halfplane $H_{s_0-t_0,s-t_0}$, lying to the right of its bounding line, are both marked in red.
		Observe that $\vec{v} \in H_{s_0-t_0,s-t_0}$ and so $s$ is inside the diametral disc of $s_0$ and $t_0+\vec{v}$.}
	\label{fig:SED_halfplane}
\end{figure}

A point $s\in S\setminus\{s_0\}$ is in $D_{s_0,t_0}(\vec{v})$ if and only if $\angle (s_0,s,t_0+\vec{v}) \ge \pi/2$, which is equivalent to $t_0+\vec{v}$ lying in the halfplane $H_{s_0,s}$, defined so that it does not contain $s_0$, its bounding line passes through $s$, and is orthogonal to $s_0s$. This in turn is equivalent to $\vec{v}$ lying in the similarly defined halfplane $H_{s_0-t_0,s-t_0}$; see Figure~\ref{fig:SED_halfplane}.
This has to hold for every $s$, and, symmetrically, $\vec{v}$ also has to lie in each halfplane $H_{s_0-t_0,s_0-t}$, for each $t\in T\setminus\{t_0\}$, as is easily checked.
We thus form the intersection of these $2(n-1)$ halfplanes, in $O(n\log n)$ time, to obtain a convex polygon $K_{s_0,t_0}$ with $O(n)$ edges, which we abbreviate as $K$.
Note that $K$ is always an unbounded region, unless it is empty.
We then seek a valid translation $\vec{v}$ inside $K$ for which the function $\varphi(\vec{v}) = |\vec{v} + t_0 - s_0|$ (that is, the diameter of $D_{s_0,t_0}(\vec{v})$) is minimized; namely, we want to minimize $\restr{\varphi}{K \cap Q}$.
In other words, we seek the valid translation $\vec{v}$ inside $K$ that is closest to $s_0-t_0$.
Since, by construction, $s_0-t_0$ lies outside each of the halfplanes that form $K$, it follows that the desired translation $\vec{v}$ lies either on $\bd K$ or on the boundary of some valid face of $Q$, that is, on the valid portion of some vippodrome boundary.
Specifically, $\vec{v}$ is either a point of some connected component $e$ of the valid portion of some vippodrome boundary $\gamma$ within $K$ so that $\vec{v}$ minimizes $\varphi$ over $e$, or the unique point $\vec{v}_{\min}$ on $\bd K$ that is closest to $s_0-t_0$, if that point is valid.
(Note that in this latter case, $\vec{v}_{\min}$  is an inner point of some cell of $Q$.)

All these observations suggest the following procedure.
We first construct $K$, in $O(n\log n)$ time.
We then find $v_{\min}$, the unique point on $\bd K(s_0,t_0)$ closest to $s_0-t_0$, and check the validity of $\vec{v}_{\min}$ in $O(n^2)$ time, according to Theorem~\ref{theo:abellanas}.
If $\vec{v}_{\min}$ is valid, we output it as the valid translation that minimizes the radius of $D_{s_0,t_0}(\vec{v})$ (for the fixed pair $(s_0,t_0)$).

Otherwise, we look for the desired translation on a vippodrome boundary.
For each vippodrome boundary $\gamma$, we apply the following procedure, which finds the valid translation that minimizes $\restr{\varphi}{\gamma \cap K}$, and output the translation that minimizes $\restr{\varphi}{Q \cap K}$, by taking the translation with minimum value of $\varphi$, among all the candidate translations, collected over all vippodrome boundaries, if such candidate translations exist at all.

We prepare a data structure $\D_K(\gamma)$ along $\gamma$ for $K$, which is similar to the data structure $\D_Q(\gamma)$ described in Lemma~\ref{lemma:dq}.
The structure $\D_K(\gamma)$ is a balanced binary search tree over the points $\gamma\cap \partial K$, it can be built in $O(n\log n)$ time, and a query takes $O(\log n)$ time.

We split $\gamma$ at local extrema of $\restr{\varphi}{\gamma}$.
Notice that we have only up to three local extrema%
\footnote{Unless $\gamma$ is $\bd\V^{(1)}_{AB}$ or $\bd\V^{(2)}_{AB}$ such that $A^S = s_0$ and $B^T = t_0$.
	In this exceptional case, the entire circular subarc of $\gamma$ is a continuum of minima.
	This case is easily handled by a slight modification to the algorithm, which we omit here.}
(in the worst case we have one minimum on each ray of $\gamma$ and one maximum on its circular portion) and they split $\gamma$ into at most four subarcs, along each of which $\restr{\varphi}{\gamma}$ is monotone.

Let $\bar{\gamma}$ be one of these subarcs, let $a$ denote its endpoint at which $\restr{\varphi}{\bar{\gamma}}$ is minimal, let $b$ be its other endpoint (possibly $\infty$), and let $d$ denote the direction from $a$ to $b$ (along $\gamma$).
We search along $\bar{\gamma}$ for the nearest point to $a$ that is in $K\cap Q$.
Finally we compare the outputs over all the subarcs and return the point with the smallest value of $\varphi$, which is the optimal valid translation $\vec{v}$ along $\gamma$, if it exists.

Searching along one subarc $\bar{\gamma}$ is carried out as follows.
Using $\D_K(\gamma)$ we obtain the next interval of $\bar{\gamma}\cap K$ nearest to $a$, if it exists, and use it (with trivial adaptations) to query $\D_Q(\gamma)$.
If $\D_Q(\gamma)$ returns an interval, we use the first point of this interval as our answer (namely the value of $\varphi$ at this point) and by that conclude handling the subarc $\bar{\gamma}$.
If $\D_Q(\gamma)$ returns null, we use $\D_K(\gamma)$ to fetch the next interval along $\bar{\gamma}\cap K$ nearest to $a$, and so on until we find a valid solution or reach the other end of $\bar{\gamma}$.

Since there are at most $O(n)$ intervals in $\gamma\cap K$, the overall number of queries to $\D_K(\gamma)$ will be $O(n)$ at the cost of $O(\log n)$ time each.
For each of the at most $O(n)$ such queries we query $\D_Q(\gamma)$ once, in $O(\log n)$ time.
Thus processing $\gamma$ takes a total of $O(n\log n)$ time.

Repeating this over the $O(n^2)$ vippodrome boundaries, and repeating the entire process for the $O(n^2)$ pairs $(s_0,t_0)$, the overall cost of this case is $O(n^5\log n)$ time.

The total running time of all the cases is thus $O(n^5 \log n)$.
This completes the proof of the theorem.
\end{proof}

We note that the other two optimization criteria (shortest translation and minimum-area {\sf AABR}) have considerably simpler and significantly more efficient solutions (modulo the preprocessing stage of constructing $Q$ and $\D_Q$).
We leave it as a challenging open problem to improve the running time of the optimization phase for this case, at least to nearly quartic in $n$.
	
	\newpage
	\section{Unlabeled Version: Preliminary Analysis}
\label{section:unlabeled}

In this \ssection we study the reconfiguration problem for unlabeled discs.
The main result of the \ssection is summarized the following theorem.

\begin{theorem}
\label{theo:unlabeled_feasible}
Let $S$ and $T$ be two valid configurations, of $n$ points each.
For every direction $\delta$, except possibly for finitely many directions, there
exists a translation $\vec{v} \in \R^2$ in direction $\delta$ such that the unlabeled problem {\rm \ust}($S$,$T+\vec{v}$) is feasible.
\end{theorem}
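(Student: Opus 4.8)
The plan is to show that if we translate $T$ sufficiently far in direction $\delta$, then the discs of $T+\vec v$ become so distant from the discs of $S$ that a simple greedy itinerary always succeeds. The guiding intuition is that once $T+\vec v$ is far away in a fixed direction, every disc of $S$ can be moved ``forward'' (roughly in direction $\delta$) into a free target slot, processed in an order dictated by how far along $\delta$ each start and target disc lies. Since the problem is unlabeled, I have the freedom to match start discs to target slots in whatever order is convenient, and I will exploit this to move discs in order of decreasing depth in direction $-\delta$, so that each moving disc travels through already-vacated space.

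Concretely, I would fix a coordinate frame whose positive $x$-axis points in direction $\delta$, and take $\vec v = (L,0)$ for a large parameter $L$. For $L$ large enough, the entire block $T+\vec v$ lies strictly to the right of all of $S$, separated by an arbitrarily wide gap. I would then sort the target slots by their $x$-coordinate and the start discs by their $x$-coordinate, and define the itinerary greedily: repeatedly take the currently rightmost unmoved start disc and send it, along a segment of direction close to $\delta$, to the currently leftmost unoccupied target slot. The key claim to verify is that each such move is collision-free. The hippodrome swept by a moving disc is a long thin region running essentially horizontally from $S$ into the far-right block; I would argue that (i) it meets no still-unmoved start disc, because those all lie at smaller $x$ and the motion proceeds rightward from the rightmost remaining start disc, and (ii) it meets no already-placed target disc, because those occupy the leftmost slots which the moving disc passes to the left of, and the target configuration $T$ is itself a valid configuration of pairwise-separated discs, so there is room to thread between them.

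The main obstacle is handling the vertical component of the motion and the passage of each disc \emph{through} the target block. A horizontal translation alone need not land a start disc on a target slot, since the two configurations have different vertical profiles; so the segment traversed has some nonzero slope, and I must ensure its hippodrome does not clip neighboring target discs or neighboring unmoved start discs. This is where the exceptional directions enter: if $\delta$ is the direction of a common inner tangent of two tangent discs (of $S$, or of $T+\vec v$), a threading path may be pinched to zero width and the move blocked. The plan is to show that for all but finitely many $\delta$ the relevant gaps have positive width --- since $S$ and $T$ are valid configurations, any two of their discs are at distance $\ge 2$, with equality only in the tangent case, and the set of directions along which a segment cannot pass between a given separated pair is a proper closed subarc of directions, degenerating to a single direction only at tangency. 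Taking the union over the $O(n^2)$ pairs still excludes only finitely many directions, which establishes the theorem. I would close by remarking that the ``sufficiently far'' threshold $L$ can be bounded in terms of the diameters of $S$ and $T$ and the chosen $\delta$, so the statement is effective.
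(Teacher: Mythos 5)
Your overall strategy---translate $T$ far in direction $\delta$, exploit the unlabeled setting to pick a convenient matching, and order the moves by $x$-coordinate---is the same as the paper's, but the matching you chose is backwards, and this is a fatal error rather than a detail. You send the currently rightmost unmoved start disc to the currently \emph{leftmost} unoccupied target slot. After the first few moves, the occupied slots are exactly the leftmost ones in $T+\vec{v}$, and every subsequent disc must reach a slot lying to the \emph{right} of discs that are already placed, while approaching from the far left. A single straight-line translation cannot do this in general: take $T$ to be a horizontal row of touching discs with centers $(0,0),(2,0),\ldots$ and $\delta$ pointing rightward. Once a disc occupies $(0,0)+\vec{v}$, any segment ending at $(2,0)+\vec{v}$ whose hippodrome avoids the open disc at $(0,0)+\vec{v}$ must arrive from the right half-plane (or vertically), never from the left---so your second move is blocked for every choice of $L$ and every direction close to $\delta$. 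Your patch, that the pairwise separation of $T$ leaves ``room to thread between'' placed discs, is false on two counts: a unit disc can pass between two interior-disjoint unit discs only if their centers are at distance at least $4$, and in any case each move is one straight segment, so no threading is possible. This error also propagates into your count of exceptional directions: under a threading-based argument, the bad directions for a single non-tangent pair form an arc of positive measure, so the union over $O(n^2)$ pairs would exclude infinitely many (possibly all) directions, contradicting the statement you are trying to prove.

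The paper's proof fixes exactly this point by using the \emph{order-preserving} matching: sort both $S$ and $T$ in reverse lexicographic order with respect to $\delta$ and match the $i$-th rightmost start position to the $i$-th rightmost target position, moving discs in that same order. Then when disc $i$ moves, all unmoved start discs lie (lexicographically) behind its source and all already-placed discs lie (lexicographically) beyond its destination, so for a sufficiently long translation the swept hippodrome stays at distance at least $2$ from every other center; no disc ever needs to pass between two others. The only degeneracies are pairs of \emph{tangent} discs whose common inner tangent is parallel to $\delta$, which is what yields the finite set of excluded directions ($O(n)$ of them, via the bisectors of touching pairs). The paper formalizes the ``far enough along the ray'' step with the wedge/vippodrome machinery of the labeled section (each potentially violated constraint defines a bad vippodrome whose wedge cannot contain a ray in direction $\delta$), but the geometric content is precisely the approach-from-free-space property that your matching destroys. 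If you replace your greedy ``rightmost to leftmost'' rule with the order-preserving matching and drop the threading claim, your argument becomes essentially the paper's.
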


\medskip
\noindent{\bf Remark.}
Note that the theorem implies that we can always move the discs from $S$ to $T$ in $2n$ moves:
first move the discs from $S$ to $T+\vec{v}$, using the valid itinerary provided by the theorem, and then move the discs from $T+\vec{v}$ to $T$ by translating each of them by $-\vec{v}$, in the order of their centers in direction $\vec{v}$ (which can easily be shown to be collision-free).
This almost reproduces the result of \cite{DBLP:journals/comgeo/AbellanasBHORT06}, already mentioned in the introduction, where the bound is $2n-1$, for the case where we are not allowed to shift the target locations.
In some sense, our result is stronger, in that in the second step, all the discs are translated by the same vector $-\vec{v}$.

\begin{proof}
Let $C$ be a valid configuration of $n$ points in the plane.
Let $c, c'$ be two points in $C$ and let $b(c,c')$ denote their perpendicular bisector.
Put $\B(C) = \set{b(c,c') \mid c,c' \in C, dist(c,c') = 2}$, which is the set of all perpendicular bisectors (common inner tangents) of any pair of touching discs of $D(C)$.
We say that a direction is \emph{generic} for $C$ if it is not parallel to any line in $\B(C)$.
(Note that, by Euler's formula for planar maps, there are only $O(n)$ non-generic directions.)
We fix a generic direction $\delta$ for both $S$ and $T$.
Observe that $\delta$ is also generic for $T+\vec{v}$, for any vector $\vec{v}$.
Without loss of generality, assume that $\delta$ is horizontal
and points to the right.
We define $\Pi_\delta(C)$ to be the reverse lexicographical order of the points in $C$, that is,
$\Pi_\delta(C) = (c_1, c_2, \dots, c_n)$, so that, for any $1 \leq i < j \leq n$, $c_i$ is to the right of (or at the same $x$-coordinate but above) $c_j$.
We now fix a matching $M_\delta$ according to the orders $\Pi_\delta(S)$ and $\Pi_\delta(T)$, by aligning both orders, i.e., $M_\delta(s_i) = t_i$, where $s_i$ (resp., $t_i$) is the $i$-th point in $\Pi_\delta(S)$ (resp., $\Pi_\delta(T)$), for $i=1,\dots,n$.
The matching $M_\delta$ transforms the problem to the labeled version \lst($S$,$T$,$M_\delta$).
We claim that this specific instance is always feasible, and, moreover, admits valid translations in direction $\delta$.
Order $M_\delta$ in the same order of $\Pi_\delta(S)$ and $\Pi_\delta(T)$, i.e., $(s_1,t_1), \dots, (s_n,t_n)$, and denote this order as $\Pi(M_\delta)$.
We claim that one can always choose $\vec{v}$, in direction $\delta$, such that $(s_1,t_1+\vec{v}), \dots, (s_n,t_n + \vec{v})$ is a valid itinerary.

We apply a simpler variant of the techniques developed in \Section~\ref{section:labeled}.
Since we have already assigned the fixed order $\Pi(M_\delta)$ to $M_\delta$, we do not need to take into consideration all the vippodromes, but only the ones that impose constraints that violate $\Pi(M_\delta)$.
Let $A_i$ be the pair $(s_i,t_i) \in M_\delta$ (so $D(s_i)$ is the disc that moves at step $i$).
Let $$\V_{\rm bad}(\delta)= \set{\V^{(j)}_{A_k A_l} \mid j \in \set{1,2}, A_k, A_l \in M_\delta, k>l},$$
and observe that $|\V_{\rm bad}(\delta)| = n(n-1)$.
In other words, $\V_{\rm bad}(\delta)$ is the subset of all the vippodromes $V$, such that, for each $\vec{v} \in V$, the constraint that $V$ represents violates the itinerary according to $\Pi(M_\delta)$ between $S$ and $T+\vec{v}$.
Thus, in order to find a valid translation $\vec{v}$ in direction $\delta$, it suffices to show that the ray $\rho$ from the origin in direction $\delta$ (the positive $x$-axis by assumption) is not fully contained in the union of the vippodromes in $\V_{\rm bad}(\delta)$.

\begin{figure}
	\centering
	\includegraphics[trim=0 170 100 45, clip, width=.7\textwidth]{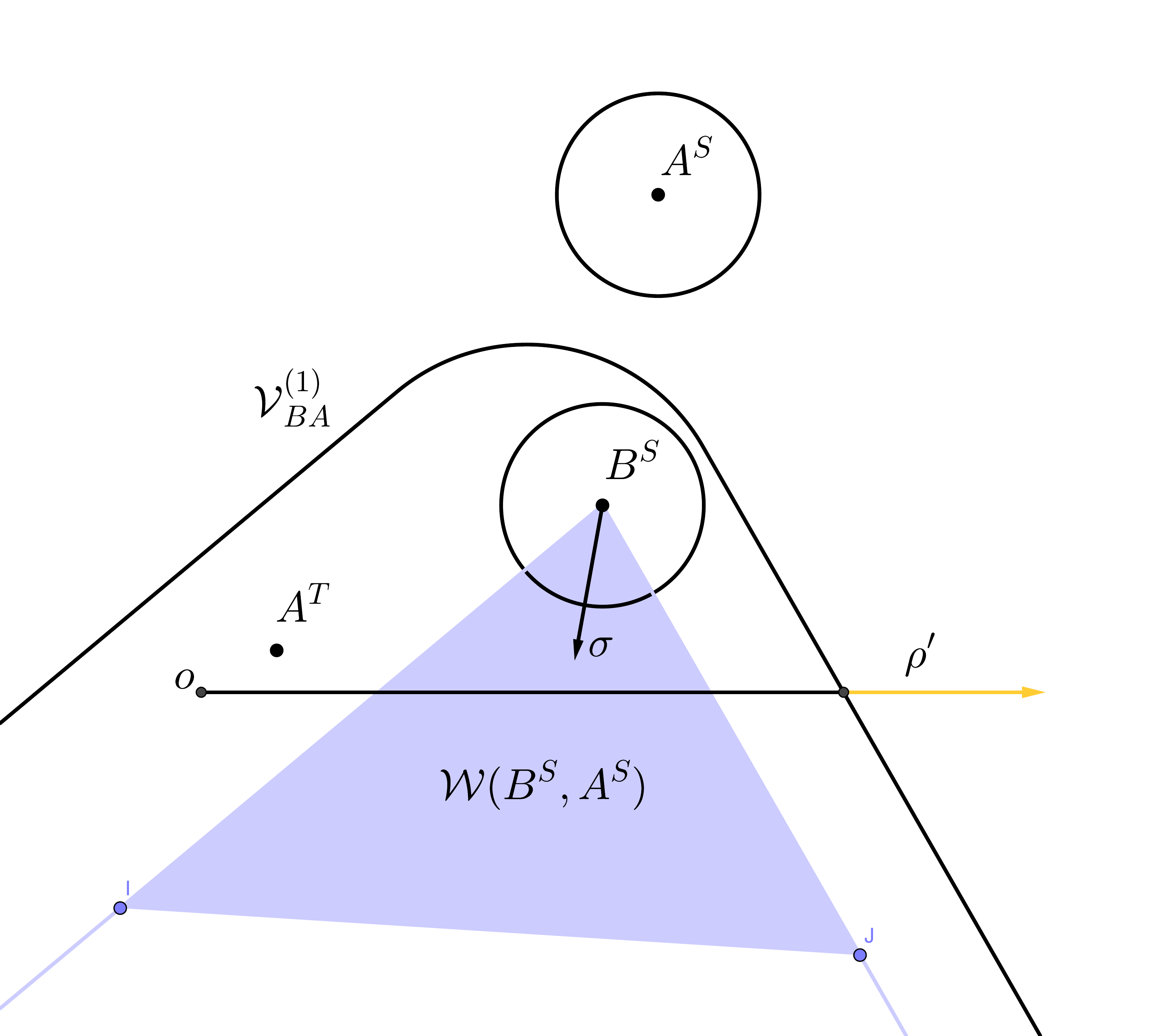}
	\caption{\sf 	
	According to $\Pi(M_\delta)$, $A$ preforms a motion before $B$. 
	Therefore, we need to rule out translations $\vec{v}$ for which $B$ has to perform a motion before $A$, whose locus is exactly the bad vippodrome $\V^{(1)}_{BA}$.
	Here $\delta$ is horizontal	and points to the right.
	The wedge $\W(B^S,A^S)$ (in blue), and thus also the bad vippodrome, cannot fully contain $\rho$, and thus there exists $\rho'$ (in orange) that is disjoint from the vippodrome.}
	\label{fig:bad_vippo}
\end{figure}

We claim that there exists a ray $\rho' \subseteq \rho$ such that $\rho' \cap V = \emptyset$ for every $V \in \V_{\rm bad}(\delta)$.
Indeed, let $V = \V^{(1)}_{BA}$, such that $A,B \in M_\delta$ and $A$ performs a motion before $B$ according to $\Pi(M_\delta)$; that is, $V \in \V_{\rm bad}(\delta)$.
By construction, if $A$ performs a motion before $B$ according to the itinerary, then $A^S$ is lexicographically larger than $B^S$, and so $B^S$ is to the left of (or has the same $x$-coordinate and is below) $A^S$; see Figure~\ref{fig:bad_vippo}.
Since the positive $x$-direction $\delta$ is generic, $D(A^S)$ and $D(B^S)$ cannot lie vertically above one another and have a common inner tangent.
Let $\sigma$ be the ray that emanates from $B^S$ in the direction from $A^S$ to $B^S$.
By our assumption, $\sigma$ points either directly downwards, or else to the left (contained in the open left vertical halfplane that contains $B^S$ on its right boundary).
Note that $\sigma$ is the mid-ray of the wedge $\W(B^S,A^S)$.
In the former case, the opening angle of $\W(B^S,A^S)$ is strictly smaller than $\pi$, and in the latter case, it is at most $\pi$.
In either case, $\W(B^S,A^S)$ is disjoint from the rightward-directed horizontal ray from $B^S$ (the ray in direction $\delta$).
This implies that $\W(B^S,A^S)$ cannot fully contain any rightward-directed ray.
Since $V = \W(B^S,A^S) \oplus D_2(o) - A^T$, the same claims hold for $V$ as well.
The argument for $\V^{(2)}_{BA}$ is similar.
In conclusion, $\rho$ must exit from every vippodrome of $\V_{\rm bad}(\delta)$, which establishes the claim.

Hence, there are infinitely many translations $\vec{v}$ in $\rho$ that do not belong to any vippodrome of $\V_{\rm bad}(\delta)$.
By construction, this implies that \ust($S$, $T+\vec{v}$) is feasible for every such $\vec{v}$ (with the valid itinerary induced by $\Pi(M_\delta)$).
Furthermore, the above holds for every generic direction $\delta$.
This completes the proof of the theorem.
\end{proof}

It is now fairly simple to devise an algorithm for finding a valid translation $\vec{v}$ and for constructing a valid itinerary from $S$ to $T +\vec{v}$.
First, choose a generic direction $\delta$, in $O(n \log n)$ time, and assume it to point in the positive $x$-direction.
Calculate $\Pi_\delta(S), \Pi_\delta(T)$ and $M_\delta$ in $O(n \log n)$ time.
Construct $\V_{\rm bad}(\delta)$ according to $M_\delta$, in $O(n^2)$ time.
Intersect all the vippodromes of $\V_{\rm bad}(\delta)$ with the positive $x$-axis, and find the rightmost intersection point $\vec{v}_{\max}$ with these vippodromes, which can be done in $O(n^2)$ time.
Any translation $\vec{v}$ on the $x$-axis to the right of $\vec{v}_{\max}$ has a valid itinerary from $S$ to $T+\vec{v}$, given by the order $\Pi(M_\delta)$.
The overall running time of this algorithm is therefore $O(n^2)$.

Note that limiting the range of valid translations to the ray to the right of $\vec{v}_{\max}$ may be too restrictive---any translation on the positive $x$-axis that lies outside the union of the bad vippodromes is valid, and we can simply return the one closest to the origin, say.
We will exploit this simple observation in the following section.

	\newpage
	\ifthesis
\section
[Unlabeled Version: Space-Aware Practical Solutions]%
{Unlabeled Version: \\ Space-Aware Practical Solutions}

In this \ssection we provide practical heuristic solutions for the different optimization variants of the unlabeled version, namely \saustv{$S$,$T$}, \sausta{$S$,$T$} and \saustd{$S$,$T$}, defined in full analogy to the corresponding versions for the labeled case.
As noted before, the unlabeled version seems to be harder than the labeled version, and so we only present heuristic approaches that guarantee optimality under more restrictive notions of validity.
In \Subsection~\ref{subsec:heuristics} we propose several optimization algorithms, each of which considers a different notion of validity, so that the running time improves as the optimality of the solution deteriorates.
We defer the details for the more involved case of minimizing the smallest enclosing disc, under the strictest notion of validity, to \Subsection~\ref{subsec:sed_1d}.
Our fastest algorithms are restricted to valid translations in any fixed direction $\delta$, such that the matching (between $S$ and $T$) and the itinerary are set according to $\delta$.
The algorithms run in $O(n^2 \log n)$ time for \saustv{$S$,$T$} and \sausta{$S$,$T$}, and in $O(n^2\alpha(n)\log n)$ time for \saustd{$S$,$T$}, where $\alpha(n)$ is the inverse Ackermann function.
In \Subsection~\ref{subsec:diameter}, we prove that these algorithms provide valid translations that are larger by at most a linear factor for the shortest translation or the radius of $\SED(S\cup(T+\vec{v}))$, for a suitable choice of $\delta$.
Assuming that the distance between any two points in each configuration is at least $2+\eps$, for some parameter $\eps>0$, we prove that the heuristic solution is larger by only a factor of $O(\frac{1}{\sqrt{\eps}})$.
Experimental results of an implementation of the heuristic algorithm, for the most restrictive notion of validity for \saustv{$S$,$T$}, are presented in \Subsection~\ref{subsec:implement}.
As shown, the algorithm (implemented in Python~3.7) solves different types of instances for the \saustv{$S$,$T$} problem, each with hundreds of discs, in seconds.
Moreover, the solutions are efficient and applicable for practical usage.

\else
\section
[Unlabeled Version: Space-Aware Practical Solutions]%
{\parbox[t]{15em}{Unlabeled Version: \\ Space-Aware Practical Solutions}}

In this \ssection we provide practical heuristic solutions for the different optimization variants of the unlabeled version, namely \saustv{$S$,$T$}, \sausta{$S$,$T$} and \saustd{$S$,$T$}, defined in full analogy to the corresponding versions for the labeled case.
In \Subsection~\ref{subsec:heuristics} we propose several optimization algorithms, each of which considers a different notion of validity, so that the running time improves as we make the problem more constrained.  
We defer the details for the more involved case of minimizing the smallest enclosing disc, under the strictest notion of validity, to \Subsection~\ref{subsec:sed_1d}.
Our fastest algorithms are restricted to valid translations in any fixed direction $\delta$, such that the matching (between $S$ and $T$) and the itinerary are set according to $\delta$.
In \Subsection~\ref{subsec:diameter}, we prove that these algorithms provide valid translations that are larger by at most a linear factor for the shortest translation or the radius of $\SED(S\cup(T+\vec{v}))$, for a suitable choice of $\delta$.
Assuming that the distance between any two points in each configuration is at least $2+\eps$, for some parameter $\eps>0$, we prove that the heuristic solution is larger by only a factor of $O(\frac{1}{\sqrt{\eps}})$.
Experimental results obtained with our implementation of the heuristic algorithm, for the most restrictive notion of validity for \saustv{$S$,$T$}, are presented in \Subsection~\ref{subsec:implement}.
\fi

\subsection{Heuristics for Short Valid Translations}
\label{subsec:heuristics}
The analysis in \Section~\ref{section:unlabeled}, while providing an abundance of valid translations, has the disadvantage that the valid translations that it yields are potentially too long (one needs to go sufficiently far away in the $\delta$-direction to get out of all the bad vippodromes).
This is undesirable with our space-aware objective in mind, where we seek short valid translations.
In this \ssection, we provide heuristics for finding shorter valid translations, thereby obtaining shorter heuristic solutions to \saustv{$S$,$T$}.
Similar techniques also yield heuristic solutions to \sausta{$S$,$T$} and to \saustd{$S$,$T$}.
Here \saustv{$S$,$T$}, \sausta{$S$,$T$} and \saustd{$S$,$T$} stand for the \saust{} problem under the three minimization criteria used in \Section~\ref{section:labeled_sa}.
The resulting algorithms are faster than those obtained for the labeled case (at the cost of not guaranteeing optimality over the entire space of valid translations).
As the unlabeled problem is much harder than the labeled problem (and we believe it to be NP-hard), we make no attempt at solving it exactly.
We exploit Theorem~\ref{theo:unlabeled_feasible}, which shows the existence of at least one valid translation for any generic direction $\delta$, and the algorithm, presented in \Section~\ref{section:unlabeled}, for finding such translations.
This machinery motivates the following algorithms.

In \Subsection~\ref{subsec:diameter} we will show how to choose a good direction $\delta$ for which we can give reasonable upper bounds on the length of the shortest valid translation, or of the valid translation $\vec{v}$ that minimizes the area of the axis-aligned bounding rectangle, or the smallest enclosing disc of $S \cup (T + \vec{v})$.
In practice, one might want to choose a sufficiently dense set of generic directions, in the hope of improving the quality of the following solutions.

For now, fix a generic direction $\delta$ for $S$ and $T$, and assume, for simplicity and with no loss of generality, that it is the positive $x$-direction.
Recall that it takes $O(n\log n)$ time to compute $\Pi_\delta(S)$, $\Pi_\delta(T)$, $M_\delta$, and $\Pi(M_\delta)$.
This transforms the problem to a labeled instance, according to $M_\delta$, with the additional constraint that we require the discs to move according to the order $\Pi(M_\delta)$.
Since the order is now fixed, it suffices to consider, as in \Section~\ref{section:unlabeled}, only the vippodromes in $\V_{\rm bad}(\delta)$.
We form the arrangement $\A_{\rm bad}$ of their boundaries, and collect all the faces of $\A_{\rm bad}$ that lie outside the union of the bad vippodromes.
This is the set of all the valid translations $\widetilde{Q}$, with respect to the matching $M_\delta$ and the order $\Pi(M_\delta)$.

Constructing $\widetilde{Q}$, that is, collecting the faces of $\A_{\rm bad}$ outside the union of the bad vippodromes, takes $O(n^4\log n)$, or $O(n^2\lambda_4(n^2))$ time, as shown in \Section~\ref{section:labeled}.%
\footnote{Recall that the optimization procedures in \Section~\ref{section:labeled_sa} require a preprocessing stage that computes $Q$ in $O(n^6)$ time.
This cost is only a consequence of the (expensive) need to test each face of $\VA$ for acyclicity, which is no longer needed under our new notion of validity.
This is why constructing $\widetilde{Q}$ is faster.
Note also that restricting the analysis to the bad vippodromes cleans the presentation, but is not the source of the above improvement, as the number of bad vippodromes is still half the number of all vippodromes.}
Finding an optimal translation in $\widetilde{Q}$ where we either minimize its length or the area of the axis-aligned bounding rectangle, can be done in $O(n^2\log n)$ additional time (see \Subsections \ref{subsec:labeled_short_vector} and \ref{subsec:labeled_aabr}).
Optimizing the size of the smallest enclosing disc, using the algorithm of \Subsection~\ref{subsec:labeled_sed}, is more expensive, and takes $O(n^5 \log n)$ time.%

We note that a major role of the direction $\delta$ is to define the orders $\Pi_\delta(S)$ and $\Pi_\delta(T)$ (and thus also the order $\Pi(M_\delta)$ of $M_\delta$), as the reverse lexicographical order, induced by $\delta$ and its orthogonal direction, of the disc centers.
The algorithms mentioned above will find an optimal valid translation, under this restricted notion of validity, within the entire set of valid translations $\widetilde{Q}$, not necessarily in direction $\delta$.

We can do better, at the cost of further constraining the set of valid translations, considering only the translations $\vec{v}$ in direction $\delta$, namely the intersection of $\widetilde{Q}$ and the ray emanating from the origin in direction $\delta$.
Recall the algorithm for finding a valid translation at the end of \Section~\ref{section:unlabeled}.
As noted at the end of that \ssection, it is likely that the translation $\vec{v}_{\max}$ computed there is not the shortest valid translation in direction $\delta$, according to $\Pi(M_\delta)$.
In order to find the shortest such valid translation, we again construct the bad vippodromes and intersect them with the positive $x$-axis.
Instead of finding the rightmost intersection point, we sort the resulting valid intervals along the $x$-axis (the ones that are free of all bad vippodromes), and output the leftmost valid point, which is clearly the shortest valid translation, under the present restricted setup.
This process is mildly slower than the original algorithm (see the end of \Section~\ref{section:unlabeled}), since we need to sort $O(n^2)$ intervals, and can be carried out in $O(n^2 \log n)$ time.

The above performance bounds also hold for finding the translation $\vec{v}$ that minimizes the area of the axis-aligned bounding rectangle of $D(S)\cup D(T+\vec{v})$: here, at each valid interval $I$ along the ray, we need to compute the minimum of a univariate function of constant complexity over $I$, which takes constant time.

The problem of minimizing the smallest enclosing disc of $D(S)\cup D(T+\vec{v})$ is more involved, and is discussed next.
\ifthesis
\subsection[Smallest Enclosing Disc for Translations Along a Line]%
{\parbox[t]{17em}{Smallest Enclosing Disc \\ for Translations Along a Line}}
\else
\subsection{Smallest Enclosing Disc for Translations Along a Line}
\fi
\label{subsec:sed_1d}

We next show how to adapt the algorithm for finding a valid translation that minimizes the radius of $D := \SED(S\cup(T+\vec{v}))$, which is described in \Subsection~\ref{subsec:labeled_sed} and to which we will refer as the \emph{original algorithm}, to the case where the translations are restricted to lie on a single line $\lambda$ in some generic direction $\delta$, and $\lambda$ passes through the origin.
Moreover, the matching $M_{\delta}$ and the order $\Pi(M_\delta)$ are fixed according to $\delta$.
We show that the algorithm can be implemented so that it runs in nearly quadratic time for this special case.

In what follows, we only highlight the necessary adjustments of the original algorithm to fit this special case.
The main differences come from searching for translations along $\lambda$ instead of searching for translations along the many vippodrome boundaries.

Recall that $\widetilde{Q}$ is the set of all valid translations according to the matching $M_{\delta}$ and the order $\Pi(M_\delta)$, i.e., the set of all translations that are not contained in any bad vippodrome of $\V_{\rm bad}(\delta)$.
We begin by computing the $O(n^2)$ intersection points of the bad vippodrome boundaries with $\lambda$, select those points that lie on $\bd \widetilde{Q}$, and sort them in their order along $\lambda$.
The cost of this step is $O(n^2\log n)$.
Let $F$ denote this sorted sequence.
Ignoring degenerate situations in which $\lambda$ is tangent to some vippodrome boundary, $F$ partitions $\lambda$ into an alternating sequence of valid and invalid intervals.
We store $F$ in a balanced binary search tree $\D_F$, similarly to that in Lemma~\ref{lemma:dq}.
Using $\D_F$, we can answer two types of queries:
(i) Given a translation $\vec{v}$, determine whether $\vec{v}$ is valid.
(ii) Given an interval $I\subset\lambda$, find the smallest or largest
point of $F\cap I$.
Each query of either type can be answered in $O(\log n)$ time.

Case (i) in this restricted variant of the algorithm is almost identical to the original algorithm ---
the sole difference is the need to calculate the intersection $I = V(\xi) \cap \lambda$, and then test, by searching in $\D_F$, whether $I \cap F \ne\emptyset$.
If so, the translation $\vec{v}\in I \cap F$ is valid, and we can halt the algorithm, returning $\vec{v}$ and $\SED(S)$.
This case can be trivially executed in $O(n \log n)$ time.

Consider next Case (ii).
Recall that for each translation $\vec{v}$, the functions $f_t,g_t$ are defined to be the endpoints of a subinterval of $e$, such that $t+\vec{v}$ is enclosed by the disc centered at any point of the subinterval and passes through $s_1$ and $s_2$.
Since the functions $f_t,g_t$ are simpler in this case, we can more precisely bound the running time of the algorithm.
(We note though that the difference in running time is negligible.)
We construct the sandwich region $\Xi$ where $\vec{v}$ varies along the line $\lambda$, defined by:
$$
\max_{t\in T} f_t(\vec{v}) \le \xi \le \min_{t\in T} g_t(\vec{v}).
$$
This is a univariate sandwich computation, which takes $O(\lambda_{s+1}(n)\log n)$ time,%
\footnote{The reason why the index in the above bound is $s+1$ is that each of the domains of the partial functions $f_t(\vec{v})$, $g_t(\vec{v})$ is an interval with at least one endpoint coinciding with an endpoint of $e$; see \cite{DBLP:books/daglib/0080837} for details.}
where $s$ is the maximum number of intersections between a pair of functions $f_t(\vec{v})$, $f_{t'}(\vec{v})$, or a pair $g_t(\vec{v})$, $g_{t'}(\vec{v})$, for $t, t'\in T$.
Any such intersection occurs at a translation $\vec{v}$ at which $s_1, s_2, t+\vec{v}, t'+\vec{v}$ become cocircular. 
If we assume, without loss of generality, that $\lambda$ is the $x$-axis, and use $v$ to denote the $x$-coordinate of $\vec{v}$, then the cocircularity property is expressed by the equation
$$
\begin{vmatrix}
1 & s_{1x} & s_{1y} & s_{1x}^2 + s_{1y}^2 \\
1 & s_{2x} & s_{2y} & s_{2x}^2 + s_{2y}^2 \\
1 & t_{1x}+v & t_{1y} & (t_{1x}+v)^2 + t_{1y}^2 \\
1 & t_{2x}+v & t_{2y} & (t_{2x}+v)^2 + t_{2y}^2
\end{vmatrix}
= 0 .
$$
Subtracting the fourth row from the third one, one can easily verify that the resulting equation is quadratic in $v$, implying that $s=2$.
Hence the sandwich region can be constructed in $O(\lambda_3(n)\log n) = O(n\alpha(n)\log n)$ time.

To find this pair, we break the portion of the upper envelope within $\Xi$ into maximal connected subarcs, so that each arc $\gamma$ lies on the graph of a single function $f_t(\vec{v})$.
We further break $\gamma$ into $O(1)$ subarcs, so that in each of them $f_t(\vec{v})$ is monotone. 

We search with each subarc $\gamma'$ in $\D_F$, compute the smallest (resp., largest) point of $\gamma'\cap F$, when $f_t$
is increasing (resp., decreasing) over $\gamma'$, and add it to the list of candidate solutions.
We return the translation $\vec{v}$ from the list that has smallest value of the corresponding $f_t(\vec{v})$.

The overall cost of this step, for a fixed pair $(s_1,s_2)$, continues to be $O(n\alpha(n)\log n)$.
Summing over all the pairs $(s_1,s_2)$, the overall cost of handling Case (ii) is $O(n^2\alpha(n)\log n)$.

We finally turn to Case (iii).
Recall that we want to compute the intersection of $K = K_{s_0,t_0}$ with $\lambda$, for any pair $s_0 \in S$ and $t_0 \in T$, and then find a valid translation (contained in the intersection) that is closest to $s_0-t_0$.
In order to do that efficiently, we refrain from calculating $K$ explicitly, and so we take a different approach as follows.

We seek the translations $\vec{v}$ so that the center of the diametral disc $D(\vec{v})$ determined by $s_0$ and $t_0 +\vec{v}$, which is $c_{s_0,t_0}(\vec{v}) = \frac12(s_0+t_0)+\frac12 \vec{v}$, satisfies:
(a) $c_{s_0,t_0}(\vec{v})$ lies in the cell $V(s_0)$ of $s_0$ in $\FVD(S)$ (thus guaranteeing that $S\subset D(\vec{v})$), and
(b) $c_{s_0,t_0}(\vec{v})$ lies in the cell of $t_0+\vec{v}$ in $\FVD(T+\vec{v}) =  \FVD(T)+\vec{v}$ (thus guaranteeing that $T+\vec{v} \subset D(\vec{v})$).
This latter cell is $V(t_0) + \vec{v}$, where $V(t_0)$ is the cell of $t_0$ in $\FVD(T)$.
In other words, $\vec{v}$ has to satisfy
\begin{gather*}
\frac12(s_0+t_0) + \frac12\vec{v}  \in V(s_0) \\
\frac12(s_0+t_0) + \frac12\vec{v}  \in V(t_0) + \vec{v},
\end{gather*}
or
\begin{gather*}
\vec{v}  \in 2V(s_0)    - (s_0+t_0) \\
\vec{v}  \in - 2V(t_0)  + (s_0+t_0)  ,
\end{gather*}
where $2V(s_0) = \{2z \mid z\in V(s_0)\}$ and similarly for $2V(t_0)$.
Note that the cells $V(s_0)$ and $V(t_0)$ are convex.

These observations lead to the following procedure:
We first compute $\FVD(S)$ and $\FVD(T)$, in $O(n\log n)$ time, and enlarge both diagrams by the factor $2$.
We then preprocess each of the (convex polygonal) cells of both (enlarged) diagrams for efficient line intersection queries, where all the query lines have a fixed direction (that of $\lambda$).
This preprocessing is trivial and takes linear time.
Answering a query is equally trivial and takes $O(\log n)$ time.

We now iterate over all pairs $s_0\in S$, $t_0\in T$.
For each such pair, we
(a) take the line $(s_0+t_0) + \lambda$ and intersect it with $2V(s_0)$, to obtain an interval $I_{s_0,t_0}$, and
(b) take the line $(s_0+t_0) - \lambda$ (since we assume that $\lambda$ passes through the origin, this is the same as the line $(s_0+t_0) + \lambda$) and intersect it with $2V(t_0)$, to obtain an interval $J_{s_0,t_0}$.
All this takes $O(\log n)$ time.
We map these intervals back into $\lambda$ and intersect the resulting intervals, that is, we obtain the interval
$$
K^{\lambda}_{s_0,t_0} := \Bigl( I_{s_0,t_0} - (s_0+t_0) \Bigr) \cap
\Bigl( (s_0+t_0) - J_{s_0,t_0} \Bigr) ,
$$
along $\lambda$, in additional constant time.

Note that our goal is to find in $K^{\lambda}_{s_0,t_0}$ a valid translation that minimizes the diameter of $D_{s_0,t_0}(\vec{v})$, which is $\|s_0-t_0-v\|$.
That is, we seek the translation that is closest to $s_0-t_0$ among all valid translations in $K^{\lambda}_{s_0,t_0}$.
To do so, we first compute the closest translation to $s_0-t_0$ in $K^{\lambda}_{s_0,t_0}$ (which may be an endpoint of $K^{\lambda}_{s_0,t_0}$), and check whether it is valid.
If so, we output it as a candidate solution.
Otherwise, we break $K^{\lambda}_{s_0,t_0}$ into two subintervals at the point of $\lambda$ closest to $s_0-t_0$ (if that point lies in $K^{\lambda}_{s_0,t_0}$), search $\D_F$ with each subinterval, and output the first or last point of $F$ in that interval, as appropriate.
We repeat this step for each pair $s_0\in S$ and $t_0\in T$, and return the candidate translation that minimizes the corresponding distance $\|s_0-t_0-v\|$.
The overall cost of this step is $O(n^2\log n)$.

In summary, we have:
\begin{theorem} \label{sed-1d}
	Finding the valid translation $\vec{v}\in\lambda\cap \widetilde{Q}$ that minimizes the radius of $\SED(S\cup(T+\vec{v}))$ can be done in $O(n^2\alpha(n)\log n)$ time.
\end{theorem}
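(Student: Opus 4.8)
The plan is to adapt the three-case dissection of the original algorithm of \Subsection~\ref{subsec:labeled_sed}, exploiting that confining the translations to a single line $\lambda$ turns every two-dimensional search into a one-dimensional one. First I would encode validity. Since the matching $M_\delta$ and the order $\Pi(M_\delta)$ are fixed by $\delta$, only the bad vippodromes of $\V_{\rm bad}(\delta)$ matter, and $\widetilde{Q}$ is their complement. I would intersect the $O(n^2)$ bad vippodrome boundaries with $\lambda$, keep the points lying on $\bd\widetilde{Q}$, and sort them into a sequence $F$ that splits $\lambda$ into alternating valid and invalid intervals; storing $F$ in a balanced search tree $\D_F$ costs $O(n^2\log n)$ and supports, in $O(\log n)$ time, both point-validity queries and ``next valid point in a query subinterval'' queries. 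This preprocessing dominates and already sets the target bound.

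Next I would treat the three cases as in the original algorithm. Case~(i), where $\bd D$ is determined by one configuration alone, reduces to testing whether $V(\xi)\cap\lambda$ contains a valid point via a single $\D_F$ query, costing $O(n\log n)$. For Case~(ii), where two points come from one set (a farthest-point Voronoi edge $e$ of $\FVD(S)$ or $\FVD(T)$) and one from the other, I would, for each fixed edge, form the sandwich $\max_{t}f_t(\vec{v})\le\xi\le\min_t g_t(\vec{v})$ with $\vec{v}$ running along $\lambda$. The key quantitative step is bounding the order of the Davenport--Schinzel sequence: two functions $f_t,f_{t'}$ (or $g_t,g_{t'}$) cross where $s_1,s_2,t+\vec{v},t'+\vec{v}$ become cocircular, and writing out the $4\times4$ cocircularity determinant and subtracting one row shows this is quadratic in the coordinate along $\lambda$, so the crossing number is $2$ and the relevant envelope has complexity $\lambda_3(n)$. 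Hence each edge costs $O(\lambda_3(n)\log n)=O(n\alpha(n)\log n)$, and summing over the $O(n)$ edges of $\FVD(S)$ and $\FVD(T)$ gives $O(n^2\alpha(n)\log n)$.

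For Case~(iii), where $D$ is the diametral disc of a pair $s_0\in S$, $t_0+\vec{v}\in T+\vec{v}$, I would avoid building the convex polygon $K_{s_0,t_0}$ explicitly. Instead I would observe that the containments $S\subset D$ and $T+\vec{v}\subset D$ are equivalent to the disc center $\tfrac12(s_0+t_0)+\tfrac12\vec{v}$ lying in the farthest-point Voronoi cell of $s_0$ in $\FVD(S)$ and in the translated cell of $t_0$ in $\FVD(T)+\vec{v}$, respectively. After scaling both diagrams by $2$ and preprocessing their convex cells for line-intersection queries along the fixed direction of $\lambda$ (trivial, in linear time), each pair $(s_0,t_0)$ yields, in $O(\log n)$ time, an interval $K^{\lambda}_{s_0,t_0}$ along $\lambda$; within it I would find the valid translation closest to $s_0-t_0$ by one closest-point computation plus a constant number of $\D_F$ queries. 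Over all $O(n^2)$ pairs this costs $O(n^2\log n)$. Combining the three cases yields the claimed $O(n^2\alpha(n)\log n)$ bound.

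I expect the main obstacle to be the Davenport--Schinzel analysis in Case~(ii): pinning down exactly which partial functions $f_t,g_t$ arise, confirming that their domains share an endpoint of $e$ (so that the order that actually governs the envelope is $s+1$ rather than $s$), and verifying that pairwise crossings reduce to a single quadratic so that the near-linear $\lambda_3$ bound genuinely applies. The remaining cases are largely bookkeeping; the only subtlety there is to keep every search funneled through $\D_F$, so that validity is enforced without ever materializing $\widetilde{Q}$ in two dimensions.
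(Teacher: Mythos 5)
Your proposal is correct and follows essentially the same route as the paper's own proof: the same $\D_F$ preprocessing along $\lambda$, the same three-case dissection, the same cocircularity-determinant argument yielding $s=2$ and the $\lambda_3(n)$ (with the $\lambda_{s+1}$ caveat handled exactly as in the paper's footnote) envelope bound in Case~(ii), and the same trick of replacing $K_{s_0,t_0}$ by scaled farthest-point Voronoi cells queried along the fixed direction in Case~(iii). The only difference is one of emphasis, not substance.
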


\subsection{Bounding the Heuristic Solutions}
\label{subsec:diameter}
The analysis in the preceding \ssubsection provides heuristics for obtaining short valid translations, but gives no guarantees on how well we approximate the optimal valid translation.
In this \ssubsection we show how to choose a good direction $\delta$ for which we can give a reasonable (and sometimes asymptotically optimal) upper bound on the length of the shortest valid translation in direction $\delta$, or of a valid translation $\vec{v}$ in direction $\delta$ that minimizes \SED{($S\cup (T+\vec{v})$)}.%
\footnote{Similar techniques can be applied to the variant of optimizing the area of the axis-aligned bounding rectangle.
To keep the presentation somewhat shorter, we do not include this case in this \ssection.}
Recall that we denote by $r(P)$ the radius of the smallest enclosing disc of a set of points $P$.
Our main result is:

\begin{theorem}
\label{theo:bound_length}
Let $S$ and $T$ be two valid configurations, of $n$ points each, such that $S$ and $T$ share the centers of their smallest enclosing discs.
There exists a translation $\vec{v}$ such that ${\rm \ust}(S,T+\vec{v})$ is feasible and $|\vec{v}|=O((r(S)+r(T)) n)$.
The same asymptotic bound also applies to $r(S\cup (T+\vec{v}))$.
\end{theorem}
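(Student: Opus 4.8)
The plan is to instantiate the machinery behind Theorem~\ref{theo:unlabeled_feasible} with a \emph{carefully chosen} generic direction $\delta$, and then to quantify how far along the ray $\rho$ (from the origin in direction $\delta$) one must travel to leave the union of all bad vippodromes. Place the origin $o$ at the common center of the two smallest enclosing discs, so that every point of $S$ lies within distance $r(S)$ of $o$ and every point of $T$ within distance $r(T)$. Fix a generic $\delta$, taken (with no loss of generality) to be the positive $x$-direction; fixing the matching $M_\delta$ and the order $\Pi(M_\delta)$ reduces the problem to the labeled setting, and a translation $(x,0)$ with $x\ge 0$ is valid precisely when it lies outside $\bigcup \V_{\rm bad}(\delta)$. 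Each bad vippodrome $V=\W(B^S,A^S)\oplus D_2(o)-A^T$ (and the symmetric second-type ones) is convex, being the Minkowski sum of a wedge of opening angle at most $\pi$ with a disc, so $V\cap\rho$ is a single interval $[\ell_V,r_V]$. Consequently $\vec{v}=(\max_V r_V+1,\,0)$ is a valid translation, and the whole theorem reduces to bounding $\max_V r_V$ by $O((r(S)+r(T))n)$.

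The technical heart is bounding the rightward reach $r_V$ of one bad vippodrome. Write $d=|A^S-B^S|\ge 2$, so the half-opening angle of $\W(B^S,A^S)$ is $\alpha=\arcsin(2/d)$, and set the ``tangency margin'' $\gamma=\tfrac{\pi}{2}-\alpha\ge 0$ (so $\gamma=0$ iff $D(A^S),D(B^S)$ are tangent). Let $\beta_V$ denote the angle between the rightward horizontal direction and the boundary ray of $\W(B^S,A^S)$ that is nearest to it. A short computation with the wedge apex at $B^S$ and mid-ray direction $\phi=\angle(B^S-A^S)$ shows the clean identity $\beta_V=\gamma+\operatorname{angdist}(\mathrm{bisector}(A^S,B^S),\,\delta)$, where the bisector is perpendicular to the mid-ray. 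Because $V$ contains no rightward ray (the generic-direction property used in Theorem~\ref{theo:unlabeled_feasible}), $\beta_V>0$ and $V\cap\rho$ is bounded; intersecting the rounded wedge with the horizontal line through the relevant height and using $\tan\beta_V\ge\tfrac{2}{\pi}\beta_V$ yields $r_V \le B^S_x - A^T_x + 2 + |A^T_y-B^S_y|/\tan\beta_V = O\!\big((r(S)+r(T))/\beta_V\big)$, since every coordinate appearing here is $O(r(S)+r(T))$ by the centered setup.

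It remains to choose $\delta$ so that $\beta_V$ is bounded below for \emph{every} bad vippodrome. For a pair at distance $d>3$ we have $\gamma\ge \tfrac{\pi}{2}-\arcsin(2/3)=\Omega(1)$, hence $\beta_V=\Omega(1)$ and $r_V=O(r(S)+r(T))$ regardless of $\delta$. The only dangerous pairs are the ``nearly tangent'' ones, with $d\le 3$; here a packing argument (in a valid configuration, a disc of radius $3$ around any center contains $O(1)$ centers that are pairwise $\ge 2$ apart) shows each point has $O(1)$ such partners, so there are only $O(n)$ dangerous pairs in $S$ and in $T$ combined. Each contributes one forbidden bisector direction, so by a pigeonhole/measure argument I can pick a generic $\delta$ whose angular distance from all $O(n)$ dangerous bisectors is at least $c/n$ for a suitable constant $c$. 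Then $\beta_V\ge \operatorname{angdist}(\mathrm{bisector},\delta)\ge c/n$ for dangerous pairs and $\beta_V=\Omega(1)$ otherwise, giving $r_V=O((r(S)+r(T))n)$ uniformly.

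Putting these together, $\vec v=(\max_V r_V+1,0)$ is valid (by the argument of Theorem~\ref{theo:unlabeled_feasible}, being outside every bad vippodrome realizes the itinerary $\Pi(M_\delta)$) and satisfies $|\vec v|=O((r(S)+r(T))n)$. For the smallest-enclosing-disc bound, note that $S\cup(T+\vec v)$ lies in the disc about $o$ of radius $\max\{r(S),\,|\vec v|+r(T)\}=O((r(S)+r(T))n)$, so $r(S\cup(T+\vec v))$ obeys the same bound. I expect the main obstacles to be (i) the rounded-wedge reach estimate with the identity $\beta_V=\gamma+\operatorname{angdist}(\cdot,\delta)$, which must be done carefully across the up-right/down-right boundary-ray cases, and (ii) the packing count certifying that only $O(n)$ pairs are dangerous; the same estimates, applied with the uniform lower bound $\gamma=\Omega(\sqrt{\eps})$ when all inter-point distances exceed $2+\eps$, yield the promised $O\!\big((r(S)+r(T))/\sqrt{\eps}\big)$ refinement for any generic $\delta$.
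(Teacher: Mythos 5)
Your proposal is correct, and its skeleton is the same as the paper's: reduce to the labeled instance given by $M_\delta$ and $\Pi(M_\delta)$, observe that validity only requires escaping the union of the bad vippodromes, choose $\delta$ at angular distance $\Omega(1/n)$ from $O(n)$ critical directions, and then bound the rightward reach of each bad vippodrome by (apex coordinates) $+$ (apex height)/(slope), with coordinates $O(r(S)+r(T))$ and slope $\Omega(1/n)$. Where you genuinely deviate is in \emph{how the $O(n)$ critical directions are certified}, and this is worth comparing. The paper takes the Delaunay route: its Lemma~\ref{lem:separ} shows that any two centers of $S$ that are not Delaunay neighbors are at distance at least $2\sqrt{2}$, so their wedge already has opening angle at most $\pi/2$; only the $O(n)$ Delaunay-neighbor pairs are dangerous, and the forbidden directions are their \emph{inner tangents}, giving the explicit choice of $\delta$ at angle at least $\pi/(24n)$ from all of them and the explicit valid translation $(r(S)+r(T)+2)(1+8n)$ of Lemma~\ref{lemma:bound_on_ray}. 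You instead declare a pair dangerous when its centers are at distance at most $3$, certify that there are $O(n)$ such pairs by a packing argument (unit discs around centers within distance $3$ of a fixed center are disjoint inside a radius-$4$ disc, so $O(1)$ partners per point), and forbid the \emph{perpendicular bisector} directions of dangerous pairs, absorbing the near-tangency into the margin $\gamma=\pi/2-\arcsin(2/d)$ via the identity $\beta_V=\gamma+\operatorname{angdist}(\mathrm{bisector},\delta)$ (which indeed holds in the relevant regime, since for a bad vippodrome the mid-ray of $\W(B^S,A^S)$ points leftward or down, so its angular distance from $\delta$ is at least $\pi/2$). Your route is more elementary and self-contained --- no Delaunay/Gabriel machinery and no diametral-disc argument are needed --- and the bisector-plus-margin bookkeeping automatically takes care of genericity (tangent pairs have $d\le 3$, and for them the inner tangents coincide with the bisector). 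The paper's route buys sharper explicit constants and a single clean worst-case ray computation. Both arguments specialize identically to the separated case: a uniform margin $\gamma=\Omega(\sqrt{\eps})$ when all distances exceed $2+\eps$ recovers Theorem~\ref{theo:bound_sep} for every direction. Two small points to tidy in a full write-up: the reach estimate should expand the apex height by the vippodrome's radius-$2$ offset (i.e., $(|P_y|+2)/\tan\beta_V$ with $P=B^S-A^T$), and you should state explicitly that the second-type bad vippodromes $\V^{(2)}_{AB}$ are handled by the symmetric argument on $T$, as the paper does.
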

\begin{proof}
Consider the discs in $D(S)$ (the case of $T$ is essentially identical).
We define the following set $\Delta_S$ of directions.
For each pair $s_1,s_2 \in S$ of neighbors in the (nearest-neighbor) Delaunay triangulation of $S$, we include in $\Delta_S$ the directions of the two inner tangents of $D(s_1)$ and $D(s_2)$, obtaining at most $12n-24$ directions (each tangent is directed both ways).
We obtain a similar set $\Delta_T$ of at most $12n-24$ directions from the pairs of neighbors in the Delaunay triangulation of $T$.

\begin{lemma}
\label{lem:separ}
For any pair of unit discs in $D(S)$ whose centers are not Delaunay neighbors, the angle between their inner tangents in the wedges that contain neither of the two discs is at least $\pi/2$. The same holds for $D(T)$.
\end{lemma}
\begin{proof}
Let $p_1$ and $p_2$ be two points in $S$ so that $p_1$ and $p_2$ are not Delaunay neighbors.
Hence the diametral disc determined by $p_1$ and $p_2$ must contain another point $q\in S$, so the angle $\angle p_1qp_2$ is at least $\pi/2$, and so $p_1$ and $p_2$ are at least $2\sqrt{2}$ apart (recall that the discs are interior disjoint).
See Figure~\ref{fig:delaunay} for an illustration.
The argument for $T$ is identical, and the lemma follows.
\end{proof}

\begin{figure}
    \centering
    \includegraphics[trim=0 0 0 0, clip, width=0.7\textwidth]{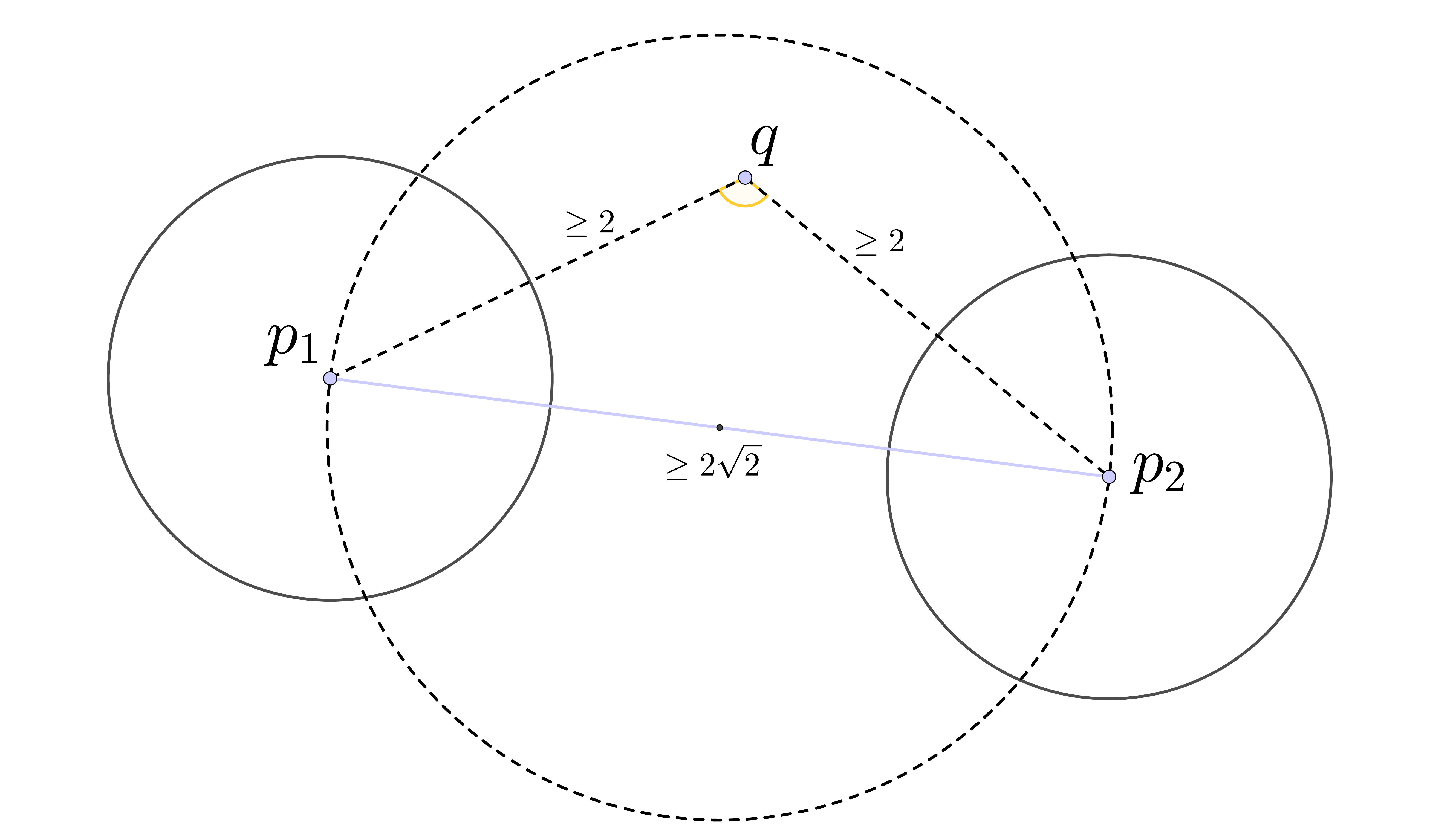}
    \caption{\sf The segment $p_1p_2$ (in blue) is not a Delaunay edge, so its diametral disc contains another point $q\in S$, and then $\angle p_1qp_2 \geq \frac{\pi}{2}$.}
    \label{fig:delaunay}
\end{figure}

Note that the lemma holds for any pair of discs whose centers are not neighbors in the Gabriel graph of $D(S)$ or of $D(T)$.

Let $\delta$ be a direction whose smallest angle from any direction in $\Delta_S \cup \Delta_T$ is as large as possible.
In particular, we can choose $\delta$ such that the angle it forms with the direction of any of the Delaunay inner tangents is $\Omega(1/n)$;
actually, the reasoning above implies a lower bound of at least $\pi/(24n)$.
Observe that $\delta$ is generic for $S$ and $T$, and so, by Theorem~\ref{theo:unlabeled_feasible}, there exists a valid translation in direction $\delta$.
By rotating the plane, we may assume, as before, that $\delta$ is the positive $x$-direction.
We compute the reverse lexicographical orders $\Pi_\delta(S)$, $\Pi_\delta(T)$, as defined in \Section~\ref{section:unlabeled}, and obtain the corresponding matching $M_\delta$ and its order $\Pi(M_\delta)$.

\begin{lemma}
\label{lemma:bound_on_ray}
The translation $\vec{v} = ((r(S)+r(T)+2)(1+8n),0)$ is a valid translation (in direction $\delta$) with respect to the matching $M_\delta$ and its order $\Pi(M_\delta)$.
\end{lemma}
\begin{proof}
By the analysis of \Section~\ref{section:unlabeled}, it suffices to show that $\vec{v}$ does not belong to any bad vippodrome.
For this, it suffices to show that any bad vippodrome intersects the positive $x$-axis to the left of $\vec{v}$.
Consider such a bad vippodrome, say $\V^{(1)}_{AB}$, where $A,B \in M_\delta$, so that $A$ appears after $B$ in $\Pi(M_\delta)$.
Symmetric arguments apply to bad vippodromes of the form $\V^{(2)}_{AB}$.
Assume, without loss of generality, that $B^S$ is above, or at the same height, as $A^S$ (note that, by construction, this is always the case when $A^S$ and $B^S$ have the same $x$-coordinate).
Since $A$ appears after $B$ in $\Pi(M_\delta)$, $A^S$ is lexicographically smaller than $B^S$, so the direction of $\overrightarrow{B^S A^S}$ points to the left (or vertically down); recall the proof of Theorem~\ref{theo:unlabeled_feasible}.
Let $r_1,r_2$ be the two rays of $\V^{(1)}_{AB}$, and assume that $r_1$ intersects the $x$-axis to the right of the intersection point of $r_2$ with the $x$-axis ($r_1$ is in direction of the tangent $\tau^+(B^S,A^S)$); see Figure~\ref{fig:upper_bound}.
Let $p$ be the point that $r_1$ emanates from, and let $q = (q_x,0)$ be the intersection point of $r_1$ and the $x$-axis.
We observe that the $y$-component of the direction of $r_1$ is negative.
Indeed, $r_1$ forms an angle of at most $\pi/2$ with the direction of $B^SA^S$, which is leftward-directed or points down; in the latter case the angle is strictly smaller than $\pi/2$, as easily follows from our choice of $\delta$.
It follows that $q_x$ increases as $p$ moves either to the right or up.
By the vippodrome construction, and the locations of $S$ and $T$, both the $x$- and the $y$-coordinates of $p$ cannot exceed $K := r(S)+r(T)+2$.
We may thus assume that $p=(K,K)$; for any other point, $q_x$ is smaller.

\begin{figure}
    \centering
    \includegraphics[trim=0 290 0 200, clip, width=0.7\textwidth]{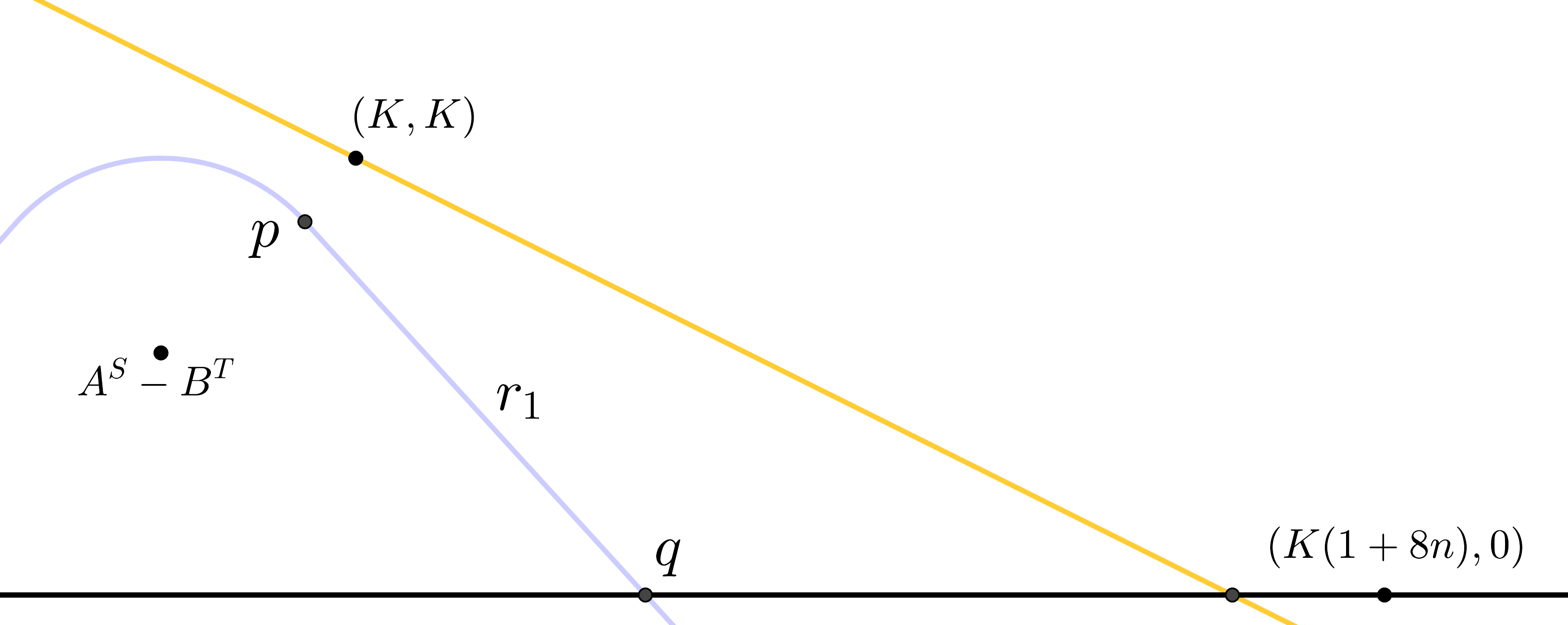}
    \caption{\sf The $x$-axis is drawn in black, a bad vippodrome $\V^{(1)}_{AB}$ in blue. The orange line, with slope $-\tan\frac{\pi}{24n}$, meets the $x$-axis to the right of any bad vippodrome of the first type, which is directed downwards (and to the left), like $\V^{(1)}_{AB}$.}
    \label{fig:upper_bound}
\end{figure}

If the $x$-component of the direction of $r_1$ is non-positive, $q_x \leq K$,
so we may assume that it is positive.
If $A^S$ and $B^S$ are not neighbors in the Delaunay diagram of $S$, the angle of $\W(A^S, B^S)$ is at most $\frac{\pi}{2}$, by Lemma~\ref{lem:separ}.
Thus, the slope of $r_1$ is at most $-1$.
If $A^S$ and $B^S$ are neighbors in the Delaunay diagram of $S$, then the directions of their common inner tangents are in $\Delta_S$, and so, by the choice of $\delta$, the slope of $r_1$ is at most $-\tan\frac{\pi}{24n}$.
We then again may assume that the slope of $r_1$ is $-\tan\frac{\pi}{24n}$, as for any smaller slope, $q_x$ is smaller.

The supporting line of $r_1$, according to our upper bounding assumptions, is
$$ y + x \tan\frac{\pi}{24n} - K \left( 1 + \tan\frac{\pi}{24n} \right) = 0,$$
and thus
$$ q_x = K \left( 1 + \frac{1}{\tan\frac{\pi}{24n}} \right) < K(1+8n),$$
where one can verify that the inequality holds for any $n\ge 1$.
This is the inequality asserted in the lemma.
\end{proof}
Repeating a symmetric argument for bad vippodromes of the second type,
Theorem~\ref{theo:bound_length} now follows readily, both for the length of $\vec{v}$ and for $r(S\cup(T+\vec{v}))$, as both of them are clearly $O((r(S)+r(T))n)$.
\end{proof}

The physical space needed for the reconfiguration is worse by the factor $O(n)$,
compared to the ideal bound $O(r(S)+r(T))$, which is (asymptotically) the minimum value of $r(S \cup (T + \vec{v}))$, over all translations $\vec{v}$.
Interestingly, we can attain this bound asymptotically if the discs of $D(S)$, as well as the discs of $D(T)$, are sufficiently separated.
That is, we have:

\begin{theorem}
\label{theo:bound_sep}
Let $S$ and $T$ be two valid configurations, of $n$ points each, such that $S$ and $T$ share the centers of their smallest enclosing discs.
Assume that there exists a fixed constant
$\eps>0$, so that the distance between any pair of points in $S$, or any pair of points in $T$, is at least $2+\eps$.
Then, for \emph{any} direction $\delta$, there exists a translation $\vec{v}$ in direction $\delta$, such that ${\rm \ust}(S,T+\vec{v})$
is feasible and $|\vec{v}|=O((r(S)+r(T))/\sqrt{\eps})$.
The same asymptotic bound also holds for $r(S\cup (T+\vec{v}))$.
\end{theorem}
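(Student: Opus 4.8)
The plan is to reuse the skeleton of the proof of Theorem~\ref{theo:bound_length}, but to replace its $\Theta(1/n)$ slope estimate by a bound depending only on $\eps$, exploiting the uniform separation. First I would observe that since every pair of points in $S$ (and in $T$) is at distance $\ge 2+\eps > 2$, no two discs of $D(S)$ or of $D(T)$ touch, so $\B(S)=\B(T)=\emptyset$ and \emph{every} direction is generic (cf.\ the footnote to Theorem~\ref{theo:unlabeled_feasible}). I may therefore fix an arbitrary direction $\delta$, rotate so that it is the positive $x$-direction, and form $\Pi_\delta(S)$, $\Pi_\delta(T)$, the matching $M_\delta$, and its order $\Pi(M_\delta)$ exactly as before. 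By the analysis of \Section~\ref{section:unlabeled}, it then suffices to exhibit a point on the positive $x$-axis lying outside every bad vippodrome.

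The crux is a sharper, \emph{uniform} version of the slope estimate in Lemma~\ref{lemma:bound_on_ray}. For a bad vippodrome $\V^{(1)}_{AB}$ the bounding rays are parallel to the inner tangents of $D(A^S)$ and $D(B^S)$, and the half-angle $\alpha$ of $\W(A^S,B^S)$ satisfies $\sin\alpha = 2/d$, where $d=\norm{A^S-B^S}$. Hence
\[
\cot\alpha=\tfrac12\sqrt{d^2-4}\ \ge\ \tfrac12\sqrt{(2+\eps)^2-4}=\tfrac12\sqrt{\eps(4+\eps)}\ \ge\ \sqrt{\eps}.
\]
The key point is that in Theorem~\ref{theo:bound_length} this estimate was only available for non-Delaunay pairs (via Lemma~\ref{lem:separ}), whereas Delaunay-adjacent pairs could be arbitrarily close and forced the $\Theta(1/n)$ slope; here the separation delivers the bound for \emph{all} pairs at once. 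Reprising the worst-case orientation analysis of Lemma~\ref{lemma:bound_on_ray} --- where the ray is closest to horizontal exactly when $\overrightarrow{B^SA^S}$ is vertical, and in every admissible orientation the ray direction is $\overrightarrow{B^SA^S}$ rotated by at most $\alpha$ --- I would conclude that the relevant ray $r_1$ has slope of absolute value at least $\cot\alpha\ge\sqrt{\eps}$.

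With this slope bound the rest is a routine recomputation. Writing $K=r(S)+r(T)+2$, the argument of Lemma~\ref{lemma:bound_on_ray} bounds the abscissa $q_x$ at which $r_1$ meets the $x$-axis by
\[
q_x\ \le\ K\Bigl(1+\tfrac{1}{\cot\alpha}\Bigr)\ \le\ K\Bigl(1+\tfrac{1}{\sqrt{\eps}}\Bigr)=O\!\left(\tfrac{r(S)+r(T)}{\sqrt{\eps}}\right),
\]
and a symmetric argument handles the second-type vippodromes $\V^{(2)}_{AB}$; the circular-arc portions lie within distance $O(K)$ of the origin and do not affect the estimate. Thus $\vec v=\bigl(K(1+1/\sqrt{\eps}),0\bigr)$ lies to the right of every bad vippodrome, so by Theorem~\ref{theo:unlabeled_feasible} it is a valid translation in direction $\delta$ with $|\vec v|=O((r(S)+r(T))/\sqrt{\eps})$. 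Finally, since $S$ and $T$ share the center $o$ of their smallest enclosing discs, every point of $S$ lies within $r(S)$ of $o$ and every point of $T+\vec v$ within $r(T)+|\vec v|$ of $o$, whence $r(S\cup(T+\vec v))\le\max\{r(S),\,r(T)+|\vec v|\}=O((r(S)+r(T))/\sqrt{\eps})$.

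The main obstacle I anticipate is the orientation case analysis establishing that $\sqrt{\eps}$ is genuinely a lower bound on $|\text{slope}|$ over \emph{all} admissible directions of $\overrightarrow{B^SA^S}$ (left-pointing or straight down): one must verify that the vertical orientation is the true worst case and that, as in Lemma~\ref{lemma:bound_on_ray}, any orientation producing a ray with nonpositive $x$-component yields only $q_x\le K$ and is therefore harmless.
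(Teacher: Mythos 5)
Your proposal is correct and follows essentially the same route as the paper: the paper likewise observes that the separation makes every direction generic, bounds the opening angle of \emph{every} vippodrome by $\pi-c\sqrt{\eps}$ (so that each bad-vippodrome ray makes angle $\Omega(\sqrt{\eps})$ with the $\delta$-direction, uniformly over all pairs rather than only non-Delaunay ones), and then reruns the computation of Lemma~\ref{lemma:bound_on_ray} to get $|\vec{v}|=O((r(S)+r(T))/\sqrt{\eps})$. Your write-up is merely more explicit where the paper says ``as is easily checked'': you compute the half-angle via $\sin\alpha=2/d$ and $\cot\alpha=\tfrac12\sqrt{d^2-4}\ge\sqrt{\eps}$, and you spell out the worst-case orientation and the final bound $r(S\cup(T+\vec{v}))\le\max\{r(S),\,r(T)+|\vec{v}|\}$, all of which is consistent with the paper's argument.
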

\begin{proof}
Observe that the separation property guarantees that every direction is generic.
As is easily checked, the angle between the inner tangents of any pair of discs in $D(S)$,
within the wedge containing none of the discs, is at least $c \sqrt{\eps}$, for a suitable constant $c$.
Hence, the opening angle of any vippodrome is at most $\pi-c\sqrt{\eps}$. This implies that, for any direction $\delta$, and for any bad vippodrome, with respect to $\Pi(M_\delta)$, the angle that its ray $r_1$, in the notation of the proof of Lemma~\ref{lemma:bound_on_ray}, forms with the $\delta$-direction is at least $\frac{c}{2}\sqrt{\eps}$.
Following the same analysis as in the proof of Lemma~\ref{lemma:bound_on_ray}, we can show that there exists a valid translation $\vec{v}$ in direction $\delta$,
whose length is $O((r(S)+r(T))/\sqrt{\eps})$.
This also bounds $r(S\cup(T+\vec{v}))$.
\end{proof}

Note that Theorem~\ref{theo:bound_sep} is stronger than Theorem~\ref{theo:bound_length} also in that it holds for \emph{every} direction $\delta$, whereas Theorem~\ref{theo:bound_length} only holds for restricted values of $\delta$.

\ifthesis
\subsection[Implementation and Experimentation with the Heuristic Algorithm]%
{\parbox[t]{17em}{Implementation and Experimentation \\ with the Heuristic Algorithm}}
\else
\subsection[Implementation and Experimentation with the Heuristic Algorithm]{Implementation and Experimentation \\ with the Heuristic Algorithm}
\fi
\label{subsec:implement}


\newcommand{\imagescalse}{0.09}
\begin{table}[h]
	\caption{\sf Different input types of \ust{}.
		For each input type, the configurations are first presented separated, for better visualization, then in their initial positions (sharing the centers of their smallest enclosing discs) and with $T$ translated according to an approximate shortest valid translation (in red), produced by our heuristic algorithm.
	}
	\scriptsize
	\renewcommand{\arraystretch}{1.5}
	\newlength{\cellWidth}\setlength\cellWidth{25ex}
	\label{tab:experiments}
	\centering
	\begin{tabular}{|l|c|c|c|l|l|l|}
		\hline
		\textbf{Conf.} & \textbf{$D(S)$/$D(T)$} & \textbf{Initial} & \textbf{Translated} & \textbf{$n$} & \textbf{$r(S)+r(T)$} & \textbf{$|\vec{v}|$}   \\
		\hline
		\multirow{4}{*}{\textbf{Circle}} &
		\multirow{4}{*}{\includegraphics[scale=\imagescalse]{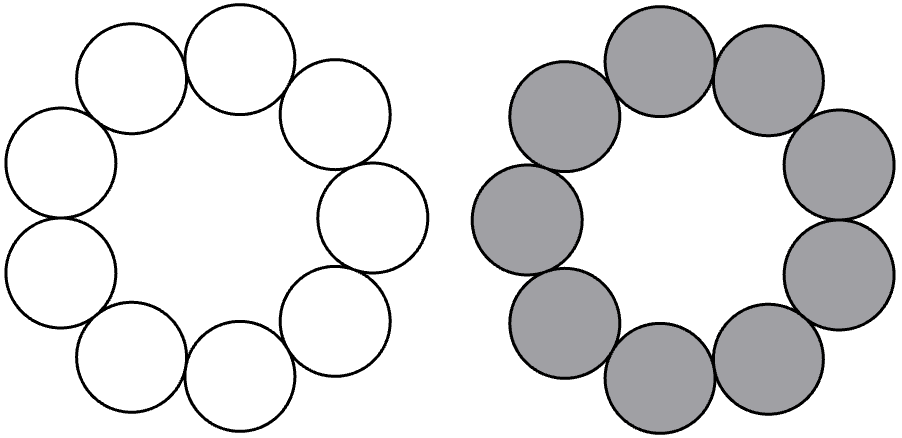}} &
		\multirow{4}{*}{\includegraphics[scale=\imagescalse]{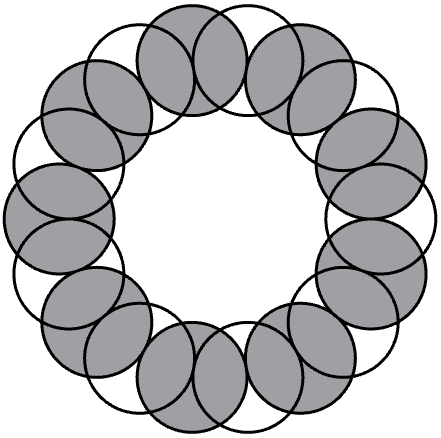}} &
		\multirow{4}{*}{\includegraphics[scale=\imagescalse]{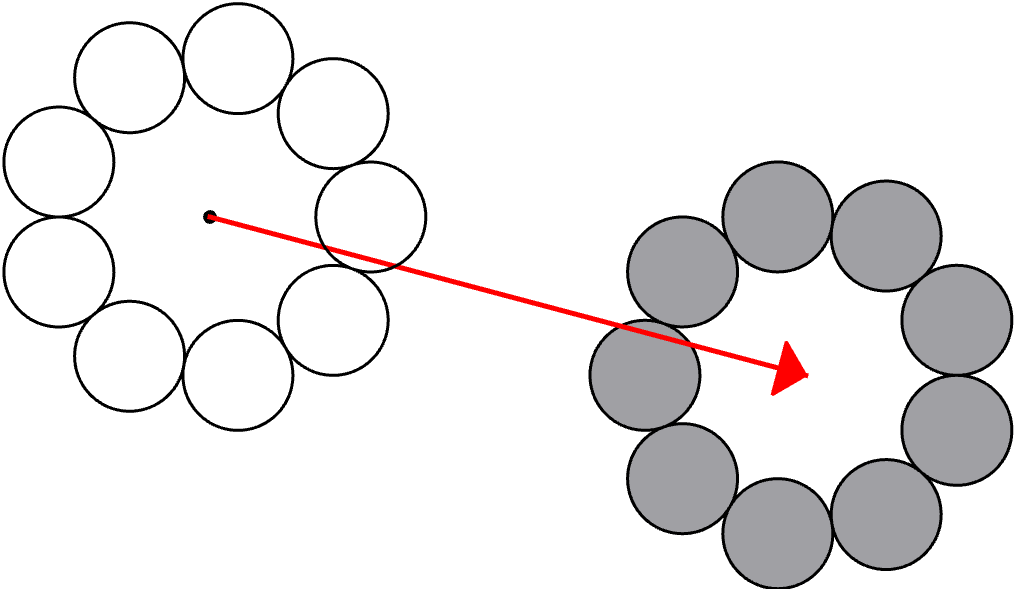}} &
		100   & 65.67  & 190.19   \\
		& & & & 200   & 129.32 & 376.24   \\
		& & & & 500   & 320.31 & 913.79   \\
		& & & & 1,000 & 638.60 & 1,757.26 \\
		\hline
		\multirow{4}{*}{\textbf{Packing}\footnotemark{}} &
		\multirow{4}{*}{\includegraphics[scale=\imagescalse]{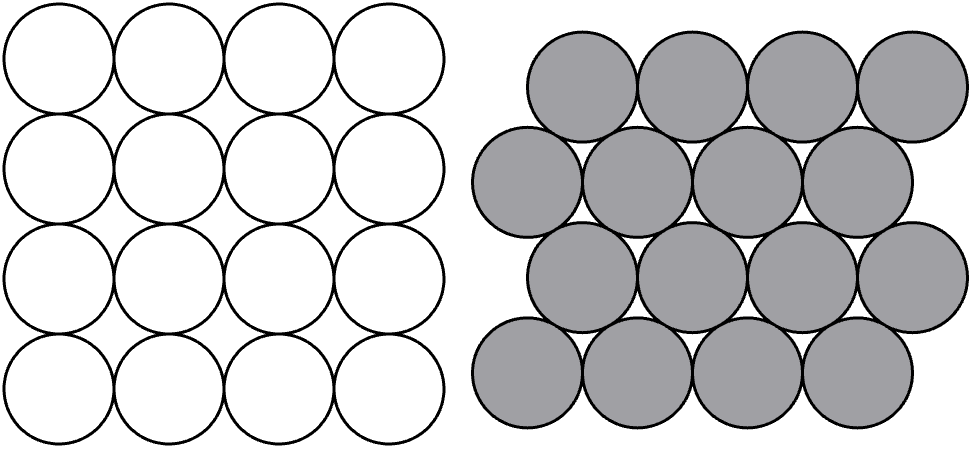}} &
		\multirow{4}{*}{\includegraphics[scale=\imagescalse]{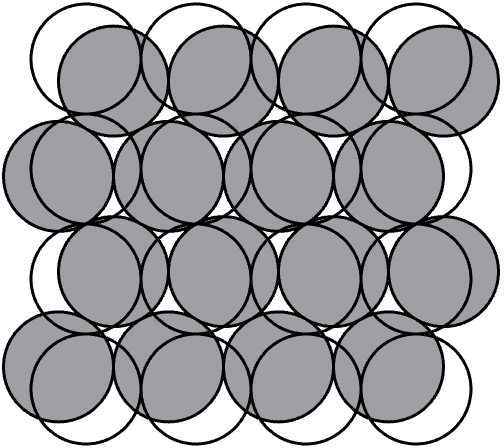}} &
		\multirow{4}{*}{\includegraphics[scale=\imagescalse]{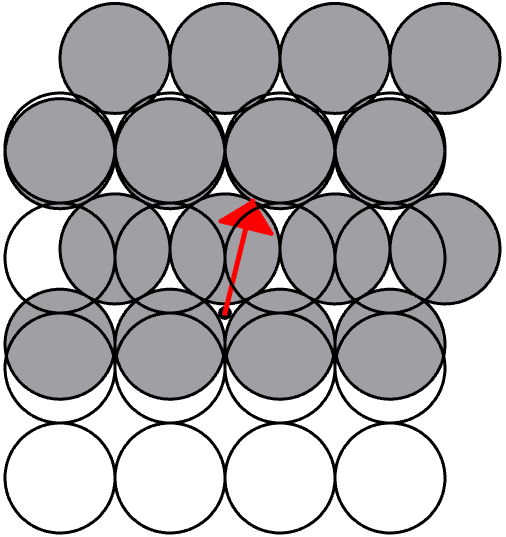}} &
		100   & 27.01 &  5.53  \\
		& & & & 210   & 38.88 &  2.18  \\
		& & & & 506   & 60.76 &  3.25  \\
		& & & & 1,024 & 87.22 & 19.75 \\
		\hline
		\multirow{4}{*}{\textbf{Cross}} &
		\multirow{4}{*}{\raisebox{-1\height}{\includegraphics[scale=\imagescalse]{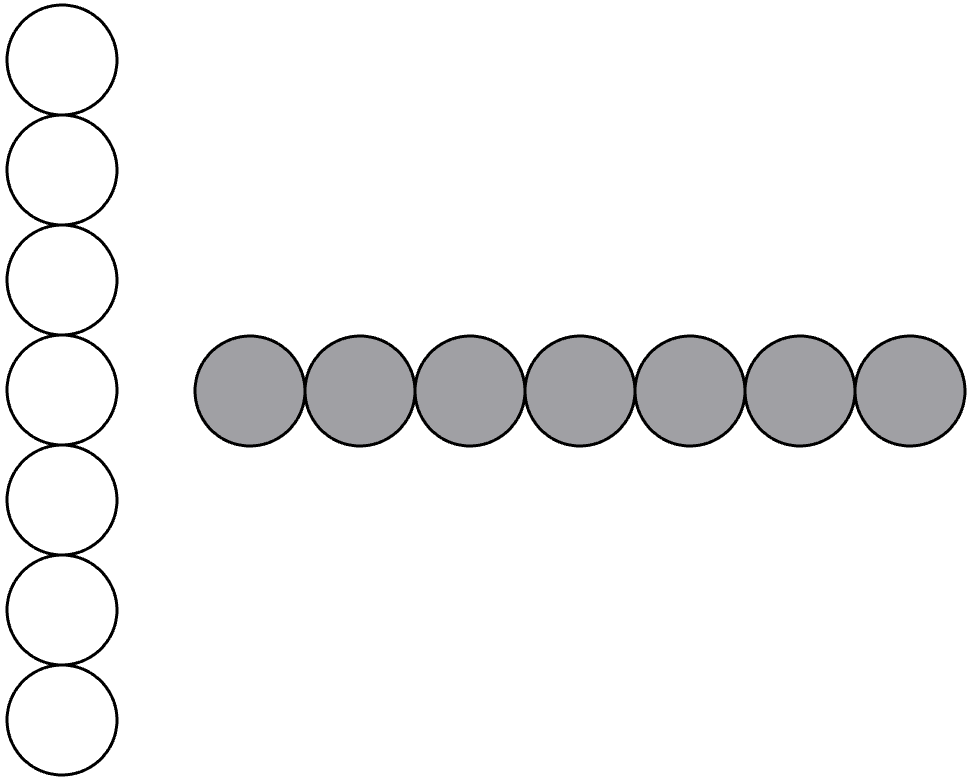}}} &
		\multirow{4}{*}{\raisebox{-1\height}{\includegraphics[scale=\imagescalse]{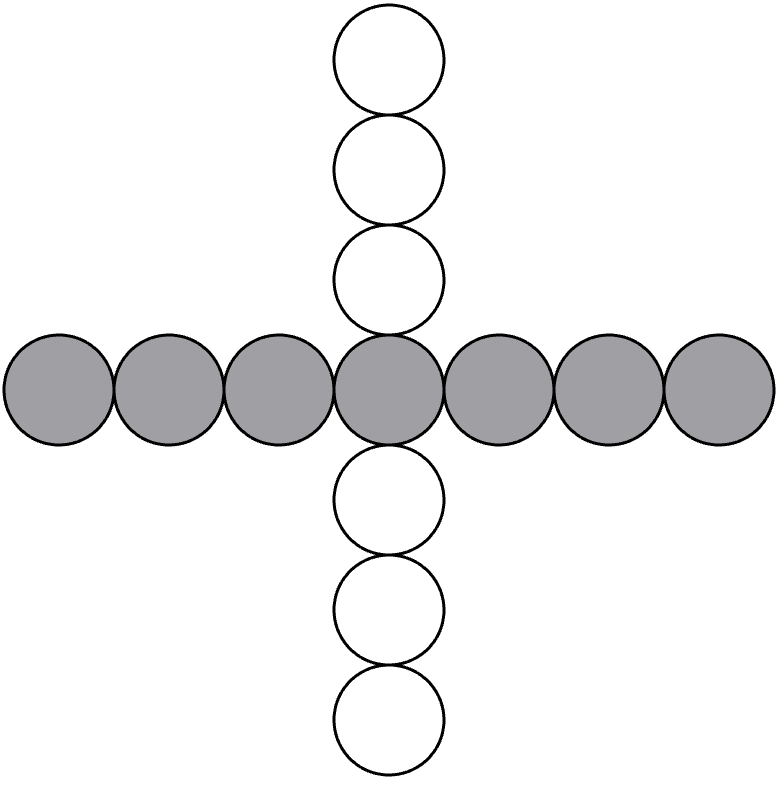}}} &
		\multirow{4}{*}{\raisebox{-1\height}{\includegraphics[scale=\imagescalse]{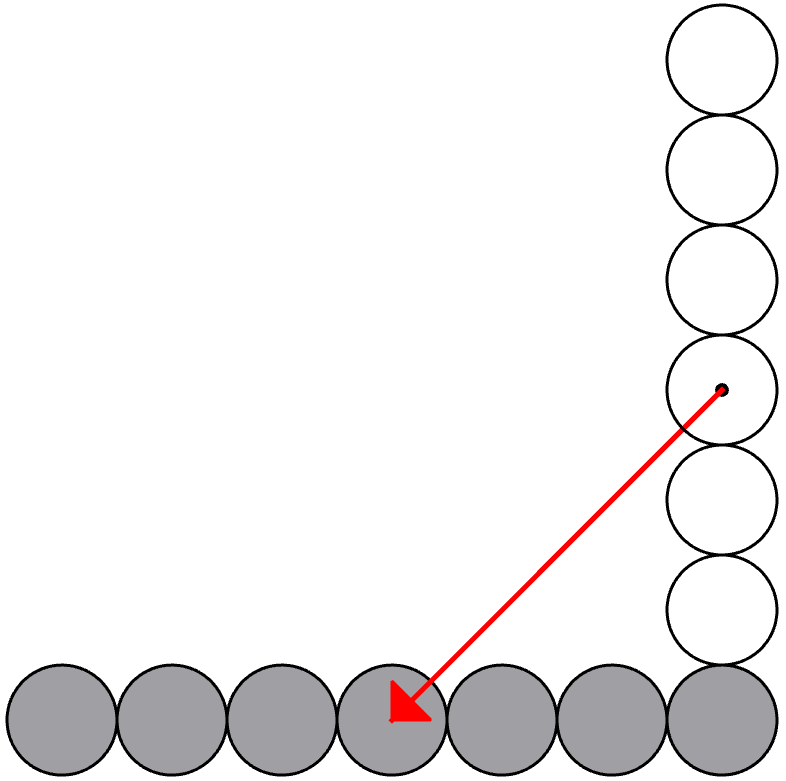}}} &
		100   & 200   &   140.07 \\[1ex]
		& & & & 200   & 400   &   281.43 \\[1ex]
		& & & & 500   & 1,000 &   706.15 \\[1ex]
		& & & & 1,000 & 2,000 & 1,413.47 \\[1ex]
		\hline
		\multirow{4}{*}{\textbf{Random}\footnotemark{}} &
		\multirow{4}{*}{\includegraphics[scale=\imagescalse]{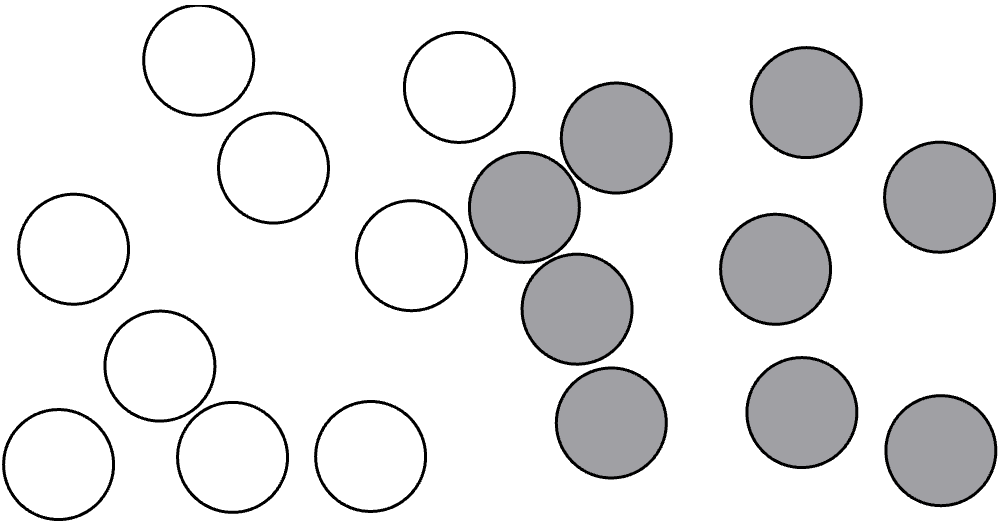}} &
		\multirow{4}{*}{\includegraphics[scale=\imagescalse]{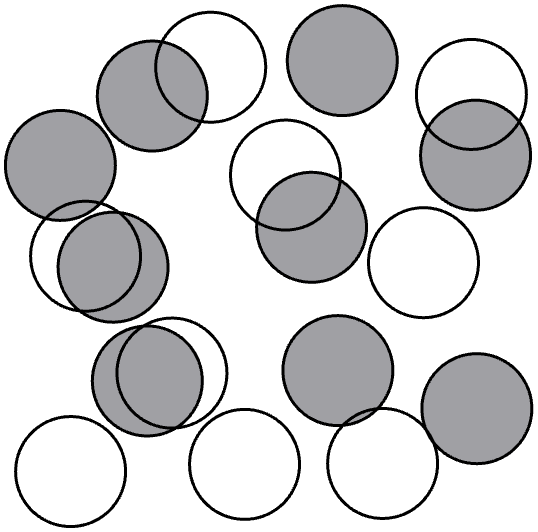}} &
		\multirow{4}{*}{\includegraphics[scale=\imagescalse]{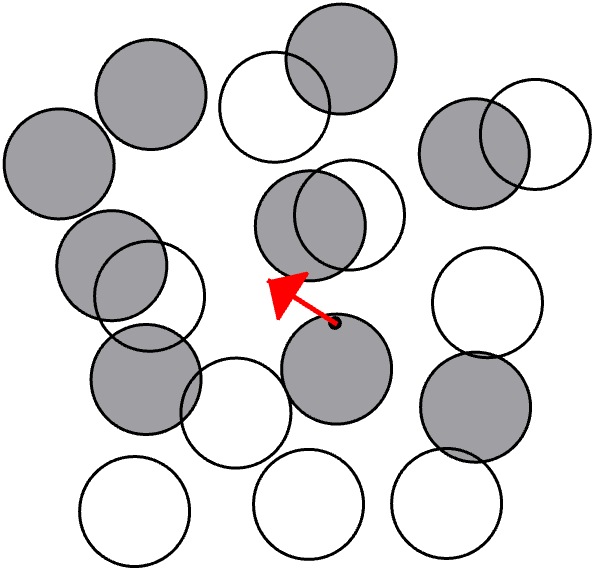}} &
		100   &  36.78 &  16.96 \\
		& & & & 200   &  51.70 &  34.01 \\
		& & & & 500   &  82.26 &  78.49 \\
		& & & & 1,000 & 116.06 & 147.61 \\
		\hline
	\end{tabular}
\end{table}
\addtocounter{footnote}{-2}
\stepcounter{footnote}
\footnotetext{The numbers $n$ are chosen such that the start configuration in each instance will be as square-like as possible.}
\stepcounter{footnote}
\footnotetext{The results are averaged over 10 different instances, one of which is depicted.}

We implemented the heuristic algorithm for finding an approximate shortest valid translation for {\rm \ust} instances, as outlined in \Subsection~\ref{subsec:heuristics}, by choosing a random direction $\delta$ (which is generic with probability $1$), fixing a matching $M_\delta$ and itinerary $\Pi(M_\delta)$ accordingly, and then finding the shortest translation (in direction $\delta$) that is valid in this more restricted setting.
Our program is written in Python~3.7, and the experiments that we report below were carried out on an Intel Core i7-7500U CPU clocked at~2.9~GHz with~24~GB of RAM.

We consider the following input types:
\begin{enumerate}
    \item \textbf{Circle}: the points of the configurations are densely placed on the circumference of a circle.
    The discs of $D(T)$ are slightly rotated (by $\frac{\pi}{n}$) in order to avoid an easy matching.
    
    \item \textbf{Packing}:
    the discs of $D(S)$ are placed in a squared grid.
    The discs of $D(T)$ are placed in a Kepler's packing.

    \item \textbf{Cross}: the discs of $D(S)$ (resp., $D(T)$) are tightly placed along a vertical (resp., horizontal) line.
    \item \textbf{Random}: both configurations are sampled uniformly at random%
    \footnote{The random choice of each configuration is modified so as to ensure that they are valid---no two points are at distance smaller than $2$.
    Random choices that violate this property are discarded and replaced by other random choices.}
    from a square of size $\left(2.6 \sqrt{n} \times 2.6 \sqrt{n}\right)$.
    For each configuration size, we average the results over 10 different runs.
\end{enumerate}
Table~\ref{tab:experiments} shows the results obtained with our implementation for four different types of input, with the number of discs per type ranging between $100$ and $1{,}024$.
For each input, we tried $1{,}000$ different directions $\delta$, and in the table we compare the shortest valid translation that the algorithm produced (over all different directions) with the asymptotically optimal value $r(S)+r(T)$.

In Figure~\ref{fig:times} we present the running times of the implementation for all four kind of input instances.
For the Circle, Packing, and Cross inputs, for $n=100i$ discs, $i=1,\ldots,10$, we optimize in $10$ random directions $\delta$. For the Random input, for $n=100i$ discs ($i=1,\ldots,10$), we choose $10$ random configurations of $n$ discs (see Table~\ref{tab:experiments}), and for each of them, we optimize in $10$ random directions $\delta$, for a total of $100$ runs per input size $n$.
As expected, the running time of the implementation is slightly super-quadratic.
The running times for the other three kinds of input behave remarkably similarly.
Our program runs in about 30 seconds on inputs with $1{,}000$ discs; notice that the number of bad vippodromes in such instances is $999{,}000$.

\newcommand{\timesubfigurewidth}{1}
\newcommand{\timescaption}[1]{\sf The #1 input type.}
\begin{figure}
	\centering
	\begin{subfigure}{.45\linewidth}
		\centering
		\includegraphics[trim=0 0 0 0,clip,width=\timesubfigurewidth\textwidth]{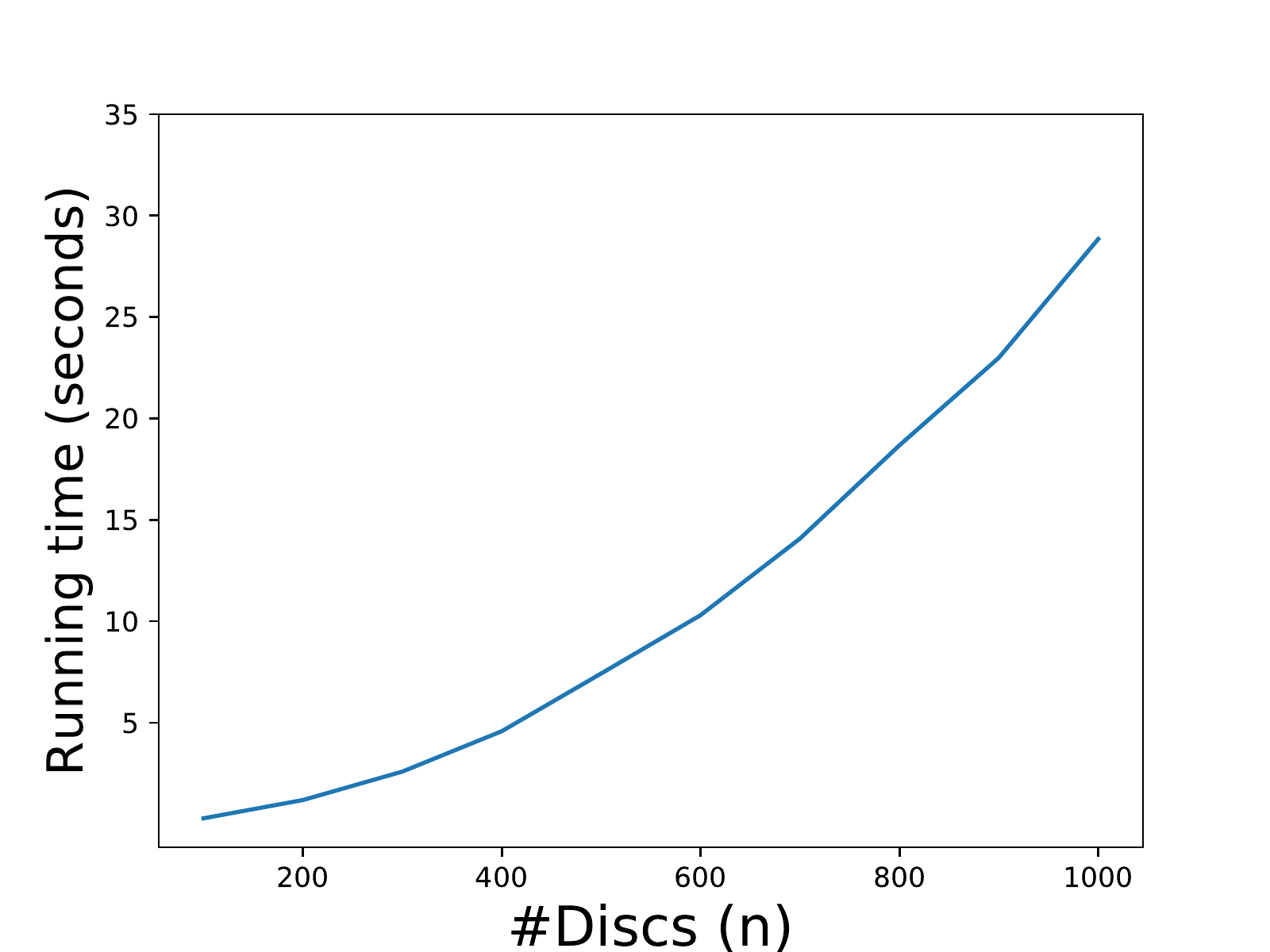}
		\caption{\timescaption{Circle}}
	\end{subfigure}%
	\begin{subfigure}{.45\linewidth}
		\centering
		\includegraphics[trim=0 0 0 0,clip,width=\timesubfigurewidth\textwidth]{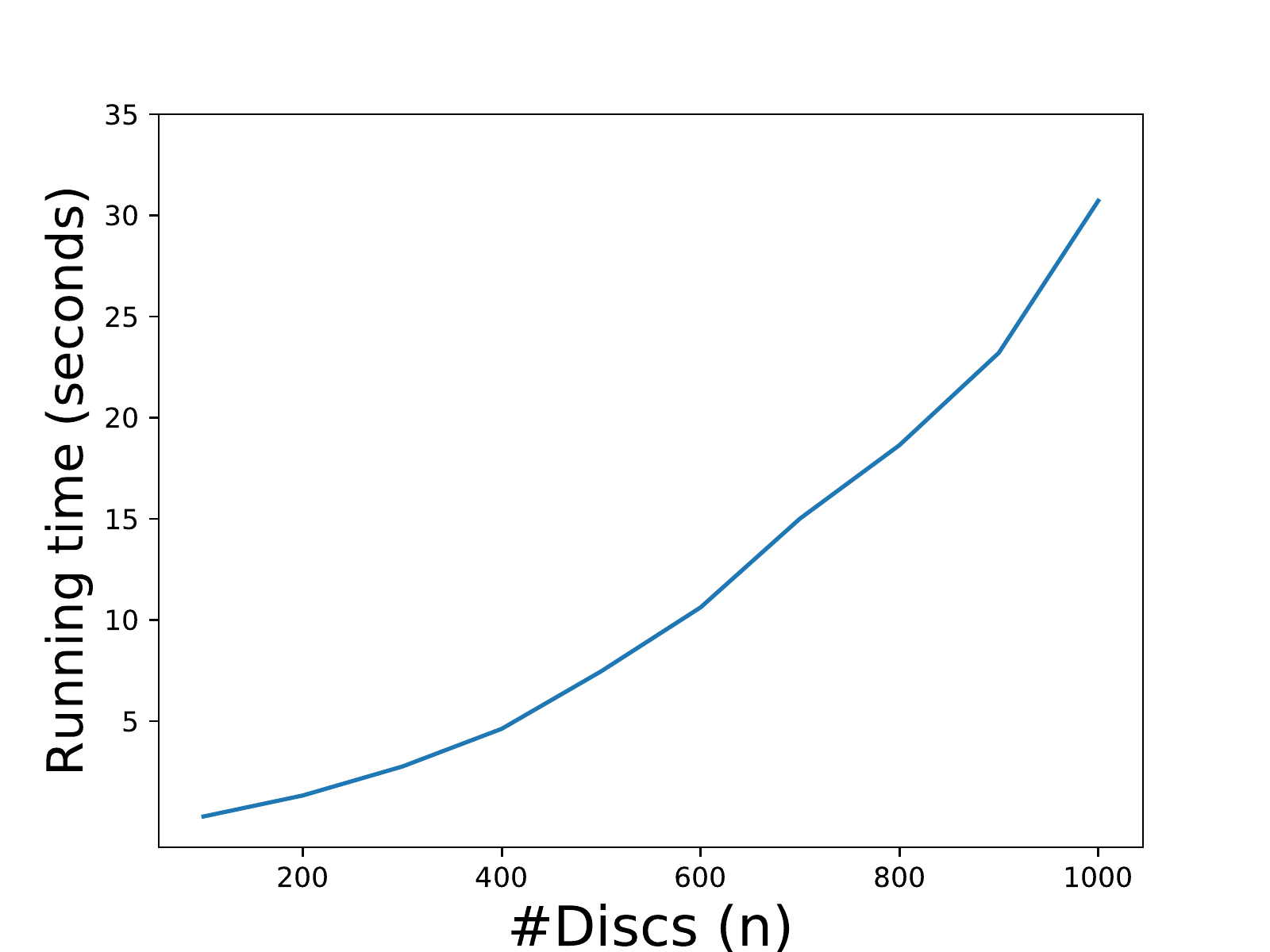}
		\caption{\timescaption{Packing}}
	\end{subfigure}\\[1ex]
	\begin{subfigure}{.45\linewidth}
		\centering
		\includegraphics[trim=0 0 0 0,clip,width=\timesubfigurewidth\textwidth]{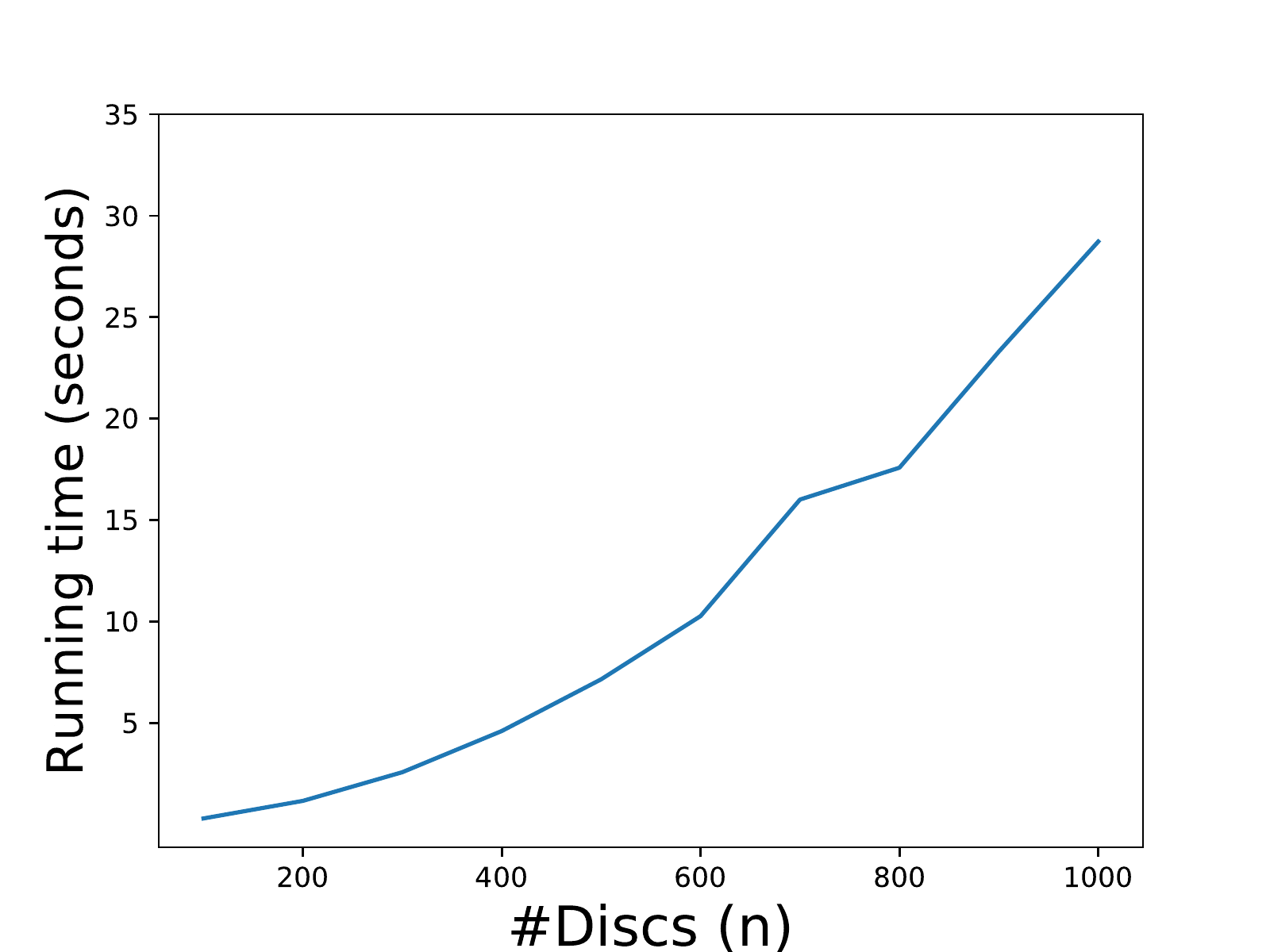}
		\caption{\timescaption{Cross} \newline \newline}
	\end{subfigure}
	\begin{subfigure}{.45\linewidth}
		\centering
		\includegraphics[trim=0 0 0 0,clip,width=\timesubfigurewidth\textwidth]{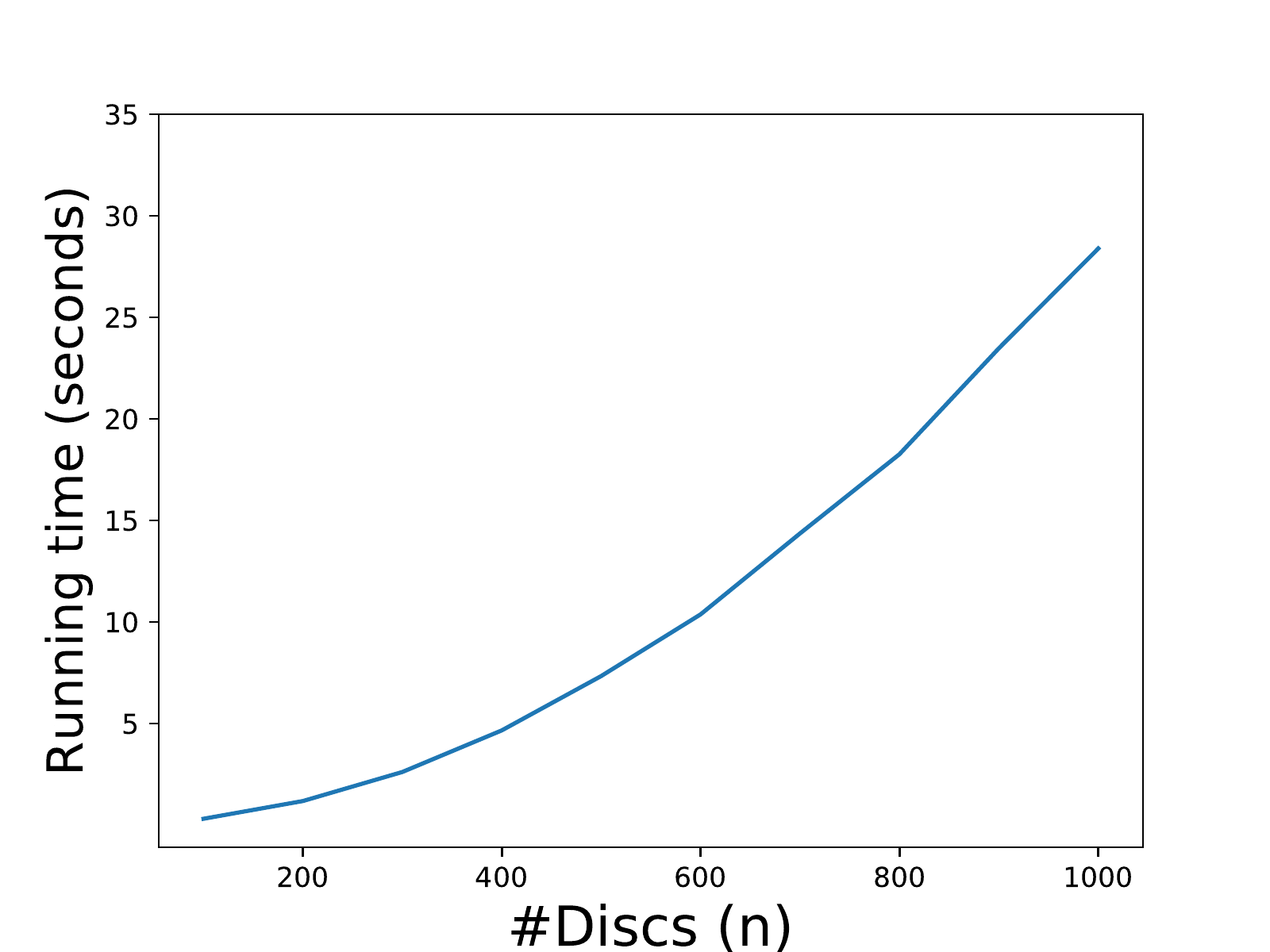}
		\caption{\timescaption{Random} The results are averaged on 10 different random input configurations.}
	\end{subfigure}
	\caption{
		\sf Running time of the heuristic as a function of the number of discs in the start (and hence also the target) configuration, for the different input types.
		Each entry (\#Discs$=100, 200, \ldots, 1{,}000$) in the first three graphs is the average running time of $10$ different runs, in each of which a different direction $\delta$ is sampled uniformly at random. In the fourth graph (Random input), each entry is the average on 10 random inputs, each run 10 times on different directions.}
	\label{fig:times}
\end{figure}

	\newpage
	\section{Conclusion}  \label{section:conclusion}
\ifthesis
\vspace{-30pt}
\fi
This \paper presented results for unit disc reconfiguration, with special attention to physical space usage.
We gave reasonably efficient algorithms, for both the labeled and unlabeled versions, for computing a single translation for the target configuration that admits a valid reconfiguration, and that reduces the space used by the reconfiguration.
For the labeled version, the translation can be made optimal for space usage according to three natural optimization criteria.
For the unlabeled version, we gave heuristics for reducing the space usage, and upper-bounded the resulting space usage.
Finally, we implemented one of the heuristics for the unlabeled case and ran some experiments to demonstrate that our technique works very well in practice (on several rather difficult instances), in terms of the physical space size that it produced and in its computation time.

Our research can be extended in various ways within the space-awareness framework.
We could, for example, allow two translations per disc while aiming for minimal physical space (that also contains all the intermediate positions, in terms of the size of the bounding rectangle or disc).
These problems can be studied with or without a global rigid translation of the target configuration.
Alternatively, we could have considered variants where we allow an arbitrary initial rigid motion of the target configuration, or allow other motion paths instead of straight line paths.
Since these problems are more general, they are likely to be much harder to solve with optimal space usage.
One can also study space-aware reconfiguration for discs of varying sizes (the labeled version only), or for other, more complex shapes.

Viewing assembly planning from the space-aware perspective raises many challenging problems.
We aim to find the smallest space (e.g., a round tabletop of minimum radius) where we can put the separate parts that need to be assembled into the final product (given in some prespecified, translation-independent layout), and such that the entire assembly process can take place within this space.
The problem is more involved since we may need to store intermediate subassemblies, such that we can bring together some subassemblies into their relative placement in the final product, while avoiding other subassemblies, all within the same space.
	\newpage
	\appendix

\section{Interactive Examples}
In this \ssection, we present several examples of our geometric claims and observations, in an intuitive and interactive way, using the online and free tool --- GeoGebra~\cite{geogebra6}.
GeoGebra helped us to gather observations and to produce most of the figures in this \paper.
We describe three applets and we describe how to use them.
\begin{description}
	\item[A random input for the reconfiguration problem.]\leavevmode\\
	\url{https://www.geogebra.org/m/e5dn99jq}
	\begin{itemize}
		\item The start discs (in blue) are stationary.
		\item Move the vector $\vec{v}$ to translate the target discs (in orange).
		\item Change the number of disc by changing the slider denoted by \emph{n}.
		\item Change the size of the rectangle in which the configurations are created by changing the slider denoted by \emph{Size}; this also refreshes the randomness (note that sometimes discs of the same configuration can overlap --- refresh the randomness if it happens).
		\item Check the \emph{Labeled} checkbox to label the discs so the input will correspond to the labeled version.
		\item Check the \emph{AABR} or \emph{SED} checkboxes to see the criteria we aim to optimize the space for.
		\item For the labeled version, check the \emph{Hippodromes} or \emph{Vippodromes} checkboxes to see the hippodromes and vippodromes (only for the pairs $A$ and $B$) of both types defined in \Subsection~\ref{subsec:labeled_analysis}.
	\end{itemize}
	
	\ifthesis\newpage\fi
	\item[The construction of the vippodrome $\V^{(1)}_{AB}$ (same as in Figure~\ref{fig:vippo_construction}).]\leavevmode\\
	\url{https://www.geogebra.org/m/ckbmq4xx}
	\begin{itemize}
		\item Move around the centers of the discs to see how each affects the vippodrome structure.
		\item Check the \emph{Inner Tangents} checkbox to see the two inner tangents $\tau^-(B^S,A^S)$ and $\tau^+(B^S,A^S)$.
		\item Check the \emph{Vippodrome} checkbox to see the vippodrome $\V^{(1)}_{AB}$ as a dashed curve.
		\item Check the \emph{Wedge} checkbox to see the wedge $\W(A^S,B^S)$.
	\end{itemize}

	\item[The diametral disc presented in Case~(iii) of \Subsection~\ref{subsec:labeled_sed}.]\leavevmode\\
	\url{https://www.geogebra.org/m/gv6hnajw}
	\begin{itemize}
		\item The start configuration (in blue) are stationary.
		\item Move the vector $\vec{v}$ to translate the target configuration (in orange); observe how the function $\varphi(\vec{v})$ changes accordingly.
		\item Check the \emph{SED} checkbox to see the diametral disc $D_{s_0,t_0}(\vec{v})$.
		\item Check the \emph{K} checkbox to see $K_{s_0,t_0}$; observe that the disc encloses all the points if and only if $\vec{v}$ is inside the region $K_{s_0,t_0}$.
		\item Check the \emph{Vippodrome} checkbox to see an arbitrary vippodrome.
		The dark green intervals along the vippodrome boundary depict the edges of $Q$; recall Lemma~\ref{lemma:dq}.
	\end{itemize}
\end{description}

\else
	
\title{\bf Space-Aware Reconfiguration\thanks{Work by D.H. and G.M.L. has been supported in part by the Israel Science Foundation Grants~825/15,~1736/19, by the Blavatnik Computer Science Research Fund, and by a grant from Yandex.
Work by M.v.K. has been supported by the the Netherlands Organisation for Scientific Research under Grant~612.001.651.
Work by M.S. and G.M.L. has been supported by Grant~260/18 from the Israel Science Foundation.
Work by M.S. has also been supported by Grant G-1367-407.6/2016 from the German-Israeli Foundation for Scientific Research and Development, and by the Blavatnik Computer Science Research Fund.
An abridged version of this paper has been accepted for publication in the 14th International Workshop on the Algorithmic Foundations of Robotics, 2020~\cite{wafr}.
}}
\author[1]{Dan Halperin}
\author[2]{Marc van Kreveld}
\author[1]{Golan Miglioli-Levy}
\author[1]{Micha Sharir}
\affil[1]{%
	The Blavatnik School of Computer Science, Tel Aviv University, Israel\\
\email{\{danha@tau.ac.il, golanlevy@mail.tau.ac.il, michas@tau.ac.il\}}}
\affil[2]{%
Dept. of Information and Computing Sciences, Utrecht University, the Netherlands\\
\email{\{M.J.vanKreveld@uu.nl\}}
}
\date{}                     
\renewcommand\Affilfont{\itshape\small}

\maketitle

\fi

\bibliographystyle{abbrv}
\bibliography{sar}

\ifthesis
	\checkoddpage\ifoddpage	
	\else
		\pagespace
		\pagespace
	\fi
	\includepdf[pages=-]{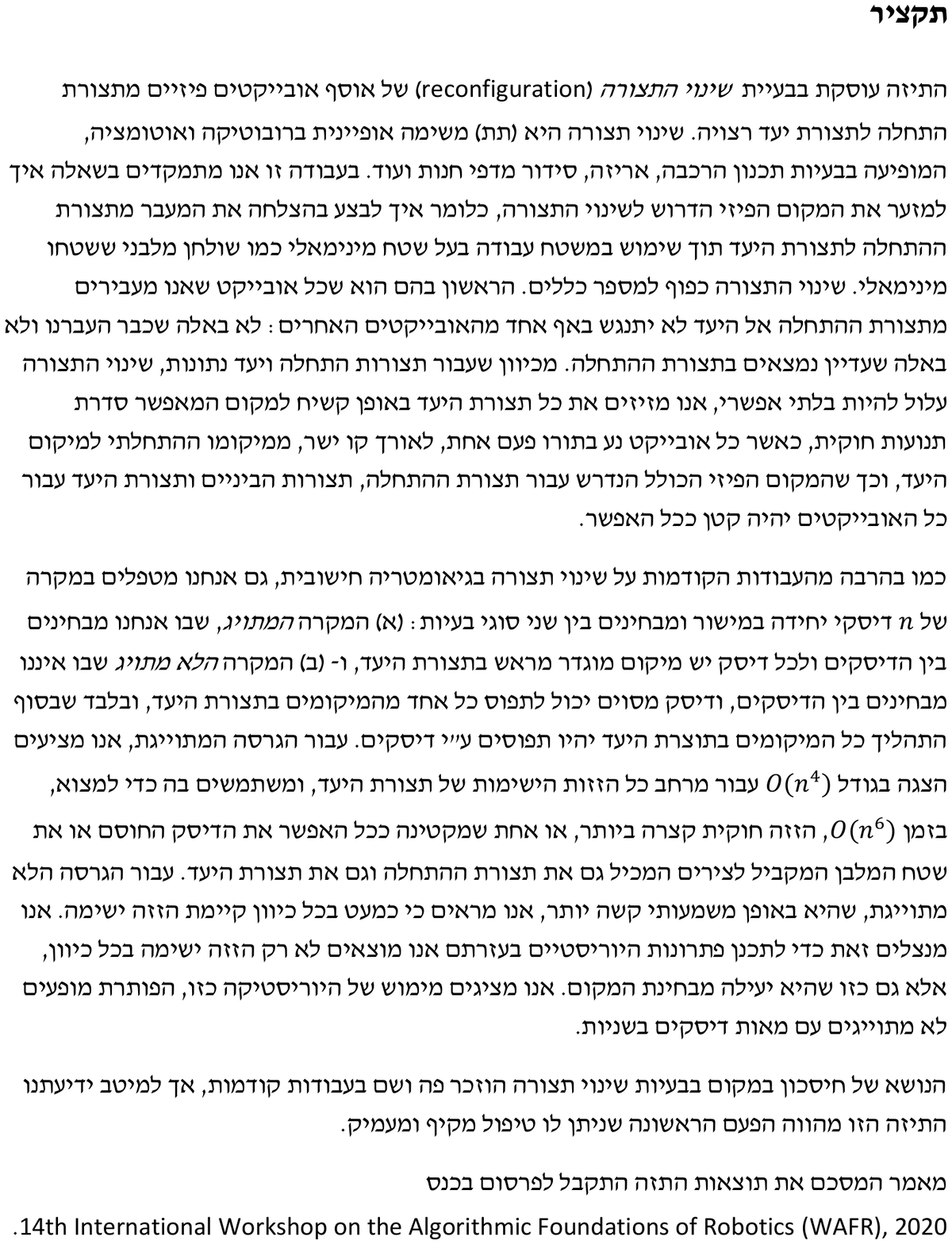}
\fi

\end{document}